\definecolor{myblue}{RGB}{0,2,127} 
\definecolor{mygreen}{RGB}{30,150,34} 
\definecolor{myred}{rgb}{220,0,0}
\definecolor{blue}{RGB}{68,118,170} 
\definecolor{green}{RGB}{34,136,51} 
\definecolor{cyan}{RGB}{102,204,238}
\definecolor{coralred}{RGB}{238,102,119}
\definecolor{yellow}{RGB}{204,187,68}
\definecolor{purple}{RGB}{170,51,119}
\definecolor{grey}{RGB}{187,187,187}
\colorlet{loopcolor}{coralred}
\colorlet{treecolor}{purple}
\colorlet{alterpathcolor}{purple}
\algrenewcommand\algorithmicrequire{\textbf{Input:}}
\algrenewcommand\algorithmicensure{\textbf{Output:}}
\newenvironment{varalgorithm}[1]
  {\algorithm}
  {\endalgorithm}
\newcommand{\mynearrow}{\scalebox{0.5}{$\nearrow$}}
\newcommand{\mysearrow}{\scalebox{0.5}{$\searrow$}}
\newcommand{\algmargin}{\the\ALG@thistlm}
\algnewcommand{\parState}[1]{\State%
    \parbox[t]{\dimexpr\linewidth-\algmargin}{\strut\hangindent=\algorithmicindent \hangafter=1 #1\strut}}
\theoremstyle{plain}
\newtheorem{theorem}{Theorem}
\newtheorem{lemma}[theorem]{Lemma}
\newtheorem{corollary}[theorem]{Corollary}
\newtheorem{claim}[theorem]{Claim}
\newtheorem{observation}[theorem]{Observation}
\theoremstyle{definition}
\newtheorem{definition}[theorem]{Definition}
\newenvironment{claimproof}[1]{\par\noindent\textit{Proof:}\space#1}{\hfill $\lhd$}
\def\DISP{\textsc{Shortest Two Disjoint Paths}}
\def\Q{\widetilde{Q}}
\def\R{\widetilde{R}}
\def\T{\mathcal{T}}
\def\F{\mathcal{F}}
\def\NP{\mathsf{NP}}
\def\Amend{\texttt{Amend}}
\author{Ildik\'o Schlotter$^{1,2}$}
\title{Shortest Two Disjoint Paths in Conservative Graphs}
\date{
\normalsize
    $^1$Centre for Economic and Regional Studies, Hungary \\%
    $^2$Budapest University of Technology and Economics, Hungary \\[2ex]%
    }
\begin{document}

\maketitle


\begin{abstract}
We consider the following problem that we call the \textsc{Shortest Two Disjoint Paths} problem: 
given an undirected graph~$G=(V,E)$ with edge weights $w:E\rightarrow \mathbb{R}$, two terminals $s$ and~$t$ in~$G$, 
find two internally vertex-disjoint paths between $s$ and $t$ with minimum total weight.
As shown recently by Schlotter and Seb\H{o} (2022), this problem becomes $\NP$-hard if edges can have negative weights, 
even if the weight function is conservative, i.e., there are no cycles in~$G$ with negative total weight. 
We propose a polynomial-time algorithm that solves the \textsc{Shortest Two Disjoint Paths} problem for conservative weights 
in the case when the negative-weight edges form a constant number of trees in~$G$. 
\end{abstract}

\section{Introduction}

Finding disjoint paths between given terminals is a fundamental problem in algorithmic graph theory and combinatorial optimization.
Besides its theoretical importance, it is also motivated by numerous applications in transportation, VLSI design, and network routing.  
In the \textsc{Disjoint Paths} problem, we are given $k$ terminal pairs $(s_i,t_i)$ for $i \in \{1,\dots,k\}$ in an undirected graph~$G$, 
and the task is to find pairwise vertex-disjoint paths $P_1,\dots, P_k$ 
so that $P_i$ connects $s_i$ with $t_i$ for each $i \in \{1,\dots,k\}$. 
This problem was shown to be $\NP$-hard by Karp~\cite{Karp75} when $k$ is part of the input, and remains $\NP$-hard even on planar graphs~\cite{Lynch1975}.
Robertson and Seymour~\cite{RS1995} proved that there exists an $f(k) n^3$ algorithm for \textsc{Disjoint Paths} with $k$ terminal pairs, 
where $n$ is the number of vertices in~$G$ and $f$ some computable function; this celebrated result is among the most important achievements of graph minor theory.
In the \textsc{Shortest Disjoint Paths} problem we additionally require that $P_1, \dots, P_k$ have minimum total length (in terms of the number of edges). 
For fixed $k$, the complexity of this problem is one of the  most important open questions in the area. 
Even the case for $k=2$ had been open for a long time, until Bj\"orklund and Husfeldt~\cite{BH2019} gave a randomized polynomial-time algorithm for it in 2019. 
For directed graphs the problem becomes much harder: the \textsc{Directed Disjoint Paths} problem is $\NP$-hard already for $k=2$.
The \textsc{Disjoint Paths} problem and its variants have also received considerable attention 
when restricted to planar graphs~\cite{DSS1992,VS2011,KS2010,AKKLST17,CMPP13,Schrijver1994,LMPSZ20}.

The variant of \textsc{Disjoint Paths} when $s_1=\dots =s_k=s$ and $t_1=\dots =t_k=t$ is considerably easier, since one can find $k$ pairwise 
(openly vertex- or edge-) disjoint paths between $s$ and $t$ using a max-flow computation. 
Applying standard techniques for computing a minimum-cost flow (see e.g.~\cite{schrijver-book}), 
one can even find $k$ pairwise disjoint paths between $s$ and $t$ with minimum total weight, 
given non-negative weights on the edges. 
Notice that if negative weights are allowed, then flow techniques break down for undirected graphs: 
in order to construct an appropriate flow network based on our undirected graph~$G$, 
the standard technique is to direct each edge of $G$ in both directions;
however, if edges can have negative weight, then this operation creates negative cycles consisting of two arcs, an obstacle for computing a minimum-cost flow.
Recently, Schlotter and Seb\H{o}~\cite{SS2023} have shown that this issue is a manifestation of a complexity barrier: 
finding two openly disjoint paths with minimum total weight between two vertices in an undirected edge-weighted graph is $\NP$-hard, even if weights are \emph{conservative} (i.e., no cycle has negative total weight) and each edge has weight in $\{-1,1\}$.\footnote{In fact, Schlotter and Seb\H{o} use an equivalent formulation of the problem where, instead of finding two openly disjoint paths between $s$ and $t$, the task  is to find two vertex-disjoint paths between $\{s_1,s_2\}$ and $\{t_1,t_2\}$ for four vertices $s_1,s_2,t_1,t_2 \in V$.} 
Note that negative edge weights occur in network problems due to various reasons:
for example, they might arise as a result of some reduction 
(e.g., deciding the feasibility of certain scheduling problems with deadlines translates into finding negative-weight cycles), 
or as a result of data that is represented on a logarithmic scale.
We remark that the \textsc{Single-Source Shortest Paths} problem is the subject of active research for the case when negative edges are allowed; 
see Bernstein et al.~\cite{BNWN22} for an overview of the area and their state-of-the-art algorithm running in near-linear time on directed graphs.

\medskip
\noindent
{\bf Our contribution.}
We consider the following problem
which concerns finding paths between two fixed terminals (as opposed to the classic \textsc{Shortest Disjoint Paths} problem): 
\begin{center}
\fbox{ 
\parbox{13.6cm}{
\begin{tabular}{l}\DISP{}:  \end{tabular} \\
\begin{tabular}{p{1cm}p{11.5cm}}
Input: & An undirected graph $G=(V,E)$, a weight function~$w\colon E \to \mathbb{R}$ that is conservative on~$G$, and two vertices $s$ and $t$ in~$G$. \\
Task: & Find two paths~$P_1$ and $P_2$ between $s$ and $t$ with $V(P_1) \cap V(P_2)=\{s,t\}$ that minimizes $w(P_1)+w(P_2)$.
\end{tabular}
}}
\end{center}
A \emph{solution} for an instance $(G,w,s,t)$ of \DISP{} is a pair of $(s,t)$-paths that are openly disjoint, i.e., do not share vertices other than their endpoints.

From the $\NP$-hardness proof for \DISP{} by Schlotter and Seb\H{o}~\cite{SS2023} it follows that the problem remains $\NP$-hard even if the set of negative-weight edges forms a perfect matching. 
Motivated by this intractability, we focus on the ``opposite'' case when 
the subgraph of~$G$ spanned by the set $E^-=\{e \in E:w(e)<0\}$ of negative-weight edges, denoted by $G[E^-]$, has only few connected components.\footnote{See 
Section~\ref{sec:prelim} for the precise definition of a subgraph spanned by an edge set.}
Note that since $w$ is conservative on~$G$, the graph $G[E^-]$ is acyclic.
Hence, if $c$ denotes the number of connected components in~$G[E^-]$, then 
$G[E^-]$ in fact consists of~$c$ trees.

We can think of our assumption that $c$ is constant as a compromise for allowing negative-weight edges but requiring that they be confined to a small part of the graph. For a motivation, consider a network where negative-weight edges arise as some rare anomaly. Such an anomaly may occur when, in a certain part of a computer network, some information can be collected while traversing the given edge. If such information concerns, e.g., the detection of (possibly) faulty nodes or edges in the network, then it is not unreasonable to assume that these faults are concentrated to a certain part of the network, due to underlying physical causes that are responsible for the fault.

Ideally, one would aim for an algorithm that is fixed-parameter tractable when parameterized by~$c$;
however, already the case $c=1$ turns out to be challenging.
We prove the following result, which can be thought of as a first step towards an FPT algorithm: 
\begin{theorem}
\label{thm:DISP-main}
For each constant~$c \in \mathbb{N}$, \DISP{} can be solved in polynomial time on instances where the set of negative edges spans $c$ trees in~$G$.
\end{theorem}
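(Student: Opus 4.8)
\noindent\textit{Proof plan.}
I would begin with a reformulation: a pair of openly disjoint $(s,t)$-paths is exactly a cycle through $s$ and $t$ (degenerate cases, such as when $st\in E$, are handled directly), so \DISP{} amounts to finding a minimum-weight cycle $C$ passing through both $s$ and $t$. Two features of such an instance will drive the algorithm. First, since $G[E^-]$ is precisely the disjoint union of the trees $T_1,\dots,T_c$, every edge outside $\bigcup_i E(T_i)$ is non-negative; hence on the ``outside'' part of the graph the classical machinery applies, and two openly vertex-disjoint paths of minimum total weight between two prescribed vertices---or between two prescribed pairs of vertices---can be found by a minimum-cost flow computation of value two with unit vertex capacities. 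Second, for each $i$ the intersection $C\cap T_i$ is simultaneously a subgraph of the cycle $C$ and of the tree $T_i$, hence it is a \emph{linear forest} of $T_i$: a vertex-disjoint family of paths $Q_{i,1},\dots,Q_{i,p_i}$, each equal to the unique $T_i$-path between its two \emph{ports}. Thus $C$ decomposes cyclically into tree-subpaths $Q_{i,k}$, each inside a single tree, alternating with non-negative ``detour'' paths $D_\ell$ that join ports and otherwise avoid $\bigcup_i T_i$, and $w(C)=\sum_{i,k}w(Q_{i,k})+\sum_\ell w(D_\ell)$ where only the first sum can be negative.

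The obstruction is that $C$ may enter and leave a single tree $T_i$ a non-constant number of times ($p_i$ can be as large as $\Theta(n)$, for instance when the optimum is forced to pick up many scattered negative edges), so we cannot simply guess the interface between $C$ and each $T_i$. My plan is to process each tree by a dynamic program and then glue the trees together through the outside by a flow computation. Contract each $T_i$ to a single vertex $\tau_i$, giving a graph $\overline{G}$ with only non-negative weights in which $C$ becomes a closed walk visiting $s$ and $t$ once and each $\tau_i$ exactly $p_i$ times; each visit to $\tau_i$ enters and leaves along two $\overline{G}$-edges that ``remember'' their endpoints $a,b\in V(T_i)$, and contributes tree-cost $w\bigl(P_{T_i}(a,b)\bigr)$, subject to the constraint that the selected $T_i$-paths form a linear forest---equivalently, that their edge set is a linear subforest $S_i\subseteq E(T_i)$ whose components determine the serviced port-pairs and whose weight $w(S_i)\le 0$ is the total tree-cost. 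For each tree I would compute, by a bottom-up recursion over $T_i$ rooted arbitrarily, a polynomially bounded catalogue of ``admissible boundary behaviours'': for a subtree together with the parent edge above it, the table records the combinatorial \emph{type} of how the linear subforest meets that subtree and how its components must be reconnected outside---distinguishing a component handled entirely \emph{privately} within the subtree from a \emph{shared} one that is still open and to be completed above, or that carries $s$ or $t$---together with the minimum value of $w(S_i)$ realizing that type; child tables for the vertices in $\Child(v)$ are fused into the table at $v$ by an \Amend-type merge. The number of genuinely distinct types should be bounded by a function of $c$, the only ``global'' anchors being $s$, $t$, and the $c$ roots, even though the number of components is unbounded. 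Finally, for each combined choice (one entry from the catalogue of each $T_i$) I would build an auxiliary graph on $\overline{G}$ in which $\tau_i$ is replaced by a small gadget realizing exactly the serviced port-pairs of that entry at the recorded tree-cost---the only place negative weight appears---while all other edges keep their original non-negative weights, and compute the cheapest cycle through $s$ and $t$ there by the flow subroutine; the output is the best over all choices, which is polynomially many for fixed $c$.

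The bulk of the work, and the main obstacle, is the structural theory underpinning the dynamic program. Two claims must be proved. (i) \emph{Finiteness of the catalogue}: although the linear subforest $S_i$ can itself be arbitrarily complex, only a bounded-in-$c$ amount of information about how its components reach out to $s$, $t$ and the other trees is relevant, so that the recursion over children does not blow up and the catalogue stays of polynomial size. (ii) \emph{Lossless decoupling}: optimizing the ``inside the trees'' part and the ``outside routing'' part separately, as the flow step does, loses nothing; this requires an exchange/rerouting argument showing that the detour paths $D_\ell$ of an optimal solution can be chosen pairwise vertex-disjoint and of minimum total non-negative weight at the same time, and that conversely any such outside routing can be combined with the chosen tree behaviours. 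Getting the type-catalogue genuinely bounded by a function of $c$, and carrying out the rerouting argument with several interacting trees present, is the heart of the proof; the rest is bookkeeping and a polynomial number of flow computations.
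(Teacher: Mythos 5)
Your plan has a genuine gap at claims (i) and (ii), and I do not see how they can be repaired in the form you state. The concrete problem is this: once you contract each tree $T_i$ to a vertex $\tau_i$, the optimal cycle $C$ may revisit $\tau_i$ up to $\Theta(n)$ times, so the collection of port pairs $(a_k,b_k)$ with $a_k,b_k\in V(T_i)$ that it services has unbounded size. A catalogue entry that is to be ``realized by a small gadget'' so the outside routing reduces to a value-two flow must abstract away the locations of all but $O(1)$ of those ports. But their locations matter: the non-negative detours attach at exactly those ports, and two linear subforests of the same abstract ``type'' but with different ports can have wildly different cheapest detour costs. Either the catalogue retains the full port list (then it is exponential), or it does not (then the gluing step optimizes the wrong quantity). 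The anchors $s$, $t$ and the $c$ roots are not enough; nothing in the plan bounds the number of ports, or the amount of information about them that the outside routing must see, by a function of $c$.

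The gluing step is also not a flow of value two even if the catalogue were available. After attaching gadgets, you need a single simple cycle through $s$, $t$, and each of the $\sum_i p_i$ gadget visits exactly once, with the inter-gadget detours vertex-disjoint; with $\sum_i p_i$ unbounded this is a cycle-through-$k$-prescribed-elements problem, which is of disjoint-paths/TSP type rather than min-cost-flow type, and is not addressed by your ``flow subroutine.'' The cycle reformulation and the ``tree subpaths alternating with non-negative detours'' picture are both sound, but by themselves they do not make the interface with any single tree small.

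The paper avoids this trap by never trying to abstract a tree's interface as a whole. It fixes one tree $T$ at which the two paths are in contact, shows (Lemma~\ref{lem:partitioned-comps}) that the remaining negative trees split cleanly into those visited before $T$, during $T$, and after $T$, recurses on strictly fewer trees for the before/after parts, and, crucially, for the during part proves that an optimal pair of permissively disjoint $(\{a_1,a_2\},\{b_1,b_2\})$-paths is $X$-monotone and plain (Lemma~\ref{lem:plain+X-monotone}) along the spine $X=T[a_1,b_1]\cap T[a_2,b_2]$. This forces a linear order on the visits to the subtrees hanging off $X$, which is what licenses a dynamic program whose state is just a pair of current endpoints plus the subset of other trees used so far, giving $O(n^2 2^c)$ states rather than a per-tree catalogue. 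That linear structure is the missing replacement for your claim (i). Your claim (ii), the exchange argument, does appear in the paper, but only in the far more constrained form of Lemma~\ref{lem:uncrossing}, which stitches together two pairs of paths that meet at exactly two prescribed vertices of a single tree; the general ``decouple inside from outside across all trees simultaneously'' statement you would need is much stronger and is not established.
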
 

Our algorithm first applies standard flow techniques to find minimum-weight solutions among those that have a simple structure
in the sense that there is no negative tree in~$G[E^-]$ used by both paths.
To deal with more complex solutions where there is at least one tree~$T$ in~$G[E^-]$ used by both paths, 
we use recursion to find two openly disjoint paths from~$s$ to~$T$, and from~$T$ to~$t$;
to deal with the subpaths of the solution that heavily use negative edges from~$T$, 
we apply an intricate dynamic programming method
that is based on significant insight into the structural properties of such solutions.

\medskip
\noindent
{\bf Organization.}
We give all necessary definitions in Section~\ref{sec:prelim}. 
In Section~\ref{sec:tree-init} we make initial observations about optimal solutions for an instance~$(G,w,s,t)$ of \DISP{},
and we also present a lemma of key importance that will enable us to create solutions by 
combining partial solutions that are easier to find (Lemma~\ref{lem:uncrossing}).
We present the algorithm proving our main result, Theorem~\ref{thm:DISP-main}, in Section~\ref{sec:negtree}. 
In Section~\ref{sec:mainAlg} we give a general description of our algorithm, 
and explain which types of solutions can be found using flow-based techniques. 
We proceed in Section~\ref{sec:properties} by establishing structural observation that we need to exploit in 
order to find those types of solutions where more advanced techniques are necessary. 
Section~\ref{sec:partsol} contains our dynamic programming method for finding partial solutions which, 
together with Lemma~\ref{lem:uncrossing}, form the heart of our algorithm.
We assemble the proof of Theorem~\ref{thm:DISP-main} in Section~\ref{sec:permdis} using our findings in the previous sections, 
and finally pose some questions for further research in Section~\ref{sec:conclusion}.


\section{Notation}
\label{sec:prelim}

For a positive integer $\ell$, we use $[\ell]\coloneqq\{1,2,\dots,\ell\}$.
For two subsets~$X$ and~$Y$ of some universe, let $X \Delta Y=(X \setminus Y) \cup (Y \setminus X)$ denote their symmetric difference.

Let a graph~$G$ be a pair $(V,E)$ where $V$ and $E$ are the set of vertices and edges, respectively.
For two vertices $u$ and $v$ in~$V$, an edge connecting $u$ and $v$ is denoted by $uv$ or $vu$.

For a set of $X$ of vertices (or edges), let $G-X$ denote the subgraph of~$G$ obtained by deleting the vertices (or edges, respectively) of~$X$; 
if $X=\{x\}$ then we may simply write $G-x$ instead of~$G - \{x\}$.
Given a set $U \subseteq V$ of vertices in~$G$, the subgraph of $G$ \emph{induced by}~$U$ is the graph $G-(V \setminus U)$.
Given a set $F \subseteq E$ of edges in~$G$, we denote by $V(F)$ the vertices incident to some edge of~$F$.
The subgraph of $G$ \emph{spanned by}~$F$ is the graph $(V(F),F)$; we denote this subgraph as $G[F]$.

A \emph{walk}~$W$ in~$G$ is a series $e_1, e_2, \dots, e_\ell$ of edges in~$G$ for which there exist 
vertices $v_0, v_1, \dots, v_\ell$ in~$G$ such that $e_i= v_{i-1} v_i$ for each $i \in [\ell]$;
note that both vertices and edges may appear repeatedly on a walk. 
We denote by $V(W)$ the set of vertices \emph{contained by} or \emph{appearing on}~$W$, that is, $V(W)=\{v_0, v_1, \dots, v_\ell\}$. 
The \emph{endpoints} of $W$ are $v_0$ and $v_\ell$, or in other words, it is a  \emph{$(v_0,v_\ell)$-walk}, while all vertices on~$W$ that are not endpoints are \emph{inner vertices}.
If $v_0=v_\ell$, then we say that $W$ is a \emph{closed walk}.

A \emph{path} is a walk on which no vertex appears more than once.
By a slight abuse of notation, we will usually treat a path as a \emph{set} $\{e_1,e_2,\dots, e_\ell\}$ of edges for which there exist distinct vertices $v_0, v_1, \dots, v_\ell$ in~$G$ such that $e_i= v_{i-1} v_i$ for each $i \in [\ell]$.
For any $i$ and $j$ with~$0 \leq i \leq j \leq \ell$ we will write $P[v_i,v_j]$ for the \emph{subpath} of $P$ between~$v_i$ and~$v_j$, consisting of edges $e_{i+1}, \dots, e_j$. 
Note that since we associate no direction with~$P$, we have $P[v_i,v_j]=P[v_j,v_i]$.
Given two vertices $s$ and~$t$, an $(s,t)$-path is a path whose endpoints are~$s$ and~$t$.
Similarly, for two subsets~$S$ and~$T$ of vertices, 
an $(S,T)$-path is a path with one endpoint in~$S$ and the other endpoint in~$T$. 

We say that two paths are \emph{vertex-} or \emph{edge-disjoint}, if they do not share a common vertex or edge, respectively. Two paths are \emph{openly disjoint}, if they share no common vertices apart from possibly their endpoints. 
Given vertices $s_1,s_2,t_1,$ and $t_2$, we say that two $(\{s_1,s_2\},\{t_1,t_2\})$-paths are \emph{permissively disjoint}, if a vertex $v$ can only appear on both paths if either $v=s_1=s_2$ or $v=t_1=t_2$. 
Two paths properly intersect, if they share at least one edge, but neither is the subpath of the other.

A \emph{cycle} in~$G$ is a set $\{e_1, e_2, \dots, e_\ell\}$ of distinct edges in~$G$ such that $e_1, e_2, \dots, e_{\ell-1}$ form a path in~$G - e_\ell$ whose endpoints are connected by $e_\ell$. 
A set $T \subseteq E$ of edges in~$G$ is \emph{connected}, if for every pair of edges $e$ and $e'$ in~$T$, there is a path contained in~$T$ containing both $e$ and $e'$.
If $T$ is connected and \emph{acyclic}, i.e., contains no cycle, then $T$ is a \emph{tree} in~$G$. 
Given two vertices $a$ and $b$ in a tree~$T$, we denote by $T[a,b]$ the unique path contained in~$T$ whose endpoints are $a$ and $b$. 
For an edge $uv \in T$ and a path~$P$ within~$T$ such that $uv \notin P$,
we say that $v$ \emph{is closer to~$P$} in~$T$ than~$u$, if $v \in V(T[u,p])$ for some vertex $p \in V(P)$.

Given a weight function~$w\colon E \to \mathbb{R}$ on the edge set of~$G$, 
we define the \emph{weight} of any edge set $F \subseteq E$ as $w(F)=\sum_{e \in F} w(e)$. 
We extend this notion for any pair $\mathcal{F}=(F_1,F_2)$ of edge sets by letting $w(\mathcal{F})=w(F_1)+w(F_2)$.
The restriction of~$w$ to an edge set $F \subseteq E$,
i.e., the function whose domain is $F$ and has value~$w(f)$ on each $f \in F$, 
is denoted by~$w_{|F}$.
We say that $w$ (or, to make the dependency on~$G$ explicit, the weighted graph~$(G,w)$) is \emph{conservative}, if no cycle in~$G$ has negative total weight. 

\section{Structural Observations}
\label{sec:tree-init}

Let $G=(V,E)$ be an undirected graph with a conservative weight function~$w:E \rightarrow \mathbb{R}$. 
Let $E^-=\{e \in E:w(e)<0\}$ denote the set of negative edges, and $\T$ the set of negative trees they form. More precisely, let~$\T$ be the set of connected components in the subgraph $G[E^-]$; the acyclicity of each $T \in \T$ follows from the conservativeness of~$w$.
For any subset~$\T'$ of~$\T$, we use the notation $E(\T')=\bigcup_{T \in \T'} E(T)$ and $V(\T')=\bigcup_{T \in \T'} V(T)$.

In Section~\ref{sec:init_weight} we gather a few useful properties of conservative weight functions. 
In Section~\ref{sec:init_paths} we collect observations on how an optimal solution can use different trees in~$\T$. 
We close the section with a lemma of key importance in Section~\ref{sec:init_uncrossing} that enables us to compose solutions by 
combining two path pairs without violating our requirement of disjointness.

\subsection{Implications of Conservative Weights}
\label{sec:init_weight}

The next two lemmas establish implications of the conservativeness of our weight function.
Lemma~\ref{lem:closed-walk} concerns closed walks, while Lemma~\ref{lem:T-min-pathlength} 
considers paths running between two vertices on some negative tree in~$\T$.
These lemmas will be useful in proofs where a given hypothetical solution is ``edited'' -- by removing certain subpaths from it and replacing them with paths within some negative tree -- in order to obtain a specific form without increasing its weight.

\begin{lemma}
\label{lem:closed-walk}
If $W$ is a closed walk that does not contain any edge with negative weight more than once, then $w(W) \geq 0$.
\end{lemma}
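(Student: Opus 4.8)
The plan is to induct on the number of edges of $W$ (counted with multiplicity), decomposing $W$ into a cycle plus a shorter closed walk and checking that the hypothesis — no negative edge used more than once — is inherited by the shorter walk.

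First I would handle the degenerate case: if $W$ has no repeated vertex among $v_0,\dots,v_{\ell-1}$, then $W$ is itself a cycle (or is empty/trivial), and conservativeness of $w$ gives $w(W)\ge 0$ directly. Otherwise, some vertex repeats on $W$. I would pick a repetition that is ``innermost'': choose indices $i<j$ with $v_i=v_j$ such that $j-i$ is minimal. Then the closed subwalk $C=e_{i+1},\dots,e_j$ visits the vertex $v_i=v_j$ at its ends and no other vertex twice (by minimality of $j-i$), so $C$ is a cycle; hence $w(C)\ge 0$ by conservativeness. Removing the edges of $C$ from $W$ and splicing the two ends together yields a closed walk $W'=e_1,\dots,e_i,e_{j+1},\dots,e_\ell$ on strictly fewer edges, with $w(W)=w(C)+w(W')$.

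The key point is that $W'$ still satisfies the hypothesis: every edge of $W'$ already appeared on $W$, and $W$ used no negative edge more than once, so in particular $W'$ uses no negative edge more than once. (We do not even need to worry about whether $C$ contains a negative edge; the point is only that deleting edges cannot create a new repetition.) By the induction hypothesis $w(W')\ge 0$, and therefore $w(W)=w(C)+w(W')\ge 0$, completing the induction.

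I do not expect a serious obstacle here; the only thing to be a little careful about is the bookkeeping when $C$ is the whole walk (then $W'$ is empty, with $w(W')=0$) and the observation that the minimal-length repeated-vertex subwalk is genuinely a cycle (no interior vertex repeats, and the two endpoints coincide but are distinct edge-endpoints of a nonempty closed walk, so the edge set is a cycle in the sense defined above — note $C$ has at least one edge, and if it had exactly the multiset of a single edge traversed twice that would force a repeated negative or zero-weight edge; a repeated positive edge forms no cycle, but two copies of the same edge cannot both lie in an ``innermost'' closed subwalk of length $2$ unless the intermediate vertex equals nothing shorter, which is fine since then $C$ is a length-$2$ closed walk $e,e$ with $w(C)=2w(e)$, and $w(e)\ge 0$ because $w$ is conservative — a single negative edge traversed back and forth is a negative closed walk but not a cycle, yet here such an $e$ cannot be negative since $W$ forbids repeating negative edges). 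This last case is exactly where the non-repetition hypothesis on negative edges is used, so I would state it explicitly rather than sweeping it into ``$C$ is a cycle''.
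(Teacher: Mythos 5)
Your proof is correct, but it takes a genuinely different route from the paper's. The paper argues in one shot: let $Z$ be the set of edges appearing an odd number of times on $W$; since every edge used at least twice on $W$ has non-negative weight, $w(Z)\le w(W)$; every vertex has even degree in $(V,Z)$, so $(V,Z)$ decomposes into edge-disjoint cycles, whence $w(Z)\ge 0$ by conservativeness, and the two inequalities combine. You instead induct on the walk length, extracting an innermost closed subwalk at a repeated vertex, peeling it off, and observing that deletion cannot create a new repetition of a negative edge. The paper's parity trick silently absorbs the degenerate length-2 ``$e,e$'' back-and-forth (such an $e$ simply has even multiplicity and is excluded from $Z$), whereas you must treat it explicitly since it is not a cycle under the paper's definition of cycle as a set of \emph{distinct} edges; in exchange, your argument makes the role of the no-repeated-negative-edge hypothesis more local and tangible. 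One cleanup in your write-up: the parenthetical on the degenerate case first asserts $w(e)\ge 0$ ``because $w$ is conservative,'' which is not a valid reason (a single edge traversed back and forth is a closed walk but not a cycle), and then gives the correct reason — that $e$ appears twice on $W$ and therefore cannot be negative by hypothesis. You should delete the incorrect clause and keep only the correct one.
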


\begin{proof}
Let $Z$ be the set of edges that includes an edge $e$ of the graph if and only if $e$ appears an odd number of times on~$W$. Note that since any edge used at least twice in~$W$ has non-negative weight, we know $w(Z) \leq w(W)$. Moreover, since $W$ is a walk, each vertex $v \in V$ has an even degree in the graph $(V,Z)$. 
Therefore, $(V,Z)$ is the union of edge-disjoint cycles, and the conservativeness of~$w$ implies $w(Z) \geq 0$.
\end{proof}

\begin{lemma}
\label{lem:T-min-pathlength}
Let $x,y,x',y'$ be four distinct vertices on a tree~$T$ in~$\T$. 
\begin{description}
    \item[(1)] If $Q$ is an $(x,y)$-walk in~$G$ that does not use any edge with negative weight more than once, then $w(Q) \geq w(T[x,y])$. \\
    Moreover, if $T[x,y] \not \subseteq Q$, then the inequality is strict. 
    \item[(2)] If $Q$ is an $(x,y)$-path, $Q'$ is an $(x',y')$-path, and $Q$ and $Q'$ are vertex-disjoint,  then \\
    $w(Q)+w(Q')\geq w(T[x,y] \Delta T[x',y'])$. \\
    Moreover, if $T[x,y]$ and $T[x',y']$ properly intersect, then the inequality is strict.
\end{description}
\end{lemma}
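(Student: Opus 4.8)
The plan is to derive both parts from Lemma~\ref{lem:closed-walk} by closing up the walk(s) with suitable tree-paths, so that the resulting closed walk uses no negative edge more than once; the content of the lemma is that the ``edit'' of replacing the walk by the corresponding tree-path never costs anything.

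For part~(1), first I would form the closed walk $W$ obtained by concatenating $Q$ with the tree-path $T[x,y]$. Since $Q$ does not repeat any negative edge, and the only negative edges that $W$ could repeat are those lying in both $Q$ and $T[x,y]$ (each such edge appears once from $Q$ and once from $T[x,y]$), we do not immediately get that $W$ repeats no negative edge. To fix this I would instead look at the symmetric difference: let $W$ traverse $Q$ and then $T[x,y]$, and observe that any edge appearing twice on $W$ appears once in $Q$ and once in $T[x,y]$. The negative edges among these are exactly $Q \cap T[x,y] \cap E^-$; but $T[x,y] \subseteq E(T) \subseteq E^-$, so they are exactly $Q \cap T[x,y]$. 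Here I would invoke the argument in the proof of Lemma~\ref{lem:closed-walk} directly rather than the statement: passing to the set $Z$ of edges used an odd number of times, we have $w(Z) \le w(Q) + w(T[x,y]) - 2w(Q \cap T[x,y])$ because each edge of $Q \cap T[x,y]$ is removed and it has negative weight; and $(V,Z)$ is a disjoint union of cycles, so $w(Z) \ge 0$ by conservativeness. Rearranging gives $w(Q) \ge w(T[x,y]) + 2\bigl(w(Q\cap T[x,y]) - w(T[x,y])\bigr) = w(T[x,y]) + 2\,w\bigl(T[x,y] \setminus Q\bigr)$, wait---I need to be careful with signs here, since $w(T[x,y]\setminus Q) < 0$ would spoil the bound. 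Let me reorganize: actually $w(Z) \le w(Q \Delta T[x,y])$ since only non-negative edges are used at least twice \emph{on $Q$ alone}, plus the edges of $Q\cap T[x,y]$ drop out with negative weight, so $w(Q \Delta T[x,y]) \ge w(Z) \ge 0$, i.e. $w(Q) + w(T[x,y]) - 2w(Q\cap T[x,y]) \ge 0$, i.e. $w(Q) \ge 2w(Q\cap T[x,y]) - w(T[x,y])$. Since $Q \cap T[x,y] \subseteq T[x,y] \subseteq E^-$ we have $w(Q \cap T[x,y]) \le w(T[x,y])$ with equality iff $T[x,y]\subseteq Q$; hence $w(Q) \ge 2w(T[x,y]) - w(T[x,y]) = w(T[x,y])$, with equality only possible when $T[x,y] \subseteq Q$. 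That is exactly the claim, including the strict inequality when $T[x,y] \not\subseteq Q$.

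For part~(2), I would run the same argument with one closed walk built from both paths. Concatenate $Q$, then $T[y,y']$, then (the reverse of) $Q'$, then $T[x',x]$---this is a closed walk $W$ since $Q$ goes $x \to y$, $T[y,y']$ goes $y \to y'$, $Q'$ reversed goes $y' \to x'$, and $T[x',x]$ goes $x' \to x$. (If some of these tree-paths are empty because endpoints coincide, the walk is still well-defined; but the lemma hypothesizes $x,y,x',y'$ distinct, so at worst the tree-paths degenerate only when forced.) Because $Q$ and $Q'$ are vertex-disjoint they share no edges, so any edge used more than once on $W$ lies in at most one of $Q,Q'$ and in the tree portion. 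Passing to the odd-multiplicity set $Z$ as before, the edges dropped (used an even number of times) are non-negative if they come from $Q$ or $Q'$ internally, and the edges of the tree portions cancel in pairs exactly on $(Q \cup Q') \cap \bigl(T[y,y'] \cup T[x',x]\bigr)$, all of which lie in $E^-$. Combining the $\Delta$'s and using $T[x,y]\,\Delta\,T[x',y'] = T[y,y']\,\Delta\,T[x',x]$ (both equal the edge set of the subforest of $T$ spanned by the four vertices in the relevant pairing), I would conclude $w(Q) + w(Q') \ge w\bigl(T[x,y]\,\Delta\,T[x',y']\bigr)$, with the strict inequality when $T[x,y]$ and $T[x',y']$ properly intersect because then the symmetric difference is a proper subset of $T[x,y]\cup T[x',y']$ and some negative edge is genuinely deleted.

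The main obstacle I anticipate is the bookkeeping in part~(2): one has to check that the closed walk $W$ obtained by gluing the two paths with two tree-paths really does avoid repeating any \emph{negative} edge except through the cancellation that is accounted for, and that the identity $T[x,y]\,\Delta\,T[x',y'] = T[y,y']\,\Delta\,T[x',x]$ holds in a tree (this is a small but genuine lemma about paths in trees---both sides are the symmetric difference of the two paths joining the two pairs into which $\{x,y,x',y'\}$ is split, and in a tree these coincide). I would isolate that tree identity as a separate observation, then the weight inequality follows from the same ``odd-multiplicity set is a union of cycles'' argument as Lemma~\ref{lem:closed-walk}, applied once. I would also double-check the degenerate cases where two of the four vertices are adjacent on $T$ or where one of the constructed tree-paths is trivial, but these do not cause real trouble since an empty path contributes nothing to either side.
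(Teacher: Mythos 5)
Your route — closing the path(s) with tree-paths, passing to the set $Z$ of odd-multiplicity edges, and applying conservativeness to $(V,Z)$ as a union of cycles — is genuinely different from, and more elementary than, the paper's proof, which introduces a re-weighting procedure that transfers the weight of ``shadowed'' tree edges onto the $T$-leaps jumping over them and then bounds term by term. Part~(1) works once you flip one inequality: since every edge of $T[x,y]$ is negative, a subset satisfies $w(Q\cap T[x,y]) \ge w(T[x,y])$, not $\le$; with that corrected direction the chain $w(Q)\ge 2\,w(Q\cap T[x,y]) - w(T[x,y]) \ge w(T[x,y])$ and the equality analysis go through, and you noticed yourself that the signs needed care. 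The tree identity $T[x,y]\Delta T[x',y'] = T[y,y']\Delta T[x',x]$ is correct (for each tree edge $e$, membership in $T[u,v]$ is the parity of whether $u,v$ lie on opposite sides of $e$, and this parity is invariant under the regrouping) and rightly flagged as a small lemma. A minor slip: $T[y,y']$ and $T[x',x]$ can themselves share edges, so not every doubly-used edge of $W$ lies in $Q\cup Q'$; this does not affect the argument since you pass to $Z$ anyway.

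The genuine gap is the strict inequality in part~(2). Setting $A = Q\cup Q'$, $B = T[x,y]\Delta T[x',y']$, $Z = A\Delta B$, your computation yields $w(A)-w(B) = w(Z) - 2\,w(B\setminus A) \ge 0$, with equality only if \emph{both} $w(Z)=0$ \emph{and} $B\subseteq A$. Your stated reason — the symmetric difference is a proper subset of the union, so ``some negative edge is genuinely deleted'' — only records that $T[x,y]\cap T[x',y']\neq\emptyset$; it does not exclude the equality case. What you need, and what is true, is that $B\subseteq A$ is impossible: take an endpoint $p$ of $T[x,y]\cap T[x',y']$; the segment $T[x,p]$ of $T[x,y]$ and the branch of $T[x',y']$ hanging off $p$ together form a tree-path $T[x,z]$ with $z\in\{x',y'\}$, lying entirely in $B$ (they share only $p$, since any common edge would lie in both $T[x,y]$ and $T[x',y']$ outside their intersection). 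If $B\subseteq Q\cup Q'$, then because $T[x,z]$ is a connected path with at least one edge and $Q,Q'$ are vertex-disjoint, all of its edges lie in a single one of $Q,Q'$; but then that path contains both $x\in V(Q)$ and $z\in V(Q')$, contradicting vertex-disjointness. Hence $B\setminus A\neq\emptyset$ and the inequality is strict. As written, your proposal does not establish this.
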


\begin{proof}
We first start with describing a useful procedure. The input of this procedure is a set~$F \subseteq E$ that can be partitioned into a set $\F$ of mutually vertex-disjoint paths, each with both endpoints on~$T$, and its output is a new weight function~$w_F$ on $E$ fulfilling $w(F)=w_F(F)$. 

Initially, we set $w_F \equiv w$.
A \emph{$T$-leap} in~$F$ is a path in~$\F$ that has both endpoints on~$T$, has no inner vertices on~$T$, and contains no edge of~$T$. 
We consider each $T$-leap in~$F$ one-by-one. 
So let $L$ be a $T$-leap in~$F$ with endpoints $a$ and~$b$.
We define the \emph{shadow} of~$L$ as the edge set $F \cap T[a,b]$. 
Then for each edge $f$ in the shadow of $L$ for which $w_Q(f)=w(f)$, we set $w_F(f):=0$ and we decrease $w_F(L)$ by $|w(f)|$; we may do this by decreasing $w_F$ on the edges of~$L$ in any way as long as the total weight of~$L$ is decreased by $|w(f)|$. 
Since $w$ is conservative, $w(L) + w(T[a,b]) \geq 0$ by Lemma~\ref{lem:closed-walk}, and thus $w(L)\geq |w(T[a,b])| \geq \sum \{ |w(f)|:f \textrm{ is in the shadow of }L\}$.
Therefore, after performing this operation for each edge in the shadow of~$L$, $w_F(L) \geq 0$ remains true.
Applying these changes for each $T$-leap in~$F$, the resulting weight function~$w_F$ fulfills 
(i) $w_F(F) = w(F)$, 
(ii) $w_F(L) \geq 0$ for each $T$-leap~$L$ in~$F$, 
and that (iii) $w_F(f)=0$ for each edge~$f$ in the shadow of~$F$.

To show statement~(1) of the lemma, let $Q$ be an $(x,y)$-walk in~$G$. Observe that we can assume w.l.o.g.\ that $Q$ is an $(x,y)$-path: if $Q$ contains cycles, then we can repeatedly delete any cycle from~$Q$, possibly of length~2, so that in the end we obtain an $(x,y)$-path whose weight 
is at most the weight of~$Q$ by Lemma~\ref{lem:closed-walk},
since $w$ is conservative and no edge with negative weight is contained more than once in~$Q$.
Observe that any edge of~$Q \cap T$ that is not in the shadow of any $T$-leap on~$Q$ must lie on~$T[x,y]$: 
indeed, for any edge $uv \in Q \cap T$ that lies outside~$T[x,y]$, with $v$ being closer to $T[x,y]$ than~$u$ in~$T$, 
either $Q[u,x]$ or $Q[u,y]$ needs to use  a $T$-leap that contains~$uv$ in its shadow. 
Hence,  for the function~$w_Q$ obtained by the above procedure (for~$F = Q$) we get 
\begin{align*}
    w(Q) = w_Q(Q) & = \sum_{f \in Q \cap T[x,y]} w_Q(f) +
    \sum_{f \in Q \cap T \setminus T[x,y]} w_Q(f) +
    \sum_{f \in Q \setminus T} w_Q(f) \\[2pt]
    & \geq w(Q \cap T[x,y])+0+\sum_{\substack{\text{$L$: $L$ is a} \\ \text{$T$-leap on~$Q$}}} w_Q(L) \\ & \geq  w(T[x,y]).
\end{align*}
Observe also that equality is only possible if $w(Q \cap T[x,y])=w(T[x,y])$, which in turn happens only if $T[x,y] \subseteq Q$. 

To prove statement~(2), assume that $Q$ and $Q'$ are as in the statement.
Observe that any edge $f$ in~$(Q \cup Q') \cap T$ that is not contained in~$T[x,y] \Delta T[x',y']$ must be in the shadow of some $T$-leap in~$Q \cup Q'$: indeed,
 if $f$ is not in $T[x,y] \cup T[x',y']$, then this follows by the same arguments we used for statement~(1), and if $f \in T[x,y] \cap T[x',y']$, then either $f \in Q$ in which case it must be in the shadow of a $T$-leap on~$Q'$, or $f \in Q'$ in which case it is in the shadow of a $T$-leap on~$Q$, as $Q$ and~$Q'$ are vertex-disjoint. Considering the function~$w_{Q \cup Q'}$ obtained by the above procedure, we therefore know
\begin{align}
w(Q \cup Q') \, & = \,w_{Q \cup Q'}(Q \cup Q') \notag \\
    &= \sum_{\substack{f \in Q \cup Q' \text{ and} \\[2pt] f \in  T[x,y] \triangle T[x',y']}} w_Q(f) +
    \sum_{\substack{f \in Q \cup Q'\text{ and}  \notag \\[2pt] f \notin  T[x,y] \triangle T[x',y']}} w_Q(f) +
    \sum_{f \in Q \cup Q' \setminus T} w_Q(f) \notag \\[2pt]
&\geq w \Big( (Q \cup Q') \cap (T[x,y] \triangle T[x',y']) \Big) +0+\sum_{\substack{\text{$L$: $L$ is a } \\[1pt] \text{$T$-leap in~$Q \cup Q'$ }}} w_{Q \cup Q'}(L) \notag  \\
\, &\geq \,   w(T[x,y] \triangle T[x',y']). \label{eqn:Tminpath}
\end{align}To prove the last statement of the lemma, assume that $T[x,y]$ properly intersects $T[x',y']$. 
Since $Q$ and~$Q'$ are vertex-disjoint, 
this implies that there exists an edge $f^\star$ in $T[x,y] \cap  T[x',y']$ that is not contained in $Q \cup Q'$ 
(because neither~$Q$ nor~$Q'$ can entirely contain $T[x,y] \cap  T[x',y']$). 
Note that $Q$ must contain a leap~$L^\star$ for which the cycle induced by~$L$ contains~$f^\star$ (in fact, the same is true for~$Q'$).
Since $f^\star \notin Q_1 \cup Q_2$, we have that $f^\star$ is not in the shadow of~$L^\star$,  which implies that
$w_{Q \cup Q'}(L^\star) \geq  -w(f^\star) > 0$. Consequently,  inequality~(\ref{eqn:Tminpath}) is strict.
\end{proof}

\subsection{Solution Structure on Negative Trees}
\label{sec:init_paths}
We first observe a simple property of minimum-weight solutions.

\begin{definition}[\bf Locally cheapest path pairs]
Let $s_1,s_2,t_1,t_2$ be vertices in~$G$, and let $P_1$ and $P_2$ be two permissively disjoint $(\{s_1,s_2\},\{t_1,t_2\})$-paths. 
A path~$T[u,v]$ in some~$T \in \T$ is called a \emph{shortcut} for $P_1$ and $P_2$, if $u$ and $v$ both appear on the same path, either $P_1$ or~$P_2$, 
and there is no inner vertex or edge of~$T[u,v]$ contained in $P_1 \cup P_2$. 
We will call $P_1$ and~$P_2$ \emph{locally cheapest}, if there is no shortcut for them.
\end{definition}

The idea behind this concept is the following. 
Suppose that $P_1$ and $P_2$ are permissively disjoint $(\{s_1,s_2\},\{t_1,t_2\})$-paths, and 
$T[u,v]$ is a shortcut for $P_1$ and $P_2$.
Suppose that $u$ and~$v$ both lie on $P_i$ (for some $i \in [2]$), and let $P'_i$ be the path obtained by replacing $P_i[u,v]$ with $T[u,v]$; 
we refer to this operation as \emph{amending} the shortcut $T[u,v]$.
Then $P'_i$ is also permissively disjoint from~$P_{3-i}$ and, since Lemma~\ref{lem:closed-walk} implies $w(P_i[u,v]) \geq -w(T[u,v]) >0$,
has weight less than $w(P_i)$. 
Hence, we have the following observation.

\begin{observation}
\label{obs:locally-cheapest}
Let  $P_1$ and $P_2$ be two permissively disjoint $(\{s_1,s_2\},\{t_1,t_2\})$-paths
admitting a shortcut $T[z,z']$. 
Suppose that $z$ and $z'$ are on the path, say,~$P_1$.
Let $P'_1$ be the path obtained by amending~$T[z,z']$ on~$P_1$. Then 
$P'_1$ and $P_2$ are permissively disjoint $(\{s_1,s_2\},\{t_1,t_2\})$-paths and
    $w(P'_1) < w(P_1)$.
\end{observation}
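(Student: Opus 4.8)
The plan is to verify the three claims in Observation~\ref{obs:locally-cheapest} directly from the definitions, mirroring the informal discussion that precedes the statement. Write $P'_1$ for the path obtained from~$P_1$ by replacing the subpath $P_1[z,z']$ with the tree-path $T[z,z']$; since $T[z,z']$ is a shortcut, its inner vertices and edges avoid $P_1 \cup P_2$, so $P'_1$ is indeed a path (no repeated vertex) and it still goes between the same pair of terminals as~$P_1$, hence it is an $(\{s_1,s_2\},\{t_1,t_2\})$-path.

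The first thing I would check is permissive disjointness of~$P'_1$ and~$P_2$. The vertex set of~$P'_1$ is contained in $V(P_1) \cup V(T[z,z'])$; the inner vertices of $T[z,z']$ are not on $P_1 \cup P_2$ by the shortcut property, so $V(P'_1) \cap V(P_2) \subseteq V(P_1) \cap V(P_2)$. Since $P_1$ and~$P_2$ are permissively disjoint, any common vertex $v$ of $P_1$ and $P_2$ satisfies $v = s_1 = s_2$ or $v = t_1 = t_2$; the same therefore holds for any common vertex of~$P'_1$ and~$P_2$, which is exactly permissive disjointness of the new pair.

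For the weight inequality, note that $P_1[z,z']$ together with $T[z,z']$ forms a closed walk~$W$ from~$z$ to~$z$: it is actually a closed walk in which no negative edge is repeated, because $P_1$ is a path (so its edges are distinct) and, since the inner edges of $T[z,z']$ avoid $P_1$, the only possible overlap between $P_1[z,z']$ and $T[z,z']$ would be at the endpoints, which carry no edges of $T[z,z']$ in the overlap in a way that repeats a negative edge — so Lemma~\ref{lem:closed-walk} applies and gives $w(P_1[z,z']) + w(T[z,z']) = w(W) \geq 0$, i.e.\ $w(P_1[z,z']) \geq -w(T[z,z'])$. Because $T[z,z']$ lies in a negative tree $T \in \T$, every edge of $T[z,z']$ has negative weight, so $-w(T[z,z']) > 0$ (using that $T[z,z']$ is a genuine path, hence nonempty, as $z \neq z'$). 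Combining, $w(P_1[z,z']) > 0 \geq w(T[z,z'])$, and since $P'_1$ differs from $P_1$ only in that $P_1[z,z']$ is swapped for $T[z,z']$, we get $w(P'_1) = w(P_1) - w(P_1[z,z']) + w(T[z,z']) < w(P_1)$.

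I do not expect a serious obstacle here; this is essentially a bookkeeping argument. The one point that needs a little care is confirming that $W$ really does not repeat a negative edge, so that Lemma~\ref{lem:closed-walk} is applicable: this relies on the shortcut definition excluding inner edges of $T[z,z']$ from $P_1 \cup P_2$, together with $P_1$ being a simple path. Everything else — that $P'_1$ is a path, that it connects the right terminals, and that permissive disjointness is preserved — follows immediately from the shortcut property and the hypotheses on $P_1,P_2$.
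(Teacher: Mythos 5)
Your proposal is correct and follows essentially the same route as the paper: the paper establishes this observation in the informal paragraph preceding it by noting that the shortcut property ensures $P'_1$ remains permissively disjoint from $P_2$, and by applying Lemma~\ref{lem:closed-walk} to the closed walk $P_1[z,z'] \cup T[z,z']$ to obtain $w(P_1[z,z']) \geq -w(T[z,z']) > 0$, hence $w(P'_1) < w(P_1)$. You fill in the same details (no repeated negative edge on the closed walk because the shortcut's edges and inner vertices avoid $P_1 \cup P_2$ and $P_1$ is simple; $z \neq z'$ guarantees $w(T[z,z']) < 0$), so there is nothing to add.
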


\begin{corollary}
\label{cor:locally-cheapest}
Any minimum-weight solution for~$(G,w,s,t)$ is a pair of locally cheapest paths.
\end{corollary}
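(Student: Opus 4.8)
The plan is to derive Corollary~\ref{cor:locally-cheapest} directly from Observation~\ref{obs:locally-cheapest} by a short contradiction argument. First I would let $(P_1,P_2)$ be a minimum-weight solution for $(G,w,s,t)$, and observe that a solution is in particular a pair of permissively disjoint $(\{s_1,s_2\},\{t_1,t_2\})$-paths in the special case $s_1=s_2=s$ and $t_1=t_2=t$, so Observation~\ref{obs:locally-cheapest} applies. Suppose for contradiction that $(P_1,P_2)$ is not locally cheapest; then by definition there is a shortcut $T[z,z']$ for $P_1$ and $P_2$, and by symmetry we may assume $z$ and $z'$ both lie on $P_1$.

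Next I would invoke Observation~\ref{obs:locally-cheapest}: amending the shortcut $T[z,z']$ on~$P_1$ yields a path $P'_1$ such that $P'_1$ and $P_2$ are again permissively disjoint $(\{s,s\},\{t,t\})$-paths with $w(P'_1)<w(P_1)$. The only point that needs a line of care is that $(P'_1,P_2)$ is an honest \emph{solution}, i.e.\ a pair of openly disjoint $(s,t)$-paths: permissive disjointness with $s_1=s_2=s$, $t_1=t_2=t$ is exactly open disjointness, and amending does not change the endpoints of $P_1$, so $P'_1$ is still an $(s,t)$-path. Hence $(P'_1,P_2)$ is a solution with $w(P'_1)+w(P_2)<w(P_1)+w(P_2)$, contradicting the minimality of $(P_1,P_2)$. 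Therefore no shortcut exists and $(P_1,P_2)$ is locally cheapest.

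This is essentially immediate given the preceding observation, so there is no real obstacle; the only thing to get right is the bookkeeping that ``solution'' and ``pair of permissively disjoint paths with the two source vertices equal and the two target vertices equal'' coincide, and that amending preserves being an $(s,t)$-path while strictly decreasing total weight. I would keep the write-up to a couple of sentences accordingly.
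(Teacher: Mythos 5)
Your proposal is correct and takes essentially the same route as the paper, which states Corollary~\ref{cor:locally-cheapest} as an immediate consequence of Observation~\ref{obs:locally-cheapest} without a separate proof; the contradiction argument and the bookkeeping that ``solution'' coincides with ``permissively disjoint $(\{s,s\},\{t,t\})$-paths'' are exactly what is intended.
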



For two paths~$P_1$ and~$P_2$, we define $\Amend(P_1,P_2)$ as the result of amending all shortcuts on~$P_1$ and~$P_2$ in some arbitrarily fixed order. 
Observe that if shortcuts only appear on one of the paths among~$P_1$ and~$P_2$, then $\Amend(P_1,P_2)$ does not depend on the order in which we amend the shortcuts.
Shortcuts can be found simply by traversing all trees~$T \in \T$ and checking their intersection with $P_1$ and $P_2$. 
This way, $\Amend(P_1,P_2)$ can be computed in linear time.

\medskip

For convenience, for any $(s,t)$-path $P$ and vertices $u,v \in V(P)$ we say that $u$ \emph{precedes}~$v$  on~$P$, 
or equivalently, $v$ \emph{follows} $u$ on~$P$, if $u$ lies on $P[s,v]$. 
When defining a vertex as the ``first'' (or ``last'') vertex with some property on~$P$ or on a subpath~$P'$ of $P$ then, 
unless otherwise stated, we mean the vertex on~$P$ or on~$P'$ that is closest to~$s$ (or farthest from~$s$, respectively) that has the given property.

The following lemma shows that if two paths in a minimum-weight solution both use negative trees $T$ and~$T'$ for some $T,T' \in \mathcal{T}$
then, roughly speaking, they must traverse $T$ and~$T'$ in the same order; otherwise one would be able to 
replace the subpaths of the solution running between $T$ and~$T'$ by two paths, one within~$T$ and one within~$T'$, of smaller weight.

\begin{lemma}
\label{lem:twocomps-order}
Let $P_1$ and $P_2$ be two openly disjoint $(s,t)$-paths of minimum total weight, and let $T$ and $T'$ be distinct trees in~$\T$. 
Suppose that $v_1, v_2, v'_1$ and $v'_2$ are vertices such that $v_i \in V(T) \cap V(P_i)$ and $v'_i \in V(T') \cap V(P_i)$ for $i \in [2]$, 
with $v_1$ preceding $v'_1$ on~$P_1$. 
Then  $v_2$ precedes $v'_2$ on~$P_2$. 
\end{lemma}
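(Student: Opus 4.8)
The plan is to argue by contradiction using an exchange argument. Suppose, for contradiction, that $v_1$ precedes $v'_1$ on~$P_1$ but $v_2$ does \emph{not} precede $v'_2$ on~$P_2$, i.e.\ $v'_2$ precedes $v_2$ on~$P_2$. Consider the four subpaths of the solution that run ``between'' the two trees: the piece $Q_1 = P_1[v_1, v'_1]$ of~$P_1$, and the piece $Q_2 = P_2[v'_2, v_2]$ of~$P_2$. The key point is that $Q_1$ is an $(x,y)$-path with $x = v_1 \in V(T)$ and $y = v'_1 \in V(T')$, while $Q_2$ is an $(x',y')$-path with $x' = v_2 \in V(T)$ and $y' = v'_2 \in V(T')$; crucially, because of the reversed order, when we walk along the concatenation $P_1[s,v_1] \cup P_2[v_2, t]$ and $P_2[s, v'_2] \cup P_1[v'_1, t]$ we get two \emph{new} $(s,t)$-walks that avoid $Q_1$ and $Q_2$ and together use every edge of $P_1 \cup P_2$ exactly once except that $Q_1$ and $Q_2$ have been removed.

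Concretely, first I would form the two closed walks / new path pairs obtained by rerouting at the trees: replace the segment of~$P_1$ between $v_1$ and $v'_1$ and the segment of~$P_2$ between $v_2$ and $v'_2$ by a path $T[v_1, v_2]$ inside~$T$ and a path $T'[v'_1, v'_2]$ inside~$T'$. Because $v_2$ precedes $v'_2$ is false, the endpoints match up so that after this surgery we again obtain two $(s,t)$-paths (or at worst two $(s,t)$-walks from which cycles can be deleted): one is $P_1[s,v_1] \cdot T[v_1,v_2] \cdot P_2[s,v_2]^{-1}$-type concatenation, and the other uses $P_1[v'_1,t]$, $T'[v'_1,v'_2]$ and $P_2[v'_2,t]$. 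I would then verify openly-disjointness of the resulting pair: the only shared vertices could lie in $V(T) \cup V(T')$, and since $Q_1, Q_2$ are internally disjoint from the rest and the replacement paths live inside $T$ and $T'$ respectively, the two new paths meet $T$ only in $T[v_1,v_2]$ and $T'$ only in $T'[v'_1,v'_2]$, so any overlap is controlled; amending shortcuts via Observation~\ref{obs:locally-cheapest} (or invoking Lemma~\ref{lem:uncrossing}) cleans up any coincidences without increasing weight.

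The weight comparison is where Lemma~\ref{lem:T-min-pathlength}(2) does the work. The old pieces $Q_1$ and $Q_2$ are vertex-disjoint paths (subpaths of openly disjoint $P_1, P_2$) with endpoints $v_1, v'_1$ and $v_2, v'_2$ on the union of the two trees. I want to show $w(Q_1) + w(Q_2) \ge w(T[v_1,v_2]) + w(T'[v'_1,v'_2])$; since $v_1 \ne v_2$ (they lie on distinct paths $P_1, P_2$ which are internally disjoint, and a common vertex would have to be $s$ or $t$, which can be excluded or handled separately) and similarly $v'_1 \ne v'_2$, the replacement paths are nontrivial. Applying the part-(2) inequality of Lemma~\ref{lem:T-min-pathlength} to each tree separately — or a two-tree variant of the $T$-leap/shadow argument in its proof — gives exactly this bound, with strict inequality whenever the relevant tree-paths properly intersect. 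Thus the new path pair has total weight at most $w(P_1) + w(P_2)$, contradicting minimality unless everything is tight; in the tight case the new pair is another minimum-weight solution and one can invoke Corollary~\ref{cor:locally-cheapest} to force $T[v_1,v_2]$ and $T'[v'_1,v'_2]$ to be used, then re-examine the order to derive a contradiction, e.g.\ by producing a genuine shortcut.

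The main obstacle I anticipate is the bookkeeping in the surgery step: after swapping the segments one must check carefully that the two resulting objects really are a pair of openly disjoint $(s,t)$-paths and not, say, an $(s,t)$-path together with a closed walk, or two walks that share an unexpected vertex outside $T \cup T'$. The reversed-order hypothesis is precisely what guarantees the endpoints reconnect correctly, but spelling out the case analysis — in particular handling degenerate situations where $v_1 = s$, $v'_1 = t$, or the trees share a vertex — and confirming that cycle-deletion plus $\Amend$ never raises the weight, will be the delicate part. Once the surgery is shown to produce a valid competitor, the strict-inequality clause of Lemma~\ref{lem:T-min-pathlength}(2) closes the argument cleanly.
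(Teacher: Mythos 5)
Your high-level plan — reroute $P_1,P_2$ through $T$ and $T'$ to obtain a competitor pair $S_1=P_1[s,v_1]\cup T[v_1,v_2]\cup P_2[v_2,t]$ and $S_2=P_2[s,v'_2]\cup T'[v'_2,v'_1]\cup P_1[v'_1,t]$ — is the same exchange the paper performs, but two things you treat as minor are where the proof actually lives, and one lemma you invoke does not apply.

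The disjointness of $(S_1,S_2)$ is not bookkeeping that $\Amend$ or Lemma~\ref{lem:uncrossing} can clean up after the fact: both of those tools presuppose a pair of permissively (or openly) disjoint paths, and the danger here is precisely that $S_1$ and $S_2$ fail to be disjoint — or even fail to be paths — because $P_1[s,v_1]$, $P_2[v_2,t]$, $P_2[s,v'_2]$, or $P_1[v'_1,t]$ may revisit $T[v_1,v_2]$ or $T'[v'_1,v'_2]$ at an inner vertex. The paper devotes the bulk of the proof to ruling this out: it picks a \emph{minimal} counterexample (minimizing $|T[v_1,v_2]|+|T'[v'_1,v'_2]|$) and then runs a case analysis, using local cheapness and minimality, to show that $T[v_1,v_2]$ and $T'[v'_1,v'_2]$ contain no inner vertex of $P_1\cup P_2$. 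Without that claim you cannot form a valid competitor, so the step cannot be deferred.

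Second, Lemma~\ref{lem:T-min-pathlength}(2) does not fit your weight comparison. That statement concerns four distinct vertices $x,y,x',y'$ all on a \emph{single} tree and two vertex-disjoint paths each with both endpoints on that one tree; here $Q_1=P_1[v_1,v'_1]$ and $Q_2=P_2[v'_2,v_2]$ each have one endpoint on $T$ and the other on $T'$, so the hypothesis is not met, and its strictness clause (that $T[x,y]$ and $T[x',y']$ properly intersect) cannot fire since $T[v_1,v_2]$ and $T'[v'_1,v'_2]$ lie in disjoint trees. What the paper uses is simply Lemma~\ref{lem:closed-walk} applied to the closed walk $W=P_1[v_1,v'_1]\cup T'[v'_1,v'_2]\cup P_2[v'_2,v_2]\cup T[v_2,v_1]$ (legal once the inner-vertex claim above holds), giving $w(W)\ge 0$; strictness then falls out from $w(T[v_1,v_2]\cup T'[v'_1,v'_2])<0$, so no tight-case detour or two-tree generalisation of the $T$-leap argument is needed.
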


\begin{proof}
Suppose for contradiction that the lemma does not hold, and assume that $v_1,v_2,v'_1$ and~$v'_2$ are 
as required by the lemma, but $v_2$ does not precede $v'_2$ on~$P_2$. 
We may also assume that $(v_1,v_2,v'_1,v'_2)$ forms a minimal counterexample in the sense that $|T[v_1,v_2]|+|T'[v'_1,v'_2]|$ is as small as possible.
Since $P_1$ and $P_2$ have minimum total weight, by Corollary~\ref{cor:locally-cheapest} we know that they are locally cheapest. 

We claim that $T[v_1,v_2]$ contains no vertex of~$P_1 \cup P_2$ as an inner vertex. Assuming the contrary,
let $u$ denote the first inner vertex on~$T[v_1,v_2]$, when traversed from~$v_1$ to~$v_2$, that lies on~$P_1 \cup P_2$. 
First, if $u$ lies on~$P_1[s,v'_1]$, then $(u,v_2,v'_1,v'_2)$ is a counterexample for the lemma with $|T[u,v_2]|+|T'[v'_1,v'_2]|<|T[v_1,v_2]|+|T'[v'_1,v'_2]|$, contradicting our assumption on~$(v_1,v_2,v'_1,v'_2)$.
Similarly, if $u$ lies on $P_2[v'_2,t]$, then $(v_1,u,v'_1,v'_2)$ is a counterexample for the lemma with $|T[v_1,u]|+|T'[v'_1,v'_2]|<|T[v_1,v_2]|+|T'[v'_1,v'_2]|$, again contradicting our assumption on~$(v_1,v_2,v'_1,v'_2)$.
Assume now $u$ lies on~$P_1[v'_1,t]$. Then $T[v_1,u]$ contains no inner vertex belonging to~$P_1 \cup P_2$ by the definition of~$u$, and is therefore a shortcut for~$P_1$ and~$P_2$, contradicting the fact that they are locally cheapest.
Hence, it must the case that $u$ lies on~$P_2[s,v'_2]$. Notice that $v'_2$ lies on~$P_2[u,v_2]$, and hence $T[u,v_2] \not \subseteq P_2$ by $v'_2 \notin V(T)$.
Since $P_1$ and~$P_2$ are locally cheapest, $T[u,v_2]$ does not contain a shortcut, and so $T[u,v_2]$ must contain a vertex~$z$ of $P_1$. 
On the one hand, if $z$ lies on~$P_1[s,v'_1]$, then 
$(z,v_2,v'_1,v'_2)$ is a counterexample for the lemma with $|T[z,v_2]|+|T'[v'_1,v'_2]|<|T[v_1,v_2]|+|T'[v'_1,v'_2]|$;
on the other hand, if $z$ lies on~$P_1[v'_1,t]$, then 
$(v'_1,v'_2,z,u)$ is a counterexample for the lemma with $|T'[v'_1,v'_2]|+|T[z,u]|<|T[v_1,v_2]|+|T'[v'_1,v'_2]|$.
In either case, we get a contradiction.

By symmetry, we also obtain that $T'[v'_1,v'_2]$ contains no vertex of~$P_1 \cup P_2$ as an inner vertex. Consider the closed walk 
\[W=P_1[v_1,v'_1] \cup T'[v'_1,v'_2] \cup P_2[v'_2,v_2] \cup T[v_2,v_1].\]
By the facts we just proved, $T[v_1,v_2] \cup T'[v'_1,v'_2]$ shares no vertices with~$P_1 \cup P_2$ except the endpoints, so by the disjointness of~$P_1$ and~$P_2$ we know that no edge of~$E^-$ is contained more than once in~$W$. Hence, Lemma~\ref{lem:closed-walk} yields $w(W) \geq 0$.

Define the paths $S_1$ and~$S_2$ as 
\begin{align*}
    S_1 &= P_1[s,v_1] \cup T[v_1,v_2] \cup P_2[v_2,t]; \\
    S_2 &= P_2[s,v'_2] \cup T'[v'_2,v'_1] \cup P_1[v'_1,t].
\end{align*}
Observe that $S_1$ and~$S_2$ are two openly disjoint $(s,t)$-paths with weight
\begin{align*}
w(S_1) +&  w(S_2) = w(P_1)+w( P_2) + w(T[v_1,v_2] \cup T'[v'_1,v'_2])  - w(P_1[v_1,v'_1] \cup P_2[v'_2,v_2])\\
&= w(P_1)+w( P_2) + w(T[v_1,v_2] \cup T'[v'_1,v'_2]) - w\left(W \setminus (T[v_1,v_2] \cup T'[v'_1,v'_2])\right) \\
&< w(P_1 )+ w( P_2) 
\end{align*}
where we used that $w(T[v_1,v_2] \cup T'[v'_1,v'_2]) <0$ and hence $w(W \setminus (T[v_1,v_2] \cup T'[v'_1,v'_2]))>0$.
This contradicts our assumption that $P_1$ and $P_2$ have minimum total weight.
\end{proof}

The following lemma is a consequence of Corollary~\ref{cor:locally-cheapest} and Lemma~\ref{lem:twocomps-order}, 
and considers a situation when one of the paths in an optimal solution visits a negative tree~$T \in \mathcal{T}$ at least twice, 
and visits some $T' \in \mathcal{T} \setminus \{T\}$ in between. 

\begin{lemma}
\label{lem:childcomp}
Let $P_1$ and $P_2$ be two openly disjoint $(s,t)$-paths of minimum total weight, and let $T$ and $T'$ be distinct trees in~$\T$. Suppose that $v_1, v'_2$, and~$v_3$ are vertices appearing in this order on~$P_1$ when traversed from~$s$ to~$t$, and suppose $v_1,v_3 \in V(T)$ while $v'_2 \in V(T')$.
Then 
\begin{itemize}
    \item $V(P_2) \cap V(T) \neq \emptyset$;
    \item $V(P_2) \cap V(T')=\emptyset$;
    \item no vertex of~$V(P_1) \cap V(T')$ precedes~$v_1$ or follows~$v_3$ on~$P_1$.
\end{itemize} 
\end{lemma}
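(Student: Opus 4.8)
The plan is to prove the three claims essentially in the listed order, each time deriving a contradiction with either local cheapness (Corollary~\ref{cor:locally-cheapest}) or the ordering constraint of Lemma~\ref{lem:twocomps-order}.

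First I would establish $V(P_2) \cap V(T) \neq \emptyset$. Suppose not. Then $T$ is used only by $P_1$, which visits $T$ at both $v_1$ and $v_3$. Consider the subpath $P_1[v_1,v_3]$; since $v_1,v_3 \in V(T)$, the tree path $T[v_1,v_3]$ is a candidate shortcut. The only obstruction is that some inner vertex or edge of $T[v_1,v_3]$ lies on $P_1 \cup P_2 = P_1 \cup P_2$; but any such vertex would have to lie on $P_1$ (as $P_2$ avoids $V(T)$), and I would argue, as in the proof of Lemma~\ref{lem:twocomps-order}, that the first such vertex $u$ on $T[v_1,v_3]$ traversed from $v_1$ yields a shorter sub-shortcut $T[v_1,u]$ or $T[u,v_3]$ with no inner vertex on $P_1\cup P_2$ — hence a genuine shortcut — contradicting local cheapness. (If instead there is no inner obstruction at all, $T[v_1,v_3]$ is itself the shortcut.) Either way we contradict Corollary~\ref{cor:locally-cheapest}.

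Next, for $V(P_2) \cap V(T') = \emptyset$ and the third bullet, I would combine the first bullet with Lemma~\ref{lem:twocomps-order}. We now know there is a vertex $v_2 \in V(P_2) \cap V(T)$. Suppose toward contradiction that $P_2$ meets $T'$ at some vertex $v'_1 \in V(P_2)\cap V(T')$. Apply Lemma~\ref{lem:twocomps-order} twice, to the pairs of occurrences: on $P_1$ the vertex $v_1 \in V(T)$ precedes $v'_2 \in V(T')$, so by the lemma $v_2$ precedes $v'_1$ on $P_2$; but on $P_1$ the vertex $v'_2 \in V(T')$ also precedes $v_3 \in V(T)$, so by the lemma (applied with the roles of $T$ and $T'$ swapped) $v'_1$ precedes $v_2$ on $P_2$ — a contradiction. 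Hence $V(P_2) \cap V(T') = \emptyset$. For the third bullet, suppose some $v'_0 \in V(P_1)\cap V(T')$ precedes $v_1$ on $P_1$ (the ``follows $v_3$'' case is symmetric, swapping $s$ and $t$). Since $v_2 \in V(P_2)\cap V(T)$ exists, and since on $P_1$ we have $v'_0 \in V(T')$ preceding $v_1 \in V(T)$, Lemma~\ref{lem:twocomps-order} (with $T'$ in the role of the ``first'' tree) would force a vertex of $V(P_2)\cap V(T')$ to precede $v_2$ on $P_2$; but we just showed $V(P_2)\cap V(T')=\emptyset$, a contradiction. The same argument, reading $P_1$ from $t$ toward $s$, handles a vertex of $V(P_1)\cap V(T')$ following $v_3$.

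The main obstacle I anticipate is the first bullet: turning ``$T[v_1,v_3]$ is blocked by an inner vertex of $P_1$'' into an actual shortcut. The subtlety is that the blocking vertices of $P_1$ on $T[v_1,v_3]$ may be numerous and interleaved, so I must pick the first one $u$ from $v_1$ and carefully argue that $T[v_1,u]$ has no inner vertex on $P_1\cup P_2$ and that $u,v_1$ both lie on $P_1$ — which is exactly the definition of a shortcut; this mirrors the inner-vertex analysis already carried out in Lemma~\ref{lem:twocomps-order}, so I would reuse that reasoning rather than redo it from scratch. A minor care point throughout is the degenerate possibility that some of $v_1,v_2,v_3,v'_2,v'_1$ coincide or that $v'_2$ equals an endpoint of $P_1$; since $T \neq T'$ these coincidences are limited, and $s,t \notin V(\T)$ can be assumed after preprocessing (as set up in Section~\ref{sec:tree-init}), so they do not cause real trouble.
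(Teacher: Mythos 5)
Your second bullet is fine and your first bullet has the right shape, but your argument for the third bullet has a genuine gap: you invoke Lemma~\ref{lem:twocomps-order} as if it could ``force a vertex of $V(P_2)\cap V(T')$ to precede $v_2$,'' but that lemma only constrains the order of vertices it already assumes to exist. Its hypothesis includes a vertex $v'_2\in V(T')\cap V(P_2)$; when $V(T')\cap V(P_2)=\emptyset$ (which is precisely what you established in the second bullet) the lemma is vacuous and yields no contradiction whatsoever. The correct route -- and the one the paper takes -- is a second shortcut argument: with $v'_0,v'_2\in V(T')\cap V(P_1)$ and $v_1$ (or $v_3$) strictly between them on $P_1$, and $v_1,v_3\notin V(T')$ since $T$ and $T'$ are distinct, hence vertex-disjoint, components of $G[E^-]$, the tree path $T'[v'_0,v'_2]$ cannot be contained in $P_1$. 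By the second bullet it also contains no vertex of $P_2$, so it must contain a shortcut for $(P_1,P_2)$, contradicting Corollary~\ref{cor:locally-cheapest}.

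A smaller but real issue in your first bullet: taking $u$ as the first inner vertex of $T[v_1,v_3]$ that lies on $P_1\cup P_2$ guarantees $T[v_1,u]$ has no inner vertex on $P_1\cup P_2$, but that alone is not ``exactly the definition of a shortcut.'' A shortcut must also contain no \emph{edge} of $P_1\cup P_2$, and this can fail when $T[v_1,u]$ is a single edge belonging to $P_1$. In the analogous step inside the proof of Lemma~\ref{lem:twocomps-order} this is automatically excluded because a vertex of $T'$ separates the two endpoints on $P_1$; here there is no such safeguard. The fix is to start the shortcut not at $v_1$ but at the last vertex $q$ with $T[v_1,q]\subseteq P_1$ (which exists and differs from $v_3$ since $T[v_1,v_3]\not\subseteq P_1$), and end it at the next $P_1$-vertex on $T[q,v_3]$; this is what the paper's terse ``otherwise it admits a shortcut'' is implicitly doing, with the absence of $P_2$-vertices on $T[v_1,v_3]$ ensuring both endpoints land on $P_1$.
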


\begin{proof}
First note that $P_1$ and~$P_2$ are locally cheapest by Corollary~\ref{cor:locally-cheapest}.
Since $v'_2 \notin V(T)$ but $v_2$ lies on~$P_1[v_1,v_3]$, we know $T[v_1,v_3] \not\subseteq P_1$, and thus $T[v_1,v_3]$ must contain a vertex of~$P_2$, say~$u$, as otherwise it admits a shortcut.

Second, assume for contradiction that there exists a vertex~$u' \in V(P_2) \cap V(T')$. 
On the one hand, if $u$ precedes~$u'$ on~$P_2$, then by Lemma~\ref{lem:twocomps-order} $v_3$ should precede~$v'_2$ on~$P_1$, contradicting our assumptions.
On the other hand, if $u'$ precedes~$u$ on~$P_2$, then by Lemma~\ref{lem:twocomps-order} $v'_2$ should precede~$v_1$ on~$P_1$, contradicting our assumptions.

Third, assume that $V(P_1) \cap V(T')$ contains a vertex~$u'$ that either precedes $v_1$ or follows~$v_3$ on~$P_1$. Note that either $v_1$ or $v_3$ lies between $v'_2$ and~$u'$ on~$P_1$, and thus $T[u',v'_2] \not \subseteq P_1$. However,  $T[u',v'_2]$ cannot contain any vertex of~$P_2$ (as we have just proved), and therefore $T[u',v'_2]$ contains a shortcut, contradicting the fact that $P_1$ and~$P_2$ are locally cheapest.
\end{proof}

We say that two paths $P_1$ and~$P_2$ are \emph{in contact} at~$T$, if there is a tree~$T \in \T$ and two distinct vertices~$v_1$ and~$v_2$ in~$T$
such that $v_1$ lies on~$P_1$, and $v_2$ lies on~$P_2$. 
Lemmas~\ref{lem:twocomps-order} and~\ref{lem:childcomp} imply the following fact 
that will enable us to use recursion in our algorithm to find solutions that consist of two paths in contact.
\begin{lemma}
\label{lem:partitioned-comps}
Let $P_1$ and $P_2$ be two openly disjoint $(s,t)$-paths of minimum total weight, and assume that they are in contact at some tree~$T \in \T$.
For $i \in [2]$, let $a_i$ and~$b_i$ denote the first and last vertices of~$P_i$ on~$T$ when traversed from~$s$ to~$t$.
Then we can partition~$\mathcal{T} \setminus \{T\}$ into~$(\T_s, \T_0,\T_t)$ such that for each $T' \in \T \setminus \{T\}$ and $i \in [2]$ it holds that
\begin{description}
    \item[(i)] if $P_i[s,a_i]$ contains a vertex of~$T'$, then $T' \in \T_s$
    \item[(ii)] if $P_i[b_i,t]$ contains a vertex of~$T'$, then $T' \in \T_t$
    \item[(iii)] if $P_i[a_i,b_i]$ contains a vertex of~$T'$, then $T' \in \T_0$.
\end{description}
\end{lemma}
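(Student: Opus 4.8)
The plan is to fix a tree $T' \in \T \setminus \{T\}$ and argue that its vertices on $P_1$ and $P_2$ are confined to one of the three ``zones'' cut out on each $P_i$ by the vertices $a_i$ and $b_i$. The first step is to handle the easy case: if $P_2$ has no vertex on $T$ at all, then (since $P_1,P_2$ are in contact at $T$) it is $P_1$ that carries the contact, and by definition $a_1,b_1$ are well defined while $a_2,b_2$ need a convention; but in fact the statement only asks about the $P_i$ that visit $T$, so I would first record which of $P_1,P_2$ actually meet $T$. The substantive case is when both $P_1$ and $P_2$ meet $T$. I claim that for each $i$, the set $V(P_i)\cap V(T)$ occupies a single contiguous block of $P_i$, i.e.\ every vertex of $P_i$ on $T$ lies on $P_i[a_i,b_i]$ — this is trivially true by the choice of $a_i,b_i$ as first and last. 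So the three zones $P_i[s,a_i]$, $P_i[a_i,b_i]$, $P_i[b_i,t]$ genuinely partition $V(P_i)$ up to the shared endpoints $a_i,b_i$, and it suffices to show that for a fixed $T'$, its vertices cannot simultaneously appear in two different zones across the two paths in an ``inconsistent'' way.

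The core of the argument is a three-way case analysis driven by Lemmas~\ref{lem:twocomps-order} and~\ref{lem:childcomp}. First I would show: if $T'$ has a vertex on $P_1[s,a_1]$ and also a vertex on $P_2$ — then Lemma~\ref{lem:twocomps-order}, applied with $T$ and $T'$ and the pairs of witnesses (using $a_1 \in V(T)\cap V(P_1)$, the $T'$-vertex preceding $a_1$ on $P_1$, $a_2 \in V(T)\cap V(P_2)$, and the $T'$-vertex on $P_2$), forces that $T'$-vertex to precede $a_2$ on $P_2$, hence to lie on $P_2[s,a_2]$. So ``$T'$ touches the $s$-side of one path'' propagates to ``$T'$ touches the $s$-side of the other path'' — this is exactly conclusion (i). The symmetric statement with $b_i$ and $P_i[b_i,t]$ gives (ii). For (iii), suppose $T'$ has a vertex $u'$ on $P_1[a_1,b_1]$. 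If $P_1$ visits $T$ twice or more, or more precisely if $u'$ lies strictly between two $T$-vertices of $P_1$ (which it does, being on $P_1[a_1,b_1]$ with $a_1,b_1 \in V(T)$ and $u' \notin V(T)$ since $T\ne T'$ are distinct components of $G[E^-]$, hence vertex-disjoint), then Lemma~\ref{lem:childcomp} applies with $v_1=a_1$, $v'_2=u'$, $v_3=b_1$: it yields $V(P_2)\cap V(T')=\emptyset$ and that no vertex of $V(P_1)\cap V(T')$ precedes $a_1$ or follows $b_1$. That simultaneously shows $T'$ contributes nothing to $\T_s$ or $\T_t$ via either path, and that on $P_1$ all of $T'$ sits in the middle zone; combined with $V(P_2)\cap V(T')=\emptyset$, every occurrence of $T'$ is in a middle zone, so putting $T' \in \T_0$ satisfies (i)--(iii) vacuously-or-correctly.

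Assembling the partition: define $\T_s$ to be the trees $T'$ such that some $P_i[s,a_i]$ meets $T'$, define $\T_t$ analogously with $P_i[b_i,t]$, and $\T_0$ the rest. The remaining obligations are that these three sets are pairwise disjoint and that they satisfy (iii) for the trees put in $\T_0$. Disjointness of $\T_s$ and $\T_t$: if $T'$ met both some $P_i[s,a_i]$ and some $P_j[b_j,t]$, then by the propagation just established $T'$ meets $P_1[s,a_1]$, $P_2[s,a_2]$, $P_1[b_1,t]$ and $P_2[b_2,t]$ — in particular $P_1$ has a $T'$-vertex before $a_1$ and one after $b_1$, with a $T$-vertex in between, so Lemma~\ref{lem:childcomp} (now with the roles of $T$ and $T'$ swapped, using that $P_1$ visits $T'$ on both sides of its $T$-block) forces $V(P_2)\cap V(T)=\emptyset$, contradicting that $P_1,P_2$ are in contact at $T$. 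The cases where $\T_s$ or $\T_t$ overlaps $\T_0$, or where a $\T_0$-tree secretly touches an $s$- or $t$-zone, are excluded by the very definitions of $\T_s,\T_t$ together with the propagation lemmas.

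\textbf{Main obstacle.}
I expect the delicate point to be the bookkeeping around exactly \emph{which} witness vertices to feed into Lemmas~\ref{lem:twocomps-order} and~\ref{lem:childcomp} — in particular making sure, when I invoke the ``$v_1$ precedes $v'_1$'' hypothesis of Lemma~\ref{lem:twocomps-order}, that $a_i$ really is a $T$-vertex of $P_i$ and that the $T'$-vertex I picked really does sit on the claimed side of it, and symmetrically for Lemma~\ref{lem:childcomp} that I have genuine vertices $v_1,v_3$ on $T$ flanking a $T'$-vertex $v'_2$ on the same path. One has to be careful that $a_i = b_i$ is possible (a path touching $T$ at a single vertex), in which case the middle zone of that path is the single vertex $a_i=b_i$ and the argument degenerates gracefully; and one must use that distinct trees in $\T$, being distinct connected components of $G[E^-]$, are genuinely vertex-disjoint so that a $T'$-vertex is never a $T$-vertex. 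Beyond this casework, the proof is a direct and essentially mechanical consequence of the two preceding lemmas.
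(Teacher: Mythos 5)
Your proof is correct and follows essentially the same strategy as the paper's: define $\T_s$ (resp.\ $\T_t$) as the trees meeting some $P_i[s,a_i]$ (resp.\ $P_i[b_i,t]$), take $\T_0$ to be the remainder so that (i) and (ii) hold by fiat, then invoke Lemma~\ref{lem:twocomps-order} and Lemma~\ref{lem:childcomp} (the latter with the roles of $T$ and $T'$ swapped where needed) to show $\T_s \cap \T_t = \emptyset$ and that any tree meeting a middle zone must avoid both endpoint zones, giving (iii). Two small remarks that do not affect correctness: the opening worry that one of $P_1,P_2$ might miss $T$ entirely is moot, since being in contact at $T$ already forces both paths to hit $T$, so $a_i,b_i$ are always well defined; and your intermediate assertion that a tree in $\T_s\cap\T_t$ would meet all four of $P_1[s,a_1],P_2[s,a_2],P_1[b_1,t],P_2[b_2,t]$ overshoots what the propagation lemma gives when both witnesses lie on the same $P_i$ (the other path may simply miss $T'$), but the subsequent application of Lemma~\ref{lem:childcomp} only needs the one path $P_i$ to carry $T'$-vertices flanking a $T$-vertex, which that case hands you directly, so the argument still goes through.
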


\begin{proof}
Let $\T_s$ contain those trees in~$\T \setminus \{T\}$ that share a vertex with~$P_1[s,a_1]$ or~$P_2[s,a_2]$. 
Similarly, let $\T_t$ contain those trees in~$\T \setminus \{T\}$ that share a vertex with~$P_1[b_1,t]$ or~$P_2[b_2,t]$.
Then (i) and (ii) hold by definition.
We also let $\T_0=\T \setminus (\T_s \cup \T_t \cup \{T\})$.

Let us consider a tree $T' \in \T_s$; then there is a vertex~$v' \in V(T')$ contained on~$P_1[s,a_1]$ or on~$P_2[s,a_2]$.
We claim that $P_1[a_1,t] \cup P_1[a_2,t]$ shares no vertex with~$T'$.
For the sake of contradiction, let us assume that $u' \in V(T') \cap V(P_1[a_1,t] \cup P_1[a_2,t])$.
First, if $v'$ and~$u'$ are on the same path, $P_1$ or~$P_2$, then this contradicts Lemma~\ref{lem:childcomp}:
since both paths share vertices with~$T$, neither of them can visit $T$ in between~$v'$ and~$u'$.
Second, if $v'$ and~$u'$ are on different paths, then this contradicts Lemma~\ref{lem:twocomps-order}, 
since one of the paths visits~$T'$ before~$T$, and the other visits~$T$ before~$T'$.
This implies that $\T_s \cap \T_t=\emptyset$; in particular, $(\T_s, \T_0,\T_t)$ is a partitioning of~$\T$. 
Additionally, we get that no tree $T' \in \T_s$ may contain a vertex on~$P_1[a_1,b_1]$ or on~$P_2[a_2,b_2]$.
Observe that by symmetry, the same holds for any tree $T' \in T_t$; hence (iii) follows.
\end{proof}

Given a solution~$(P_1,P_2)$ whose paths are in contact at some tree $T \in \T$, 
a partition of~$\T \setminus \{T\}$ is \emph{$T$-valid} with respect to~$(P_1,P_2)$, 
if it satisfies the conditions of Lemma~\ref{lem:partitioned-comps}.

\subsection{Combining Path Pairs}
\label{sec:init_uncrossing}
The following lemma will be a crucial ingredient in our algorithm, as it enables us to combine ``partial solutions'' 
without violating the requirement of vertex-disjointness.

\begin{lemma}
\label{lem:uncrossing}
Let $p_1,p_2,q_1,q_2$ be vertices in~$G$, and let $T \in \T$ contain vertices~$v_1$ and~$v_2$ with $v_1 \neq v_2$.
Let $P_1$ and $P_2$ be two permissively disjoint $(\{p_1,p_2\},\{v_1,v_2\})$-paths in $G$, 
and let $Q_1$ and~$Q_2$ be two permissively disjoint $(\{v_1,v_2\},\{q_1,q_2\})$-paths in $G$ that are locally cheapest. 
Assume also that we can partition~$\T$ into two sets $\T_1$ and $\T_2$ with~$T \in \T_2$ such that 
\begin{description}
    \item[(i)]
    $V(T) \cap V(P_1 \cup P_2) =\{v_1,v_2\}$, and 
    \item[(ii)]
     $P_1 \cup P_2$ contains no edge of~$E(\T_2)$, and 
    $Q_1 \cup Q_2$  contains no edge of~$E(\T_1)$.
\end{description}
Then we can find in linear time two permissively disjoint $(\{p_1,p_2\},\{q_1,q_2\})$-paths $S_1$ and $S_2$ in~$G$ 
such that 
$w(S_1)+w(S_2) \leq w(P_1)+w(P_2) + w(Q_1)+w(Q_2)$.
\end{lemma}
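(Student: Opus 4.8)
The plan is to form the union $(P_1 \cup P_2) \cup (Q_1 \cup Q_2)$ and analyze it as a graph; the two path pairs meet only at $\{v_1,v_2\}$ (by (i), $P_1\cup P_2$ touches $T$ only there, and by (ii) the edge sets live on disjoint tree families apart from what happens away from the $\T_i$'s — more precisely, I first need to argue that $P_1\cup P_2$ and $Q_1\cup Q_2$ can only share vertices that lie in $\{v_1,v_2\}$, using conditions (i) and (ii) together with the fact that the $P$'s avoid $E(\T_2)$ and the $Q$'s avoid $E(\T_1)$ — this common-vertex analysis is the first thing to nail down). Assuming the two pairs are "almost disjoint'' in this sense, the natural construction is: if $P_j$ ends at $v_a$ and $Q_k$ starts at $v_a$, concatenate them. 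Since $P_1,P_2$ hit $\{v_1,v_2\}$ in some order and likewise $Q_1,Q_2$, there are two cases for how to match them up; pick the matching that pairs $P$'s endpoint $v_a$ with the $Q$ starting at the same $v_a$. This yields two $(\{p_1,p_2\},\{q_1,q_2\})$-walks $S_1,S_2$ with $w(S_1)+w(S_2)=w(P_1)+w(P_2)+w(Q_1)+w(Q_2)$, but they may fail to be paths (a $P_j$ and a $Q_k$ that are concatenated might share an internal vertex) and the two resulting walks may intersect each other.

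First I would handle self-intersections: a concatenated walk $P_j \cup Q_k$ that revisits a vertex contains a cycle, and deleting cycles (each of non-negative weight, since $w$ is conservative) turns it into a path without increasing weight — this is the routine cycle-stripping argument, but I must be careful that stripping cycles from one walk does not destroy disjointness from the other; I would do the stripping so that, on any vertex shared between the two walks, I retain a canonical choice. Next, the two walks $S_1,S_2$ (now paths) may still share vertices other than in $\{p_1,p_2\}$ or $\{q_1,q_2\}$. Here the key point is that any such shared vertex $x$ must lie on a $P$-part of one and a $Q$-part of the other (since the $P$'s are mutually permissively disjoint and so are the $Q$'s), and at such a crossing one can \emph{uncross}: swap the tails of $S_1$ and $S_2$ beyond $x$. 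Crucially, uncrossing does not change the multiset of edges used, hence preserves total weight; repeating at the "last'' crossing reduces the number of shared vertices, so after finitely many uncrossings we get permissively disjoint paths. This is where I expect to lean on the hypothesis that $Q_1,Q_2$ are locally cheapest and on Lemma~\ref{lem:closed-walk}/Lemma~\ref{lem:T-min-pathlength}: an uncrossing swap might create a new shortcut or a cheap detour through $T$, and amending it (via Observation~\ref{obs:locally-cheapest}) only decreases weight, which is fine for the claimed inequality "$\le$''.

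The main obstacle, I expect, is bookkeeping the interaction between the two cleanup steps — cycle-stripping for self-intersections and tail-swapping for mutual intersections — so that they terminate and so that each preserves the invariant "the $P$-edges and $Q$-edges still live on $\T_1$- and $\T_2$-avoiding sides respectively, and the two walks meet only where a $P$-segment of one overlaps a $Q$-segment of the other.'' A clean way to organize this is to define a potential (say, total number of edge-incidences where the two walks overlap, plus number of repeated vertices) and show each operation strictly decreases it while never increasing total weight; since every operation is local and touches only $O(1)$ subpaths, the whole procedure runs in linear time. I would also double-check the degenerate cases $p_1=p_2$, $q_1=q_2$, or $v_1,v_2$ coinciding with the $p$'s or $q$'s, where "permissively disjoint'' permits the shared endpoint — these should fall out of the same argument but deserve an explicit sentence.
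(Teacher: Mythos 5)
The proposal has a genuine gap, and it lies exactly where the lemma is subtle.

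First, a smaller issue: you state that you would argue, from (i) and (ii), that $P_1\cup P_2$ and $Q_1\cup Q_2$ can share only the vertices $v_1,v_2$. This is false. Conditions (i) and (ii) constrain the intersection with $T$ and the allocation of negative edges between $\T_1$ and $\T_2$, but they say nothing about ordinary (positive-weight) vertices outside $\T$: some $P_i$ and $Q_j$ may meet at arbitrarily many such vertices. You do later hedge and allow for intersections, so this is more a confusion than a fatal step, but the claim as written cannot be "nailed down'' as you hope.

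The fatal gap is the following. Your entire cleanup machinery — concatenation, cycle-stripping, and tail-swapping at shared vertices — only ever \emph{rearranges} edges already present in $P_1\cup P_2\cup Q_1\cup Q_2$; it never introduces a new edge. But there are configurations allowed by the hypotheses in which \emph{no} pair of permissively disjoint $(\{p_1,p_2\},\{q_1,q_2\})$-paths exists inside $P_1\cup P_2\cup Q_1\cup Q_2$ at all. Concretely: if the first $Q$-vertex hit by $P_1$ and the first $Q$-vertex hit by $P_2$ both lie on the same path, say $Q_1$, at vertices $y_1,y_2$ (the paper's Case~C), then both ``entry stubs'' $P_1[p_1,y_1]$ and $P_2[p_2,y_2]$ feed into $Q_1$, and $Q_2$ may be reachable in the union only through $v_2$ — which then forces the two candidate paths through a common vertex. (A tiny example: $P_1=p_1y_1v_1$, $P_2=p_2y_2v_2$, $Q_1=v_1y_1y_2q_1$, $Q_2=v_2q_2$; every $(p_1,\cdot)$- and $(p_2,\cdot)$-path in the union is forced through $y_2$.) To escape, the paper's proof routes one output path \emph{through the negative tree $T$}: Claim~\ref{clm:connection-in-T} produces a subpath $T[u,u']$ with $u$ on $Q_1$, $u'$ on $Q_2$, and interior disjoint from $Q_1\cup Q_2$, and the local-cheapness hypothesis on $(Q_1,Q_2)$ is exactly what guarantees that this bridge exists and is clean. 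This step adds edges of $T$ that are not in $P_1\cup P_2\cup Q_1\cup Q_2$, and it has no analogue in your plan; uncrossing and cycle-stripping cannot produce it. Your instinct that local-cheapness and the closed-walk lemma are needed is correct, but the place they are needed is to construct this tree bridge, not merely to bound the weight of an uncrossing.
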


\begin{proof}
W.l.o.g.\ we may assume that $P_i$ is a $(p_i,v_i)$-path 
and $Q_i$ is a $(v_i,q_i)$-path for both $i \in [2]$.
Let $y_i$ be the first vertex on~$P_i$ when traversed from~$p_i$ to~$v_i$ that is contained in~$V(Q_1 \cup Q_2)$.
We distinguish between three cases as follows.

{\bf Case A:} $y_1 \in V(Q_1)$ and $y_2 \in V(Q_2)$. 
In this case let $S_i=P_i[p_i,y_i] \cup Q_i[y_i,q_i]$
for $i \in [2]$. Observe that $S_1$ and $S_2$ are two permissively disjoint $(\{p_1,p_2\},\{q_1,q_2\})$-paths: 
this follows from the definition of~$y_1$ and $y_2$, and our assumptions on the disjointness of the paths~$P_1$ and $P_2$, as well as that of~$Q_1$ and~$Q_2$.
Note also that for any $i \in [2]$, the path~$S_i$ can be obtained from $P_i \cup Q_i$ by deleting the paths~$P_i[y_i,v_i]$ and $Q_i[v_i,y_i]$. Observe that $W_i=P_i[y_i,v_i] \cup Q_i[v_i,y_i]$ is a closed walk. 
Due to~(ii), no edge of~$E^-$ may appear both on~$P_i$ and on~$Q_i$. Hence, 
the walk~$W_i$ does not contain any edge of~$E^-$ more than once, and so by Lemma~\ref{lem:closed-walk} we get $w(W_i) \geq 0$.  This implies
$w(S_i) = w(P_i) +w(Q_i) -w(W_i) \leq w(P_i) + w(Q_i)$, proving the lemma for Case~A.

{\bf Case B:} $y_1 \in V(Q_2)$ and $y_2 \in V(Q_1)$. 
In this case let $S_i=P_i[s_i,y_i] \cup Q_{3-i}[y_i,q_{3-i}]$
for $i \in [2]$. One can observe as in the previous case that $S_1$ and $S_2$ are two permissively disjoint $(\{p_1,p_2\},\{q_1,q_2\})$-paths.
Note also that for any $i \in [2]$, the path~$S_i$ can be obtained from $P_i \cup Q_{3-i}$ by deleting the paths~$P_i[y_i,v_i]$ and $Q_{3-i}[v_{3-i},y_i]$. Hence we get
\[ S_1 \cup S_2= P_1 \cup P_2 \cup Q_1 \cup Q_2 \setminus
(P_1[y_1,v_1] \cup Q_1[v_1,y_2] \cup P_2[y_2,v_2] \cup Q_2[v_2,y_1]).
\]
Observe that $W=P_1[y_1,v_1] \cup Q_1[v_1,y_2] \cup P_2[y_2,v_2] \cup Q_2[v_2,y_1]$ is a closed walk.
Due to the disjointness of~$P_1$ and~$P_2$, we know
that $P_1$ shares no edge with~$P_2$; 
similarly, $Q_1$ shares no edge with~$Q_2$. 
Moreover, $P_1 \cup P_2$ shares no edge of~$E^-$ with $Q_1 \cup Q_2$ due to~(ii). Therefore, no edge of~$E^-$ may appear more than once on~$W$.
Thus by Lemma~\ref{lem:closed-walk} we get $w(W) \geq 0$.
 This implies
\[w(S_1)+w(S_2) = w(P_1)+w(P_2) + w(Q_1)+w(Q_2) -w(W) 
\leq w(P_1)+w(P_2) + w(Q_1)+w(Q_2),\] proving the lemma for Case~B.

{\bf Case C:} If $y_1$ and $y_2$ both lie on the same path from~$Q_1$ and~$Q_2$. By symmetry, we may assume that they both lie on~$Q_1$. Suppose that $y_1$ comes before~$y_2$ when $Q_1$ is traversed starting from~$v_1$;
the case when $y_2$ precedes $y_1$ can be dealt with in an analogous manner. 

\begin{claim}
\label{clm:connection-in-T}
We can find a path $T[u,u']$ in linear time such that~$u$ lies on~$Q_1 \setminus Q_1[y_2,q_1]$, $u'$ lies on~$Q_2$, and no other vertex of~$T[u,u']$ appears on~$Q_1 \cup Q_2$.  
\end{claim}

\begin{claimproof}
Recall that $y_1 \in V(P_1)$ implies that either $y_1=v_1$ or $y_1 \notin V(T)$. 

First assume $y_1=v_1$, and consider the path $T[v_1,v_2]$. Since $Q_1$ and~$Q_2$ may only share $q_1$ and~$q_2$ (in case they coincide), but $y_2 \notin V(T)$, it must be the case that 
when travelling through $T[v_1,v_2]$ from $v_1$, we exit $Q_1$ at some vertex~$u$, meaning that the edge following~$u$ on this path does not belong to~$Q_1$. Proceeding within~$T[v_1,v_2]$ towards $v_2$, at some point we enter a vertex of $Q_2$, let $u'$ denote this vertex. 
Observe that $T[u,u']$ contains no vertex of~$Q_2$ as an inner vertex by definition. Moreover, it also cannot contain a vertex of~$Q_1$ as an inner vertex. Indeed, supposing that some vertex $z \in V(Q_1)$ is an inner vertex of~$T[u,u']$, the path $T[u,z]$ would be a shortcut, contradict our assumption that $Q_1$ and~$Q_2$ are locally cheapest. 
Note also that by $y_2 \notin V(T)$ we also know that $u \notin Q_1[y_2,q_1]$.
Hence, the claim holds.
 
Assume now $y_1 \notin V(T)$. Then we start by walking on~$Q_1$ from~$y_1$ towards $v_1$ until we reach a vertex of~$T$ (note that, by our assumption, $y_2$ is not on this path, and note also $v_1 \in V(T)$), and from that point on, we travel towards $v_2 \in V(T)$ within~$T$. Following the procedure described in the previous case, we can define $u$ and $u'$ as the vertices where we exit $Q_1$ and enter $Q_2$, respectively, and we can argue $T[u,u']$ is disjoint from~$Q_1 \cup Q_2$ apart from its endpoints. Again, notice that $y_2$ cannot lie on the walk from~$y_1$ towards $u$, and so $u$ lies on $Q_1 \setminus Q_1[y_2,q_1]$ and the claim holds.
\end{claimproof}

\smallskip
Using Claim~\ref{clm:connection-in-T}, we can now define $S_1$ and $S_2$. Let $T[u,u']$ be the path guaranteed by the claim. We distinguish between two cases, depending on the place of~$u$ on~$Q_1$; see Figure~\ref{fig:uncrossing} for an illustration.

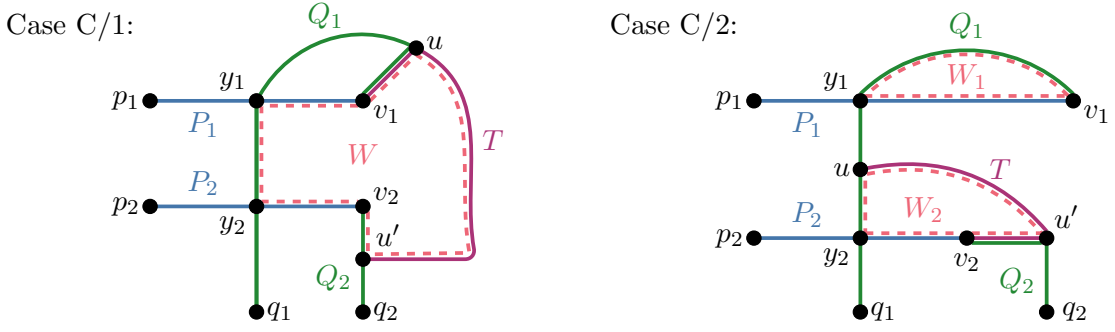
\begin{figure}
  \centering
  \begin{subfigure}[b]{0.5\textwidth}
    \centering  
    \begin{tikzpicture}[xscale=0.7, yscale=0.7]

  \node (C1) at (-3.5,3.4) {Case C/1:};
  \node[draw, circle, fill=black, inner sep=1.7pt] (y2) at (0, 0) {};
  \node[draw, circle, fill=black, inner sep=1.7pt] (y1) at (0, 2) {};  
  \node[draw, circle, fill=black, inner sep=1.7pt] (p2) at (-2, 0) {};
  \node[draw, circle, fill=black, inner sep=1.7pt] (p1) at (-2, 2) {};
  \node[draw, circle, fill=black, inner sep=1.7pt] (v2) at (2, 0) {};
  \node[draw, circle, fill=black, inner sep=1.7pt] (v1) at (2, 2) {};
  \node[draw, circle, fill=black, inner sep=1.7pt] (u) at (3, 3) {};
  \node[draw, circle, fill=black, inner sep=1.7pt] (q1) at (0, -2) {};
  \node[draw, circle, fill=black, inner sep=1.7pt] (q2) at (2, -2) {};
  \node[draw, circle, fill=black, inner sep=1.7pt] (u') at (2, -1) {};
  \coordinate (h) at (4, -0.88) {};
  \coordinate (h') at (4.12, -1) {};

  \node[left] at (p1) {$p_1$};
  \node[left] at (p2) {$p_2$};
  \node[above left] at (y1) {$y_1$};
  \node[below left] at (y2) {$y_2$};
  \node[right] at (q2) {$q_2$};  
  \node[right] at (q1) {$q_1$};    
  \node[right, yshift=-5pt] at (v1) {$v_1$};    
  \node[right, yshift=4pt] at (v2) {$v_2$};  
  \node[above right, yshift=1pt, xshift=1pt] at (u') {$u'$};    
  \node[right, yshift=2pt] at (u) {$u$};      

  \draw[line width=1.4pt, blue] (p1) to node[midway,below] {$P_1$} (y1);
  \draw[line width=1.4pt, blue] (y1) to (v1);  
  \draw[line width=1.4pt, blue] (p2) to node[midway,above] {$P_2$} (y2);  
  \draw[line width=1.4pt, blue] (y2) to (v2);
  \draw[line width=1.4pt, green] (v1.north) to (u.west);       
  \draw[line width=1.4pt, green] (q1) to (y2);    
  \draw[line width=1.4pt, green] (y2) to (y1);    
  \draw[line width=1.4pt, green] (q1) to (y2);    
  \draw[line width=1.4pt, green] (y2) to (y1);    
  \draw[line width=1.4pt, green] (y1) to[in=155, out=60] node[midway,above] {$Q_1$} (u);    
  \draw[line width=1.4pt, green] (v2) to (u');    
  \draw[line width=1.4pt, green] (u') to node[pos=0.3,left] {$Q_2$} (q2);      

  \draw[line width=1.4pt, treecolor, rounded corners] (v1) to (u) to[out=-30, in=100] node[midway,right] {$T$} (h') to (u');

  \draw[line width=1.4pt, loopcolor, dashed] (v1.east) to (u.south) to[out=-30, in=100] (h) to (u'.north east);    
  \draw[line width=1.4pt, loopcolor, dashed] (v1.south west) to (y1.south east) to (y2.north  east) to (v2.north west);    
  \draw[line width=1.4pt, loopcolor, dashed] (v2.south east) to (u'.north east);    
  
  \node[draw, circle, fill=black, inner sep=1.9pt] (y2) at (0, 0) {};
  \node[draw, circle, fill=black, inner sep=1.9pt] (y1) at (0, 2) {};  
  \node[draw, circle, fill=black, inner sep=1.9pt] (p2) at (-2, 0) {};
  \node[draw, circle, fill=black, inner sep=1.9pt] (p1) at (-2, 2) {};
  \node[draw, circle, fill=black, inner sep=1.9pt] (v2) at (2, 0) {};
  \node[draw, circle, fill=black, inner sep=1.9pt] (v1) at (2, 2) {};
  \node[draw, circle, fill=black, inner sep=1.9pt] (u) at (3, 3) {};
  \node[draw, circle, fill=black, inner sep=1.9pt] (q1) at (0, -2) {};
  \node[draw, circle, fill=black, inner sep=1.9pt] (q2) at (2, -2) {};
  \node[draw, circle, fill=black, inner sep=1.9pt] (u') at (2, -1) {};
  \node (W) at (2,1) [loopcolor] {$W$};

\end{tikzpicture}
  \end{subfigure} 
  \hspace{-5pt}
  \begin{subfigure}[b]{0.49\textwidth}
    \centering  
    \begin{tikzpicture}[xscale=0.7, yscale=0.7]

  \node (C2) at (-3.5,3.4) {Case C/2:};
  \node[draw, circle, fill=black, inner sep=1.7pt] (y2) at (0, -0.6) {};
  \node[draw, circle, fill=black, inner sep=1.7pt] (y1) at (0, 2) {};  
  \node[draw, circle, fill=black, inner sep=1.7pt] (p2) at (-2, -0.6) {};
  \node[draw, circle, fill=black, inner sep=1.7pt] (u)  at (0, 0.7) {};
  \node[draw, circle, fill=black, inner sep=1.7pt] (p1) at (-2, 2) {};
  \node[draw, circle, fill=black, inner sep=1.7pt] (v2) at (2, -0.6) {};
  \node[draw, circle, fill=black, inner sep=1.7pt] (q2) at (3.5, -2) {};
  \node[draw, circle, fill=black, inner sep=1.7pt] (u') at (3.5, -0.6) {};
  \node[draw, circle, fill=black, inner sep=1.7pt] (v1) at (4, 2) {};
  \node[draw, circle, fill=black, inner sep=1.7pt] (q1) at (0, -2) {};

  \node[left] at (p1) {$p_1$};
  \node[left] at (p2) {$p_2$};
  \node[above left] at (y1) {$y_1$};
  \node[below left] at (y2) {$y_2$};
  \node[right, xshift=2pt] at (q2) {$q_2$};  
  \node[right] at (q1) {$q_1$};    
  \node[right, yshift=-5pt] at (v1) {$v_1$};    
  \node[below, yshift=-2pt] at (v2) {$v_2$};  
  \node[above right, yshift=0pt, xshift=-1pt] at (u') {$u'$};    
  \node[left] at (u) {$u$};      

  \draw[line width=1.4pt, blue] (p1) to node[midway,below] {$P_1$} (y1);
  \draw[line width=1.4pt, blue] (y1) to (v1);  
  \draw[line width=1.4pt, blue] (p2) to node[midway,above] {$P_2$} (y2);  
  \draw[line width=1.4pt, blue] (y2) to (v2);
  \draw[line width=1.4pt, green] (v1.north) to [bend right=45] node[midway,above] {$Q_1$} (y1.north);
  \draw[line width=1.4pt, green] (y1) to (u) to (y2) to (q1);    
  \draw[line width=1.4pt, green] (v2.south east) to (u'.south west);
  \draw[line width=1.4pt, green]  (u') to node[pos=0.6,left] {$Q_2$} (q2);      

  \draw[line width=1.4pt, treecolor] (v2) to (u');
  \draw[line width=1.4pt, treecolor] (u'.north) to [bend right=30] node [pos=0.3,above] {$T$} (u);

  \draw[line width=1.4pt, loopcolor, dashed] (v1.north west) to [bend right=45] (y1.north east) to node [midway, above, yshift=-1.5pt] {$W_1$} (v1.north west);    
  \draw[line width=1.4pt, loopcolor, dashed] (y2.north east) to node [pos=0.6, above] {$W_2$} (v2.north west);
  \draw[line width=1.4pt, loopcolor, dashed] (v2.north east) to (u'.north west) to [bend right=30] (u.south east) to (y2.north east);

  \node[draw, circle, fill=black, inner sep=1.9pt] (y2) at (0, -0.6) {};
  \node[draw, circle, fill=black, inner sep=1.9pt] (y1) at (0, 2) {};  
  \node[draw, circle, fill=black, inner sep=1.9pt] (p2) at (-2, -0.6) {};
  \node[draw, circle, fill=black, inner sep=1.9pt] (u)  at (0, 0.7) {};
  \node[draw, circle, fill=black, inner sep=1.9pt] (p1) at (-2, 2) {};
  \node[draw, circle, fill=black, inner sep=1.9pt] (v2) at (2, -0.6) {};
  \node[draw, circle, fill=black, inner sep=1.9pt] (q2) at (3.5, -2) {};
  \node[draw, circle, fill=black, inner sep=1.9pt] (u') at (3.5, -0.6) {};
  \node[draw, circle, fill=black, inner sep=1.9pt] (v1) at (4, 2) {};
  \node[draw, circle, fill=black, inner sep=1.9pt] (q1) at (0, -2) {};

\end{tikzpicture}  
  \end{subfigure}
  \caption{Illustration for Case C in the proof of Lemma~\ref{lem:uncrossing}. 
  Paths $P_1$ and~$P_2$ are shown in \textcolor{blue}{\textbf{blue}}, paths $Q_1$ and~$Q_2$ in \textcolor{green}{\textbf{green}}, and edges of~$T$ in \textcolor{purple}{\textbf{purple}}.  Closed walks~$W$ (in Case C/1), $W_1$ and~$W_2$ (in Case C/2) are depicted with \textcolor{coralred}{\textbf{coral red}}, dashed lines. 
  }
  \label{fig:uncrossing}
\end{figure}

{\bf Case C/1:}
Assume that $u$ lies on $Q_1[v_1,y_1]$. Define
\begin{align}
\label{eq:uncrossing-defS1}
S_1 &=P_1[p_1,y_1] \cup Q_1[y_1,u] \cup T[u,u'] \cup Q_2[u',q_2], \\
\label{eq:uncrossing-defS2}
S_2 &=P_2[p_2,y_2] \cup Q_1[y_2,q_1].
\end{align} 
Observe that $S_1$ and $S_2$ are two permissively disjoint $(\{p_1,p_2\},\{q_1,q_2\})$-paths: 
this follows from the definition of~$y_1$ and $y_2$, our assumptions on the disjointness of the paths~$P_1$ and $P_2$, as well as that of~$Q_1$ and~$Q_2$, from the properties of~$T[u,u']$, and from our assumption~(i).

Consider the closed walk 
\[W:=P_1[y_1,v_1] \cup Q_1[v_1,u] \cup T[u,u'] \cup Q_2[u',v_2] \cup P_2[v_2,y_2] \cup
Q_1[y_2,y_1]. \]
Note that $W$ does not contain any edge of~$E^-$ more than once due to our assumption~(ii), the properties of~$T[u,u']$, and the disjointness of~$P_1$ and $P_2$, as well as that of $Q_1$ and~$Q_2$.
Hence we get $w(W)\geq 0$ by Lemma~\ref{lem:closed-walk}. 
Observe that Equations~\ref{eq:uncrossing-defS1} and~\ref{eq:uncrossing-defS2} yield
\begin{align*}
S_1 \cup S_2 &=  (P_1 \cup P_2 \cup Q_1 \cup Q_2) \setminus  (W \setminus T[u,u']) \cup T[u,u'].
\end{align*}
By $w(W) \geq 0$ and $w(T[u,u'])<0$ this implies 
$w(S_1)+w(S_2) <(P_1)+w(P_2) + w(Q_1)+w(Q_2)$, proving the lemma for Case C/1.

{\bf Case C/2:}
Assume now that $u$ lies on $Q_1[y_1,y_2]$.\footnote{The reader might observe that the proof of Claim~\ref{clm:connection-in-T} implies that in this case only $v_1=y_1$ is possible, but we do not use this fact in the proof of Lemma~\ref{lem:uncrossing}.}
Define $S_1$ and $S_2$ exactly as in Equations~\ref{eq:uncrossing-defS1} and~\ref{eq:uncrossing-defS2}; again, $S_1$ and $S_2$ are two permissively disjoint $(\{p_1,p_2\},\{q_1,q_2\})$-paths.
Consider the closed walks 
\begin{align*}
W_1 &=P_1[y_1,v_1] \cup Q_1[v_1,y_1], \\
W_2 &= Q_1[u,y_2] \cup P_2[y_2,v_2] \cup Q_2[v_2,u'] \cup T[u',u],
\end{align*}
and observe that, owing to the same reasons as in the previous case, neither~$W_1$ nor~$W_2$ contains an edge of~$E^-$ more than once.
Hence, we get $w(W_1 \cup W_2)\geq 0$ by Lemma~\ref{lem:closed-walk}.
Observe that Equations~\ref{eq:uncrossing-defS1} and~\ref{eq:uncrossing-defS2} yield
\begin{align*}
S_1 \cup S_2 &=  (P_1 \cup P_2 \cup Q_1 \cup Q_2) \setminus  (W_1 \cup W_2 \setminus T[u,u']) \cup T[u,u'].
\end{align*}
By $w(W_1 \cup W_2) \geq 0$ and $w(T[u,u'])<0$ this implies 
$w(S_1)+w(S_2) < w(P_1)+w(P_2) + w(Q_1)+w(Q_2)$, proving the lemma for Case C/2.
\end{proof}

\section{Polynomial-Time Algorithm for Constant \texorpdfstring{$|\mathcal{T}|$}{T}}
\label{sec:negtree}
This section contains the algorithm proving our main result, Theorem~\ref{thm:DISP-main}.
Let $(G,w,s,t)$ be our instance of \DISP{} with input graph~$G=(V,E)$, 
and assume that the set~$E^-$ of negative edges spans $c$ trees in~$G$ for some constant~$c$.
We present a polynomial-time algorithm that computes a solution for $(G,w,s,t)$ with minimum total weight, 
or correctly concludes that no solution exists for~$(G,w,s,t)$. 
The running time of our algorithm is $O(n^{2c+9})$ where $n=|V|$, 
so in the language of parameterized complexity, our algorithm is in~$\mathsf{XP}$ with respect to the parameter~$c$.

In Section~\ref{sec:mainAlg} we present the main ideas and definitions necessary for our algorithm, 
and provide its high-level description together with some further details. 
We will distinguish between so-called \emph{separable} and \emph{non-separable} solutions. 
Finding an optimal and separable solution will be relatively easy, requiring extensive guessing but 
only standard techniques for computing minimum-cost flows.
By contrast, finding an optimal but non-separable solution is much more difficult. 
Therefore, in Section~\ref{sec:properties} we collect useful properties of optimal, non-separable solutions. 
The observations of Section~\ref{sec:properties} form the basis for an 
important subroutine necessary for finding optimal, non-separable solutions;
this subroutine is presented in Section~\ref{sec:partsol}. 
We close the section by providing a detailed pseudo-code description of our algorithm, 
and proving Theorem~\ref{thm:DISP-main} in Section~\ref{sec:permdis}.

\subsection{The Algorithm}
\label{sec:mainAlg}
We distinguish between two types of solutions for our instance $(G,w,s,t)$ of \DISP{}. 

\begin{definition}[{\bf Separable solution}]
\label{def:separable}
Let $(P_1,P_2)$ be a solution for $(G,w,s,t)$. We say that $P_1$ and $P_2$ are  \emph{separable}, if either
\begin{itemize}
\item they are \emph{not in contact}, i.e., there is no tree~$T \in \T$ that shares distinct vertices with both~$P_1$ and~$P_2$, or
\item there is a \emph{unique} tree~$T \in \T$ such that $P_1$ and~$P_2$ are in contact at~$T$, 
but the intersection of~$T$ with both $P_1$ and~$P_2$ is a path, possibly containing only a single vertex;
\end{itemize}
otherwise they are \emph{non-separable}.
\end{definition}

In Section~\ref{sec:sepsol} we show how to find an optimal, separable solution, whenever such a solution exists for~$(G,w,s,t)$.
Section~\ref{sec:nonsep-highlevel} deals with the case when we need to find an optimal, non-separable solution.
Our algorithm for the latter case is more involved, and relies on a subroutine that is based on dynamic programming and is developed 
throughout Sections~\ref{sec:properties} and~\ref{sec:partsol}.
The existence of this subroutine  (namely, Algorithm~\ref{alg:PermDisj}) is stated in Corollary~\ref{cor:perm-disjoint-paths}.

\subsubsection{Finding Separable Solutions}
\label{sec:sepsol}
Suppose that $(P_1,P_2)$ is a minimum-weight solution for $(G,w,s,t)$ that is separable.
The following definition establishes conditions when we are able to apply a simple strategy 
for finding a minimum-weight solution using well-known flow techniques.

\begin{definition}[{\bf Strongly separable solution}]
\label{def:stronglyseparable}
Let $(P_1,P_2)$ be a solution for $(G,w,s,t)$. We say that $P_1$ and $P_2$ are \emph{strongly separable}, 
if they are separable and they are either not in contact at any tree of~$\T$, or 
they are contact at a tree of~$\T$ that contains~$s$ or~$t$.
\end{definition}

Suppose that $P_1$ and $P_2$ are either not in contact, 
or they are in contact at a tree of~$\T$ that contains~$s$ or~$t$.
Let $\T^{\not\ni s,t}$ denote the set of all trees in~$\T$ that contain neither~$s$ nor $t$.
For each $T \in \T^{\not\ni s,t}$ that shares a vertex with~$P_i$ for some~$i \in [2]$, 
we define $a_T$ as the first vertex on~$P_i$ (when traversed from~$s$ to~$t$) that is contained in~$T$;
note that $T$ cannot share vertices with both~$P_1$ and $P_2$ as they are strongly separable, so $P_i$ is uniquely defined.

Our approach is the following: we guess the vertex~$a_T$ for each~$T \in \T^{\not\ni s,t}$, 
and then compute a minimum-cost flow in an appropriately defined network. 
More precisely, for each possible choice of vertices $Z=\{z_T \in V(T): T \in \T^{\not\ni s,t}\}$,
we build a network~$N_Z$ as follows. 

\begin{definition}[{\bf Flow network $N_Z$ for strongly separable solutions}]
\label{def:N_z} Given a set~$Z \subseteq V$ such that $Z \cap V(T)\{z_T\}$ for each $T \in \T^{\not\ni s,t}$, we create~$N_Z$ as follows.
We direct each non-negative edge in~$G$ in both directions. 
Then for each $T \in \T^{\not\ni s,t}$, we direct the edges of~$T$ away from~$z_T$.\footnote{Directing a tree~$T$ away from a vertex~$z \in V(T)$ 
means that an edge~$uv$ in~$T$ becomes an arc~$(u,v)$ if and only if $T[z,u]$ has fewer edges than $T[z,v]$;
directing $T$ towards~$z$ is defined analogously.}
If some~$T \in \T$ contains~$s$, then we direct all edges of~$T$ away from~$s$; 
similarly, if some~$T \in \T$ contains~$t$, then we direct all edges of~$T$ towards~$t$.
We assign a capacity of~$1$ to each arc and to each vertex\footnote{The standard network flow model can be adjusted by well-known techniques to allow for vertex capacities.} in the network except for~$s$ and~$t$, 
and we retain the cost function~$w$ (meaning that we define $w(\overrightarrow{e})$ as $w(e)$ for any arc~$\overrightarrow{e}$ obtained by directing some edge~$e$). 
We let $s$ and~$t$ be the source and the sink in~$N_Z$, respectively.
\end{definition}


\begin{lemma}
\label{lem:separablesol-noncontact}
If there exists a strongly separable solution for~$(G,w,s,t)$ with weight~$k$, 
then there exists a flow of value~2 having cost~$k$ in the network~$N_Z$ for some choice of~$Z \subseteq V$ 
containing exactly one vertex from each tree in~$\T^{\not\ni s,t}$. 
Conversely, a flow of value~2 and cost~$k$ in the constructed network~$N_Z$ for some $Z$ 
yields a solution for~$(G,w,s,t)$ with weight at most~$k$. 
\end{lemma}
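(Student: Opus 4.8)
The plan is to prove both directions of the equivalence by translating between openly disjoint $(s,t)$-paths in $G$ and integral flows of value $2$ in $N_Z$, using the standard correspondence between integral flows with unit arc/vertex capacities and collections of (openly) vertex-disjoint paths, together with conservativeness to control the cost.

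\medskip
\noindent
\textbf{Forward direction.}
Suppose $(P_1,P_2)$ is a strongly separable solution of weight $k$. First I would fix the choice of $Z$: for each $T\in\T^{\not\ni s,t}$ that meets some $P_i$, set $z_T := a_T$ (the first vertex of $P_i$ on $T$, well-defined since strong separability forbids $T$ from meeting both paths); for trees not used by either path, pick $z_T$ arbitrarily. I then argue that, after orienting edges as in Definition~\ref{def:N_z}, each $P_i$ can be oriented from $s$ to $t$ so that every edge it uses becomes an arc of $N_Z$. For non-negative edges this is immediate since they are bidirected. For an edge of a tree $T\in\T^{\not\ni s,t}$ lying on $P_i$: since $z_T=a_T$ is the first vertex of $P_i$ on $T$, and by Corollary~\ref{cor:locally-cheapest} the pair $(P_1,P_2)$ is locally cheapest, the portion of $P_i$ on $T$ forms a subtree hanging off $z_T$, so traversing $P_i$ from $s$ to $t$ moves away from $z_T$ on $T$-edges, matching the orientation. (This is the step I would write most carefully; see the obstacle paragraph.) For a tree $T$ containing $s$, all its edges point away from $s$, and again $P_i$ restricted to $T$ emanates from $s$, so it agrees; symmetrically for $t$. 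Sending one unit of flow along each oriented $P_i$ gives a feasible flow of value $2$ — the arc capacities are respected since $P_1$ and $P_2$ share no edge, and the vertex capacities are respected since $P_1,P_2$ are openly disjoint and only $s,t$ are exempt — of cost exactly $w(P_1)+w(P_2)=k$.

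\medskip
\noindent
\textbf{Reverse direction.}
Conversely, suppose $N_Z$ admits a flow of value $2$ and cost $k$ for some $Z$. Since all capacities are integral, I may take an integral flow, and since $N_Z$ may contain pairs of opposite arcs (from bidirecting non-negative edges) creating $2$-cycles, I first cancel all such opposite-arc pairs; this does not change the value and does not increase the cost, because a non-negative edge $e$ contributes $2w(e)\ge 0$ when used in both directions. The resulting integral flow decomposes into two arc-disjoint $(s,t)$-walks together with a collection of directed cycles; discarding the cycles — each of which has non-negative cost, since a directed cycle in $N_Z$ projects to a closed walk in $G$ using no negative edge more than once (negative edges live only in the trees, each tree is oriented acyclically, so a directed cycle uses at most the bidirected non-negative edges, hence has cost $\ge 0$ by Lemma~\ref{lem:closed-walk}) — leaves a flow of the same value and cost at most $k$ that decomposes into two arc-disjoint $(s,t)$-walks $W_1,W_2$. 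Using the unit vertex capacities, $W_1$ and $W_2$ are in fact internally vertex-disjoint; shortcutting each $W_i$ to an $(s,t)$-path $P_i$ only removes closed subwalks of non-negative weight (again Lemma~\ref{lem:closed-walk}, noting each $W_i$ uses no edge of $G$ more than once after the opposite-pair cancellation, hence no negative edge more than once), so $w(P_1)+w(P_2)\le k$ and $(P_1,P_2)$ is a solution for $(G,w,s,t)$.

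\medskip
\noindent
\textbf{Main obstacle.}
The delicate point is the forward direction's claim that an edge of $P_i$ lying on some tree $T$ is consistently oriented away from $z_T$. The worry is that $P_i$ could leave $T$, wander through the rest of $G$, and re-enter $T$, so that its $T$-edges need not form a path rooted at $z_T=a_T$; then some $T$-edge of $P_i$ might be traversed toward $z_T$. Here I would invoke the structural lemmas: if $P_i$ revisits $T$ after leaving it, Lemma~\ref{lem:childcomp} forces $P_{3-i}$ to meet $T$ as well, contradicting strong separability (which requires that if the paths are in contact at a tree, that tree contains $s$ or $t$, whereas $T\in\T^{\not\ni s,t}$). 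Hence $P_i\cap T$ is a single subpath of $T$; and since $a_T=z_T$ is its first vertex, $P_i$ enters $T$ at $z_T$ and traverses a path of $T$ away from $z_T$, so every such edge is correctly oriented. For the trees containing $s$ or $t$ a parallel argument applies, using that the contact tree is unique and that $P_i$ restricted to it is a path by separability. Once this orientation claim is nailed down, the rest is the routine flow-decomposition bookkeeping sketched above.
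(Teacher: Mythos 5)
Your overall strategy — fix $Z=\{a_T\}$, orient each $P_i$ from $s$ to $t$, check that every $T$-edge is oriented away from $z_T$, and conclude the reverse direction by flow decomposition plus conservativeness — is the same as the paper's. The reverse direction is fine (the parenthetical that "a directed cycle uses at most the bidirected non-negative edges" is not actually true, since a cycle may well traverse tree-arcs; what you need, and what saves you, is only that a cycle in $N_Z$ traverses each arc, hence each negative edge, at most once, so Lemma~\ref{lem:closed-walk} applies).

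The gap is in your "main obstacle" argument for the forward direction. You invoke Lemma~\ref{lem:childcomp} to rule out $P_i$ leaving and re-entering $T$, but that lemma only applies when $P_i$ visits some \emph{other} negative tree $T'$ between its two visits to $T$: its hypothesis requires a vertex $v'_2\in V(T')$ lying strictly between $v_1,v_3\in V(T)$ on the path. It says nothing about the case where $P_i$ exits $T$ at $u$, detours through non-negative edges only, and re-enters $T$ at $v$. In that case Lemma~\ref{lem:childcomp} gives you nothing, and a wrongly-oriented $T$-edge could a priori still occur. The correct argument is shorter and bypasses Lemma~\ref{lem:childcomp} entirely: since $(P_1,P_2)$ is a minimum-weight solution, it is locally cheapest by Corollary~\ref{cor:locally-cheapest}; since $P_{3-i}$ avoids $T$ (strong separability, $T\in\T^{\not\ni s,t}$), any two $T$-vertices $u,v$ on $P_i$ with $T[u,v]\not\subseteq P_i$ would yield a shortcut on $T[u,v]$, so $T[u,v]\subseteq P_i$; thus $V(P_i)\cap V(T)$ induces a path $T[a_T,b_T]$ which $P_i$ traverses consecutively from $a_T$ to $b_T$, and every $T$-edge of $P_i$ points away from $z_T=a_T$. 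You have the local-cheapness ingredient in hand already (you cite Corollary~\ref{cor:locally-cheapest} earlier in your sketch), but the step from it to "the $T$-edges of $P_i$ form a path rooted at $z_T$" needs this shortcut argument rather than Lemma~\ref{lem:childcomp}. Note also that local cheapness is only guaranteed for minimum-weight solutions, so — like the paper — you should state that you are proving the lemma for the minimum-weight strongly separable solution, which is all that is needed when the lemma is applied.
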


\begin{proof}
Suppose first that $(P_1,P_2)$ is a minimum-weight solution for our instance, and suppose that
either $P_1$ and~$P_2$ are not in contact, or they are in contact at a tree of~$\T$ that contains~$s$ or~$t$.
Let $\overrightarrow{P_1}$ and $\overrightarrow{P_2}$ denote the paths obtained by directing the edges $P_1$ and $P_2$, respectively, 
as they are traversed from~$s$ to~$t$. 

First, since $P_1$ and $P_2$ are separable,
we know that the intersection of each of these paths with any tree~$T \in \T$ either consists of at most one vertex, 
or it is a single path within~$T$.
If some $T \in \T$ contains~$s$ (or $t$), then it is clear that the intersection of~$P_1$ and~$P_2$ with~$T$ consists of two subpaths of~$T$ 
starting at~$s$ (ending at~$t$, respectively);
hence, they are contained as directed paths in~$N_Z$ for any choice of~$Z$, because all edges of~$T$ are directed away from~$s$
(towards~$t$, respectively).
Now, consider a tree~$T \in \T^{\not\ni s,t}$ that shares a vertex with $P_1$ or~$P_2$. 
As $P_1$ and~$P_2$ cannot be in contact at~$T$ as they are strongly separable, only one of~$P_1$ and~$P_2$ may share a vertex with~$T$.
Let $P_i$ denote this path, and $a_T$ the first vertex on~$P_i$ contained in~$T$.
Then directing the edges of~$T$ away from~$a_T$ yields an orientation in which the intersection of~$P_i$ with $T$ 
is present as a directed path.
Summarizing all this, it follows that 
$\overrightarrow{P_1}$ and $\overrightarrow{P_2}$ are paths contained in the network~$N_Z$ defined by~$Z^\star=\{a_T: T \in \T^{\not\ni s,t}\}$.
Thus, there exists a flow of value~2 with cost~$w(P_1)+w(P_2)$ in the network~$N_{Z^\star}$.

For the other direction, assume that there is a flow of value~2  in $N_Z$ for some vertex set~$Z$ 
containing a vertex from each tree in~$\T^{\not\ni s,t}$. 
Such a flow must flow through two openly disjoint path from~$s$ to~$t$, due to the unit capacities in~$N_Z$.
Moreover, by the conservativeness of~$w$ (and since each edge of~$T$ is directed in only one direction), 
there are no cycles with negative total cost in the network.
This implies that the cost of these two paths is at most the cost of our flow. 
Hence, the lemma follows.  
\end{proof}

Next, we show how to deal with the case when a minimum-weight solution $(P_1,P_2)$ is separable, but not strongly separable;
then there is a unique tree~$T \in \T$ at which $P_1$ and~$P_2$ are in contact.
In such a case, we can simply delete an edge from~$T$ in a way that paths~$P_1$ and~$P_2$ 
cease to be in contact in the resulting instance.
This way, we can reduce our problem to the case when there is a strongly separable optimal solution;
note, however, that the number of trees spanned by the negative edges (our parameter~$c$) increases by~$1$.

\begin{lemma}
\label{lem:separablesol-contact}
If there exists a separable, but not strongly separable solution for~$(G,w,s,t)$ with weight~$k$,
then there exists an edge~$e \in E^-$ such that setting $G'=G-e$ and $E'=E \setminus \{e\}$, 
the instance~$(G',w_{|E'},s,t)$ admits a strongly separable solution with weight at most~$k$.
Conversely, a solution for~$(G',w_{|E'},s,t)$ is also a solution for~$(G,w,s,t)$ with the same weight. 
\end{lemma}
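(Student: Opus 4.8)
The plan is to prove the two directions of Lemma~\ref{lem:separablesol-contact} separately, with the forward direction being the substantive one. Let $(P_1,P_2)$ be a separable but not strongly separable solution of weight~$k$. By Definition~\ref{def:separable}, there is a unique tree~$T \in \T$ at which $P_1$ and~$P_2$ are in contact, and by Definition~\ref{def:stronglyseparable} this tree contains neither~$s$ nor~$t$. Moreover, since $(P_1,P_2)$ is separable, the intersection $P_i \cap T$ is a single (possibly trivial) subpath of~$T$ for each~$i \in [2]$; call these subpaths $R_1$ and~$R_2$. The key point is that $R_1$ and~$R_2$ are vertex-disjoint subpaths of the tree~$T$ (they are disjoint because $P_1$ and~$P_2$ are openly disjoint and $s,t \notin V(T)$), so the unique path in~$T$ connecting them — call it $T[x,y]$ with $x \in V(R_1)$, $y \in V(R_2)$ — is internally disjoint from $R_1 \cup R_2$ and has at least one edge. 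I would pick $e$ to be the first edge of this connecting path~$T[x,y]$ incident to~$x$; since $e \in E(T) \subseteq E^-$, we have $e \in E^-$ as required.

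Next I would verify that, after deleting~$e$, the pair $(P_1,P_2)$ is still a valid solution for the new instance and is now strongly separable. Deleting~$e$ does not affect $P_1$ or~$P_2$ themselves: neither path uses~$e$, because $e$ lies strictly between $R_1$ and~$R_2$ and each $P_i$ meets~$T$ only within~$R_i$. Hence $(P_1,P_2)$ remains a pair of openly disjoint $(s,t)$-paths in~$G' = G-e$ of the same total weight~$k$. It remains to check separability in the new instance. In $G'$, the tree~$T$ splits into (at least) two trees $T^x \ni x$ and $T^y \ni y$; every other tree of~$\T$ is untouched. I would argue that $P_1$ and~$P_2$ are no longer in contact at any tree of the new negative forest: the only candidates are $T^x$ and~$T^y$, but $R_1 \subseteq T^x$ and, by the choice of~$e$ as the first edge of $T[x,y]$ at~$x$, no vertex of~$P_2$ lies on~$T^x$ (any such vertex would lie on $R_2 \subseteq T^y$, which is separated from~$R_1$ in~$T$ by~$e$); symmetrically, $P_1$ does not meet~$T^y$ beyond... — more carefully, $x$ is the only vertex of $R_1$ in $T^x$ that could be shared, but $R_1$ is a path entirely in one side. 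The cleanest formulation is: since $e$ lies on the path of~$T$ between $R_1$ and~$R_2$, deleting it places $R_1$ and~$R_2$ in different components, so no tree of $G'[E^- \setminus\{e\}]$ contains vertices of both $P_1$ and~$P_2$; thus $(P_1,P_2)$ is not in contact at any tree, hence (vacuously) strongly separable in~$(G', w_{|E'}, s, t)$, with weight~$k$.

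For the converse direction, the argument is immediate: $G'$ is a subgraph of~$G$ on the same vertex set, and $w_{|E'}$ agrees with~$w$ on all retained edges; therefore any pair of openly disjoint $(s,t)$-paths in~$G'$ is also a pair of openly disjoint $(s,t)$-paths in~$G$, with exactly the same total weight. (One should also note in passing that $w_{|E'}$ is conservative on~$G'$, since deleting an edge cannot create a negative cycle, so $(G', w_{|E'}, s, t)$ is a legitimate instance of \DISP{}.)

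The main obstacle I anticipate is purely a bookkeeping one: pinning down precisely why, after deleting the single edge~$e$, the paths cease to be in contact at \emph{every} tree of the modified forest rather than merely at~$T$ itself. This requires being careful about how $T$ breaks into components and using the fact that $e$ was chosen on the connecting path between the two subpaths $R_1, R_2$ — one must rule out the degenerate possibilities (e.g. $R_1$ or $R_2$ being a single vertex, or $x$ coinciding with that vertex) and confirm that in all cases $R_1$ and~$R_2$ end up in distinct components of~$T-e$. Beyond that, the lemma is essentially a definitional manipulation, and the increase of the parameter from~$c$ to~$c+1$ (because $T-e$ may have more components than just two, but at least it is still a forest with finitely many, and in the worst case $c$ grows by one since only one tree is split) is exactly the cost the statement already advertises.
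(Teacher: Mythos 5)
Your proof is correct and follows essentially the same approach as the paper: identify the unique contact tree~$T$ (which by failure of strong separability contains neither~$s$ nor~$t$), observe that the two traces of~$P_1$ and~$P_2$ on~$T$ are vertex-disjoint subpaths, and delete an edge of~$T$ on the path separating these two subpaths. The paper's proof is a touch more terse (it simply invokes the existence of such a separating edge~$e \in T \setminus (P_1 \cup P_2)$ rather than pinning down a concrete choice as you do), and it does not spell out the conservativeness of the reduced instance; your slightly more explicit treatment of both points is a welcome addition but does not change the argument.
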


\begin{proof}
If $P_1$ and~$P_2$ are separable but not strongly separable, then there exists a tree~$T \in \T$ 
that contains a vertex from both paths but neither~$s$ nor~$t$. 
Since $P_1$ and~$P_2$ are separable, 
they both intersect~$T$ in a path (containing possibly only a single vertex); let $P^T_1$ and~$P^T_2$ denote these paths.
Due to the permissive disjointness of~$P_1$ and~$P_2$ and since $s,t \notin V(T)$, 
the paths $P^T_1$ and~$P^T_2$ are vertex-disjoint. 
Consequently, there exists an edge~$e \in T \setminus (P_1 \cup P_2)$ such that $P^T_1$ and~$P^T_2$ are in different components of~$T-e$.
Thus, setting $G'=G-e$ and $E'=E \setminus \{e\}$,
we get that $P_1$ and $P_2$ are not in contact in the instance $(G',w_{|E'},s,t)$, 
since they are not in contact in $(G,w,s,t)$ at any negative tree other than~$T$ (by Definition~\ref{def:separable}).
Hence, they form a strongly separable solution for~$(G',w_{|E'},s,t)$, proving the first statement of the lemma. 

The second statement of the lemma is trivial.
\end{proof}

Thanks to Lemmas~\ref{lem:separablesol-noncontact} and~\ref{lem:separablesol-contact}, 
if there exists a minimum-weight solution for~$(G,w,s,t)$ that is separable, then we can find some minimum-weight solution 
using standard algorithms for computing minimum-cost flows.
In Section~\ref{sec:nonsep-highlevel} we explain how we can find a minimum-weight non-separable solution for~$(G,w,s,t)$.

\subsubsection{Finding Non-separable Solutions}
\label{sec:nonsep-highlevel}
To find a minimum-weight solution for~$(G,w,s,t)$ that is not separable, we need a more involved approach.
We now provide a high-level presentation of our algorithm for finding a non-separable solution of minimum weight. 
We remark that Algorithm~\ref{alg:DISP} contains a pseudocode; however, we believe that it is best to read the following description first.

\begin{description}
\item[Step 1.] We guess certain properties of a minimum-weight non-separable solution $(P_1,P_2)$:
First, we guess a tree~$T \in \T$ that shares distinct vertices both with~$P_1$ and~$P_2$, i.e., a tree at which $P_1$ and~$P_2$ are in contact.
Second, we guess a partition $(\T_s,\T_0,\T_t)$ of~$\T \setminus \{T\}$ that is $T$-valid with respect to~$(P_1,P_2)$. 
\item[Step 2.] If $\T_s \neq \emptyset$, then 
we guess the first vertex of~$P_1$ and of~$P_2$ contained in~$V(T)$, denoted by~$a_1$ and~$a_2$, respectively.
We use recursion to compute two permissively disjoint $(s,\{a_1,a_2\})$-paths using only the negative trees in~$\T_s$, 
and to compute two permissively disjoint $(\{a_1,a_2\},t)$-paths using only the negative trees in~$\T \setminus \T_s$. 
Observe that by~$\T_s \neq \emptyset$ and $T \in \T \setminus \T_s$, we search for these paths in graphs that contain 
only a strict subset of the negative trees in~$\mathcal{T}$, 
meaning that our parameter~$c$ strictly decreases in both constructed sub-instances. 
We combine the obtained pairs of paths into a solution by using Lemma~\ref{lem:uncrossing}.
\\
We proceed in a similar fashion when $\T_t \neq \emptyset$. 
\item[Step 3.] If $\T_s=\T_t=\emptyset$, then for both $i \in [2]$
we guess the first and last vertex of~$P_i$ contained in~$V(T)$, denoted by~$a_i$ and~$b_i$, respectively.
We apply standard flow techniques to compute two  $(s,\{a_1,a_2\})$-paths and 
two $(\{b_1,b_2\},t)$-paths  with no inner vertices in~$V(\T)$ that are pairwise permissively disjoint.
Then, we apply the polynomial-time algorithm we devise for computing a pair of permissively disjoint $(\{a_1,a_2\},\{b_1,b_2\})$-paths in~$G$. This algorithm is the cornerstone of our method, 
and is based on important structural observations that allow for efficient dynamic programming.
We combine the obtained pairs of paths into a solution by applying Lemma~\ref{lem:uncrossing}.
\item[Step 4.] We output a solution of minimum weight among all solutions found in Steps~2 and~3.
\end{description}

Let us now provide more details about these steps; see also Algorithm~\ref{alg:DISP}.

\paragraph*{Step 1: Initial guesses on~$\T$.}
There are $c=|\T|$ possibilities to choose~$T$ from~$\T$, and there 
are $3^{c-1}$ further possibilities to partition~$\T \setminus \{T\}$ into $(\T_s,\T_0,\T_t)$, 
yielding~$c 3^{c-1}$ possibilities in total for our guesses. 

\paragraph*{Step 2: Applying recursion.}
To apply recursion in Step~2 when $\T_s \neq \emptyset$, we guess two distinct vertices $a_1$ and~$a_2$ in~$T$, 
and define the following sub-instances.
\begin{definition}[{\bf Sub-instances for $\T_s \neq \emptyset$}]
\label{def:sub-instances-Ts}
For some $\emptyset \neq \T_s \subseteq \T$ and distinct vertices~$a_1$ and~$a_2$, let us first create a graph~$G_s$ by
adding a new vertex~$a^\star$, and connecting it to each of~$a_1$ and~$a_2$ with an edge of weight~$w_{a^\star}=|w(T[a_1,a_2])|/2$.  
We create instances~$I_s^1$ and~$I_s^2$ 
 as follows:
\begin{itemize}
\item to get~$I_s^1$,  we delete $E(\T \setminus \T_s)$ and $V(\T \setminus \T_s) \setminus \{a_1,a_2\}$ from~$G_s$,
 and designate~$s$ and~$a^\star$ as our two terminals; 
\item to get~$I_s^2$,  we delete $V(\T_s)$ from~$G_s$,
 and designate~$a^\star$ and~$t$ as our two terminals. 
\end{itemize}
\end{definition}

We use recursion to solve \DISP{} on sub-instances~$I_s^1$ and~$I_s^2$. 
Note that conservativeness is maintained for both $I_s^1$ and $I_s^2$, due to our choice of $w_{a^\star}$ and 
statement~(1) of \cref{lem:T-min-pathlength}.
If both sub-instances admit solutions, we obtain two permissively disjoint~$(s,\{a_1,a_2\})$-paths 
$Q^{\mynearrow}_1$ and~$Q^{\mynearrow}_2$ from our solution to~$I_s^1$ by deleting the vertex~$a^\star$.
Similarly, we obtain two permissively disjoint~$(\{a_1,a_2\},t)$-paths $Q^{\mysearrow}_1$ and~$Q^{\mysearrow}_2$ 
from our solution to~$I_s^2$ by deleting the vertex~$a^\star$.
Next we use Lemma~\ref{lem:uncrossing} to create a solution for our original instance~$(G,w,s,t)$ 
using paths~$Q^{\mynearrow}_1,Q^{\mynearrow}_2,Q^{\mysearrow}_1$, 
and $Q^{\mysearrow}_2$.

\smallskip

If $\T_t \neq \emptyset$, then we proceed similarly:
after guessing two distinct vertices~$b_1$ and~$b_2$ in~$T$, we use a construction analogous to Definition~\ref{def:sub-instances-Ts}.

\begin{definition}[{\bf Sub-instances for $\T_t \neq \emptyset$}]
\label{def:sub-instances-Tt}
For some $\emptyset \neq \T_s \subseteq \T$ and distinct vertices~$b_1$ and~$b_2$, let us first create a graph~$G_t$ by
adding a new vertex~$b^\star$, and connecting it to each of~$b_1$ and~$b_2$ with an edge of weight~$w_{b^\star}=|w(T[b_1,b_2])|/2$.  
We create instances~$I_t^1$ and~$I_t^2$ 
 as follows:
\begin{itemize}
\item to get~$I_t^1$,  we delete $V(\T_t)$ from~$G_t$,
 and designate~$s$ and~$b^\star$ as our two terminals;
\item to get~$I_t^2$,  we delete $E(\T \setminus \T_t)$ and $V(\T \setminus \T_t) \setminus \{b_1,b_2\}$ from~$G_t$
 and designate~$b^\star$ and~$t$ as our two terminals.
\end{itemize}
\end{definition}

We use recursion to solve \DISP{} on sub-instances~$I_t^1$ and~$I_t^2$;
again, conservativeness is ensured for $I_t^1$ and $I_t^2$ by our choice of $w_{b^\star}$ and 
statement~(1) of \cref{lem:T-min-pathlength}. 
If both sub-instances admit solutions, we obtain two permissively disjoint~$(s,\{b_1,b_2\})$-paths 
$Q^{\mynearrow}_1$ and~$Q^{\mynearrow}_2$ from our solution to~$I_t^1$ 
by deleting the vertex~$b^\star$.
Similarly, we obtain two permissively disjoint~$(\{b_1,b_2\},t)$-paths $Q^{\mysearrow}_1$ and~$Q^{\mysearrow}_2$ 
from our solution to~$I_t^2$ by deleting the vertex~$b^\star$.
Again, we use Lemma~\ref{lem:uncrossing} to create a solution for our original instance~$(G,w,s,t)$ 
using paths~$Q^{\mynearrow}_1,Q^{\mynearrow}_2,Q^{\mysearrow}_1$, 
and $Q^{\mysearrow}_2$.

\smallskip
We state the correctness of Step~2 in the following lemma: 
\begin{lemma}
\label{lem:subinstances-correct}
Suppose that $(P_1,P_2)$ is a minimum-weight solution for~$(G,w,s,t)$ such that 
\begin{itemize}
\item $P_1$ and $P_2$ are in contact at some~$T \in \T$,
\item $a_i$ and $b_i$ are the first and last vertices of~$P_i$ contained in~$T$, respectively, for $i \in [2]$, and
\item $(\T_s,\T_0,\T_s)$ is a $T$-valid partition w.r.t.~$(P_1,P_2)$.
\end{itemize}
Then instances $I_s^1$ and $I_s^2$ admit solutions whose total weight (summed over all four paths) is~$w(P_1)+w(P_2)+4w_{a^\star}$.
Furthermore, given a solution $\mathcal{S}_i$ for~$I_s^i$ for both $i \in [2]$, 
we can compute  in linear time a solution for~$(G,w,s,t)$ 
of weight at most 
$w(\mathcal{S}_1)+w(\mathcal{S}_2)-4w_{a^\star}$. \\
The same holds when substituting~$I_s^1, I_s^2,$ and $w_{a^\star}$ with~$I_t^1,I_t^2$, and~$w_{b^\star}$  in these claims.
\end{lemma}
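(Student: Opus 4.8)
The plan is to prove the two directions separately, and for the "forward" direction to carefully split the solution $(P_1,P_2)$ at the contact tree~$T$ and at the $T$-valid partition, while for the "backward" direction to invoke Lemma~\ref{lem:uncrossing} twice.

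First I would handle the forward direction. Let $(P_1,P_2)$, $T$, the vertices $a_i,b_i$, and the $T$-valid partition $(\T_s,\T_0,\T_t)$ be as in the statement, and assume $\T_s\neq\emptyset$. By definition of $a_i$ as the first vertex of $P_i$ on $T$, the prefixes $P_i[s,a_i]$ meet $V(T)$ only in $a_i$; moreover by property~(i) of $T$-validity, every tree of $\T\setminus\{T\}$ met by $P_1[s,a_1]\cup P_2[s,a_2]$ lies in $\T_s$, so these prefixes use no edge of $E(\T\setminus\T_s)$ and touch $V(\T\setminus\T_s)$ only (possibly) in $\{a_1,a_2\}$. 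Hence $P_1[s,a_1]$ and $P_2[s,a_2]$, together with the two new edges $a^\star a_1$ and $a^\star a_2$, form a pair of openly disjoint $(s,a^\star)$-paths in the graph underlying $I_s^1$; call this pair $\mathcal{S}_1^\star$, of weight $w(P_1[s,a_1])+w(P_2[s,a_2])+2w_{a^\star}$. For $I_s^2$ we need $(a^\star,t)$-paths in $G_s - V(\T_s)$; here I would take $P_1' := T[a_1,a_2]$ split at its two ends together with $P_i[a_i,t]$. More precisely, set one path to be $a^\star a_1 \cup P_1[a_1,t]$ and the other to be $a^\star a_2 \cup P_2[a_2,t]$. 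These are $(a^\star,t)$-paths; they are openly disjoint because $P_1[a_1,t]$ and $P_2[a_2,t]$ are openly disjoint subpaths of $P_1,P_2$ and share only the new vertex-free structure at $a^\star$; and they avoid $V(\T_s)$ because, by property~(iii) of $T$-validity applied to the suffixes $P_i[a_i,t]\supseteq P_i[a_i,b_i]$ and to the part $P_i[b_i,t]$ governed by property~(ii) together with $\T_s\cap\T_t=\emptyset$, no tree of $\T_s$ is met after $a_i$ on either path. This pair $\mathcal{S}_2^\star$ has weight $2w_{a^\star}+w(P_1[a_1,t])+w(P_2[a_2,t])$. Summing, the total weight over all four paths is $w(P_1)+w(P_2)+4w_{a^\star}$. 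Since $I_s^1,I_s^2$ are conservative (as remarked in the text, using $w_{a^\star}=|w(T[a_1,a_2])|/2$ and statement~(1) of Lemma~\ref{lem:T-min-pathlength}), their minimum-weight solutions have weight at most these values; I would still need to rule out that the minimum is strictly smaller giving a contradiction with minimality of $(P_1,P_2)$ — but that is not asserted, so it suffices to note the existence of solutions of weight \emph{at most} $w(P_1)+w(P_2)+4w_{a^\star}$, and in fact equality follows from the backward direction applied to these very solutions.

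For the backward direction, suppose $\mathcal{S}_i$ is a solution for $I_s^i$ for $i\in[2]$. Removing $a^\star$ from $\mathcal{S}_1$ yields two permissively disjoint $(s,\{a_1,a_2\})$-paths $Q^{\mynearrow}_1,Q^{\mynearrow}_2$ (losing exactly $2w_{a^\star}$ in weight), living in a graph containing no edge of $E(\T\setminus\T_s)$, in particular none of $E(T)$ since $T\in\T\setminus\T_s$; removing $a^\star$ from $\mathcal{S}_2$ yields two permissively disjoint $(\{a_1,a_2\},t)$-paths $Q^{\mysearrow}_1,Q^{\mysearrow}_2$ (again losing $2w_{a^\star}$), living in $G - V(\T_s)$, so using no edge of $E(\T_s)$. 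Apply $\Amend$ to $Q^{\mysearrow}_1,Q^{\mysearrow}_2$ to make them locally cheapest (this only decreases weight and does not introduce edges of $E(\T_s)$, since amending replaces a subpath by a path inside some tree of $\T$, and the only trees available are in $\T\setminus\T_s$). Now I would invoke Lemma~\ref{lem:uncrossing} with the role assignment $p_i := s$ (more precisely the two copies of $s$), $q_i := t$, the tree $T$, its vertices $v_1:=a_1, v_2:=a_2$, the paths $P_i := Q^{\mynearrow}_i$ and $Q_i := Q^{\mysearrow}_i$, and the partition $\T_1 := \T_s$, $\T_2 := \T\setminus\T_s$ (which contains $T$). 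Condition~(i) $V(T)\cap V(Q^{\mynearrow}_1\cup Q^{\mynearrow}_2)=\{a_1,a_2\}$ holds because these are prefix-type paths whose only $T$-vertices are their endpoints $a_1,a_2$ — this needs a short argument that $\Amend$ and the recursion cannot introduce further $T$-vertices, which follows since $I_s^1$ excludes $V(T)\setminus\{a_1,a_2\}$. Condition~(ii) holds by the edge-set exclusions just noted. Lemma~\ref{lem:uncrossing} then produces in linear time permissively disjoint $(s,t)$-paths $S_1,S_2$ of weight at most $w(Q^{\mynearrow}_1)+w(Q^{\mynearrow}_2)+w(Q^{\mysearrow}_1)+w(Q^{\mysearrow}_2) \le w(\mathcal{S}_1)+w(\mathcal{S}_2)-4w_{a^\star}$, which is the desired solution for $(G,w,s,t)$ (a pair of permissively disjoint $(s,t)$-paths with common endpoints $s,t$ is exactly an openly disjoint $(s,t)$-path pair, i.e.\ a solution).

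The $\T_t$-version is entirely symmetric: replace "first vertex $a_i$" by "last vertex $b_i$", reverse the roles of $s$ and $t$, and use Definition~\ref{def:sub-instances-Tt} and the weight $w_{b^\star}$; the same two arguments (split at $T$ for the forward direction, apply Lemma~\ref{lem:uncrossing} with $\T_1:=\T_t$, $\T_2:=\T\setminus\T_t$ for the backward direction) go through verbatim. The main obstacle I anticipate is the bookkeeping in condition~(i) of Lemma~\ref{lem:uncrossing}: one must be sure that neither the recursive call (whose instance $I_s^1$ was explicitly built to exclude $V(T)\setminus\{a_1,a_2\}$) nor the subsequent $\Amend$ step can sneak an extra vertex of $T$ onto the $(s,\{a_1,a_2\})$-paths; verifying this cleanly, and likewise checking that $\Amend$ respects the edge-exclusion required by condition~(ii), is the delicate part, while the weight accounting (four copies of $w_{a^\star}$ appearing and disappearing) is routine.
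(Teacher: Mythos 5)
Your proposal is correct and follows essentially the same route as the paper: split $(P_1,P_2)$ at $a_1,a_2$ using the $T$-valid partition properties for the forward direction, then recover a full solution via one application of Lemma~\ref{lem:uncrossing} (per case, $\T_s$ or $\T_t$) with $\T_1,\T_2$ chosen exactly as you do, after amending the appropriate pair to make it locally cheapest. The only cosmetic difference is that the paper assumes both pairs are locally cheapest (which is harmless but unnecessary), while you amend only the pair playing the $(Q_1,Q_2)$ role; the worry you raise about $\Amend$ introducing $T$-vertices is indeed resolved exactly as you say, because the paths in question live in a subgraph from which $V(T)\setminus\{a_1,a_2\}$ (respectively $V(\T_s)$) has been deleted.
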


\begin{proof}
Since $(\T_s,\T_0,\T_t)$ is a $T$-valid partition for $(P_1,P_2)$, as defined by the conditions in Lemma~\ref{lem:partitioned-comps},
we know that  $P_i[s,a_i]$ contains no vertices from $V(\T_t \cup \T_0 \setminus \{T\})$.
By the definition of~$a_i$, the only vertex on~$P_i[s,a_i]$ contained in~$V(T)$ is~$a_i$. 
Thus, $P_1[s,a_1]$ and $P_2[s,a_2]$ are paths in~$I_s^1$; adding the edges~$a_1 a^\star$ and $a_2 a^\star$, each of weight~$w_{a^\star}$, to these paths 
yields two permissively disjoint $(s,a^\star)$-paths in~$I_s^1$. 
Similarly, $P_1[a_1,t]$ and $P_2[a_2,t]$ are paths in~$I_s^2$ due to the definition of a $T$-valid partition;
adding the edges~$a_1 a^\star$ and $a_2 a^\star$ to them yields 
two permissively disjoint $(a^\star,t)$-paths in~$I_s^2$. 
Clearly, the total weight of these four paths (two in~$I_s^1$ and two in~$I_s^2$) is exactly $w(P_1)+w(P_2)+4w_{a^\star}$.

As described in Step~2, a solution~$\mathcal{S}_1$ for~$I_s^1$ yields 
two permissively disjoint $(s,\{a_1,a_2\})$-paths $Q^{\mynearrow}_1$ and~$Q^{\mynearrow}_2$ of
weight~$w(\mathcal{S}_1)-2w_{a^\star}$, where the term $-2w_{a^\star}$  is due to the deletion of edges~$a_1 a^\star$ 
and~$a_2 a^\star$. 
Similarly, a solution~$\mathcal{S}_2$ for~$I_s^2$ yields 
 two permissively disjoint $(\{a_1,a_2\},t)$-paths $Q^{\mysearrow}_1$ and~$Q^{\mysearrow}_2$
 of weight~$w(\mathcal{S}_2)-2w_{a^\star}$.

We can assume that both these path pairs are locally cheapest, 
as otherwise we can amend all shortcuts on them while preserving permissive disjointness and without increasing their weight.
Thus we can apply Lemma~\ref{lem:uncrossing} by setting 
$p_1=p_2=s$, $\{v_1,v_2\}=\{a_1,a_2\}$, $q_1=q_2=t$, $P_i=Q^{\mynearrow}_i$ and $Q_i=Q^{\mysearrow}_i$ for $i \in [2]$,
$\T_1=\T_s$ and $\T_2=\T \setminus \T_s$. 
Note that the conditions Lemma~\ref{lem:uncrossing} hold, since (i) $Q^{\mynearrow}_1 \cup Q^{\mynearrow}_2$ 
contains no vertices of~$T$ other than~$a_1$ and~$a_2$, and (ii) $Q^{\mynearrow}_1 \cup Q^{\mynearrow}_2$ 
contains no edge of~$\T_t \cup \T_0 \cup \{T\}$, while $Q^{\mysearrow}_1 \cup Q^{\mysearrow}_2$ 
contains no edge of~$\T_s$. Therefore,  Lemma~\ref{lem:uncrossing} yields two permissively disjoint $(s,t)$-paths with weight at most
$\sum_{i=1}^2 w(Q^{\mynearrow}_i)+w(Q^{\mysearrow}_i)=w(\mathcal{S}_1)+w(\mathcal{S}_1)-4w_{a^\star}$.

To see the first claim for subinstances~$I_t^1$ and~$I_t^2$, one can use analogous arguments in a straightforward manner
to prove that $I_t^1$ and $I_t^2$ admits solutions whose total weight is at most~$w(P_1)+w(P_2)+4w_{b^\star}$.
To show the second statement, it is clear that a solution for~$I_t^1$ and for~$I_t^2$
yields two permissively disjoint $(s,\{b_1,b_2\})$-paths $Q^{\mynearrow}_1$ and~$Q^{\mynearrow}_2$
and two permissively disjoint $(\{b_1,b_2\},t)$-paths $Q^{\mysearrow}_1$ and~$Q^{\mysearrow}_2$, 
and we can assume that these path pairs are locally cheapest. 
Thus we can we apply Lemma~\ref{lem:uncrossing}, but note that we need to set $p_1=p_2=t$, $q_1=q_2=s$,
$\{v_1,v_2\}=\{b_1,b_2\}$,  $P_i=Q^{\mysearrow}_i$ and $Q_i=Q^{\mynearrow}_i$ for $i \in [2]$,
$\T_1=\T_t$ and $\T_2=\T \setminus \T_t$. It is straightforward to verify using the arguments of the previous paragraph
that Lemma~\ref{lem:uncrossing} then yields a solution for~$(G,w,s,t)$ with the desired bound on the weight.
\end{proof}

\paragraph*{Step 3: Applying flow techniques and dynamic programming.}
We now describe Step~3 in more detail, which concerns the case when $\T_s=\T_t=\emptyset$.

First, we guess vertices~$a_1,b_1, a_2, $ and~$b_2$;
the intended meaning of these vertices is that $a_i$ and~$b_i$ are the first and last vertices of~$P_i$ contained in~$V(T)$, for both~$i \in [2]$. 
We only consider guesses that are \emph{reasonable}, meaning that they satisfy the following conditions: 
\begin{itemize}
\item  if $s \in V(T)$, then $s=a_1=a_2$, otherwise $a_1 \neq a_2$;
\item  if $t \in V(T)$, then $t=b_1=b_2$, otherwise $b_1 \neq b_2$;
\item $T[a_1,b_1]$ and $T[a_2,b_2]$ share at least one edge.
\end{itemize}

Then we compute four paths from~$\{s,t\}$ to~$\{a_1,b_1,a_2,b_2\}$ in the graph $G-E(\T)-(V(\T) \setminus \{a_1,a_2,b_1,b_2\})$ with minimum weight 
such that two paths have~$s$ as an endpoint, the other two have~$t$ as an endpoint, and no other vertex appears on more than one path. 
To this end, we define the following network and compute a minimum-cost flow of value~4 in it.

\begin{definition}[{\bf Flow network $N_{(a_1,b_1,a_2,b_2)}$ for non-separable solutions.}]
\label{def:N_aabb} Given vertices $a_1, b_1,a_2,$ and $b_2$, we create~$N_{(a_1,b_1,a_2,b_2)}$ as follows.
First, we delete all edges in~$E^-$ and all vertices in~$V(\T) \setminus \{a_1,a_2,b_1,b_2\}$ from~$G$, 
and direct each edge in~$G$ in both directions. 
We then add new vertices~$s^\star$ and~$t^\star$, along with arcs~$(s^\star,s)$ and $(s^\star,t)$ of capacity~$2$, 
and arcs $(a_i,t^\star)$ and~$(b_i,t^\star)$ for $i=1,2$  with capacity~1.\footnote{In the degenerate case when $s=a_1=a_2$ or $t=b_1=b_2$ this yields two parallel arcs from~$s$ or $t$ to $t^\star$.
} We assign capacity~1 to all other arcs, and also to each vertex of~$V(G)\setminus \{s,t\}$. 
All newly added arcs will have cost~$0$, otherwise we retain the cost function~$w$.
We let $s^\star$ and~$t^\star$ be the source and the sink in~$N_{(a_1,b_1,a_2,b_2)}$, respectively.
\end{definition}

Next, we compute 
two $(\{a_1,a_2\},\{b_1,b_2\})$-paths $Q_1$ and~$Q_2$ in $G$ with minimum total weight  that are permissively disjoint. 
A polynomial-time computation for this problem, 
building on structural observations from Section~\ref{sec:properties}, is provided in Section~\ref{sec:partsol} 
(see Corollary~\ref{cor:perm-disjoint-paths}).
Finally, we apply Lemma~\ref{lem:uncrossing} (in fact, twice) to obtain a solution to our instance~$(G,w,s,t)$
based on a minimum-cost flow of value~4 in~$N_{(a_1,b_1,a_2,b_2)}$ and paths~$Q_1$ and~$Q_2$. 
We finish this section with Lemma~\ref{lem:type2bsol}, stating the correctness of Step~3.

\begin{lemma}
\label{lem:type2bsol}
Suppose that $(P_1,P_2)$ is a minimum-weight solution for~$(G,w,s,t)$ such that 
\begin{itemize}
\item $P_1$ and $P_2$ are non-separable, and in contact at some~$T \in \T$, 
\item $a_i$ and $b_i$ are the first and last vertices of~$P_i$ contained in~$T$, respectively, for $i \in [2]$, and
\item $(\emptyset,\T  \setminus \{T\},\emptyset)$ is a $T$-valid partition w.r.t.~$(P_1,P_2)$.
\end{itemize}
Let $Q_1$ and $Q_2$ be two permissively disjoint $(\{a_1,a_2\},\{b_1,b_2\})$-paths in~$G$ with minimum total weight.	
Then the following holds:\\
If $w^\star$ is the minimum cost of a flow of value~4 in the network~$N_{(a_1,b_1,a_2,b_2)}$, then 
$w^\star+w(Q_1)+w(Q_2) \leq w(P_1)+w(P_2)$. 
Conversely, given a flow of value~4 in the network~$N_{(a_1,b_1,a_2,b_2)}$ with cost~$w^\star$, 
together with paths~$Q_1$ and~$Q_2$, 
we can find a solution for~$(G,w,s,t)$ with cost at most $w^\star+w(Q_1)+w(Q_2)$ in linear time. 
\end{lemma}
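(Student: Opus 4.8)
The plan is to prove both directions of the statement by relating the solution $(P_1,P_2)$ to the three ingredients we are combining: a flow of value $4$ in $N_{(a_1,b_1,a_2,b_2)}$, and the two permissively disjoint $(\{a_1,a_2\},\{b_1,b_2\})$-paths $Q_1,Q_2$. For the first claim, I would start from the hypothesized solution $(P_1,P_2)$. Since $(\emptyset,\T\setminus\{T\},\emptyset)$ is $T$-valid with respect to $(P_1,P_2)$, every negative tree other than $T$ is only met by $P_i$ on the interior portion $P_i[a_i,b_i]$; in particular the initial segments $P_1[s,a_1],P_2[s,a_2]$ and the final segments $P_1[b_1,t],P_2[b_2,t]$ avoid $V(\T)\setminus\{a_1,a_2,b_1,b_2\}$ and use no edge of $E^-$. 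Hence, after directing each of these four segments from $\{s,t\}$ toward $\{a_1,b_1,a_2,b_2\}$, they are four paths living in $N_{(a_1,b_1,a_2,b_2)}$ that are pairwise vertex-disjoint except possibly at $s$ (if $s\in V(T)$) or $t$ (if $t\in V(T)$), and together with the arcs $(s^\star,s),(s^\star,t)$ and the arcs into $t^\star$ they carry a flow of value $4$ of cost $w(P_1[s,a_1])+w(P_2[s,a_2])+w(P_1[b_1,t])+w(P_2[b_2,t])$. Therefore $w^\star$ is at most this quantity. On the other hand, $P_1[a_1,b_1]$ and $P_2[a_2,b_2]$ are permissively disjoint $(\{a_1,a_2\},\{b_1,b_2\})$-paths, so $w(Q_1)+w(Q_2)\le w(P_1[a_1,b_1])+w(P_2[a_2,b_2])$ by the minimality of $Q_1,Q_2$. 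Adding these two inequalities gives $w^\star+w(Q_1)+w(Q_2)\le w(P_1)+w(P_2)$, as desired. (A minor point to verify: that $a_i$ and $b_i$ are genuinely distinct from the guessed ``reasonable'' values — but this is guaranteed by the hypotheses of the lemma combined with Definition~\ref{def:separable}, since non-separability at $T$ forces the intersection of $P_i$ with $T$ not to be a single vertex when $s,t\notin V(T)$.)

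For the converse, I would take a flow $f$ of value $4$ in $N_{(a_1,b_1,a_2,b_2)}$ of cost $w^\star$ together with $Q_1,Q_2$, and reconstruct a solution. By the unit capacities on all arcs and internal vertices, and capacity $2$ on $(s^\star,s)$ and $(s^\star,t)$, the flow $f$ decomposes into four arc-disjoint, internally-vertex-disjoint directed paths, two leaving $s$ and two leaving $t$, each ending at one of $a_1,b_1,a_2,b_2$; since negative edges were deleted and each remaining edge is available in both directions, conservativeness of $w$ rules out negative-cost directed cycles in the network, so we may assume $f$ is acyclic and its cost is exactly the sum of the costs of these four paths, i.e.\ $\le w^\star$. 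Relabel them as $R^{\mynearrow}_1$ (from $s$ to $a_1$), $R^{\mynearrow}_2$ (from $s$ to $a_2$), $R^{\mysearrow}_1$ (from $t$ to $b_1$), $R^{\mysearrow}_2$ (from $t$ to $b_2$), matching up indices so that $R^{\mynearrow}_i$ ends at $a_i$ and $R^{\mysearrow}_i$ ends at $b_i$; the relabelling is free since $Q_1,Q_2$ may be permuted. After amending all shortcuts we may assume $Q_1,Q_2$ are locally cheapest. The strategy is now to glue these pieces together by two applications of Lemma~\ref{lem:uncrossing}: first combine the ``start side'' $(R^{\mynearrow}_1,R^{\mynearrow}_2)$ with $(Q_1,Q_2)$ to obtain permissively disjoint $(s,\{b_1,b_2\})$-paths $(S'_1,S'_2)$ of total weight at most $\sum_i w(R^{\mynearrow}_i)+w(Q_i)$; then combine $(R^{\mysearrow}_1,R^{\mysearrow}_2)$ with $(S'_1,S'_2)$ to obtain permissively disjoint $(s,t)$-paths $(S_1,S_2)$ of total weight at most $\sum_i w(R^{\mysearrow}_i)+w(S'_i)\le w^\star+w(Q_1)+w(Q_2)$.

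To legitimise each call to Lemma~\ref{lem:uncrossing} I need to exhibit the partition of $\T$ and check conditions (i) and (ii). For the first call, I take the tree parameter to be $T$, with $v_1=a_1,v_2=a_2$; condition (i) holds because $R^{\mynearrow}_1\cup R^{\mynearrow}_2$ avoids $V(\T)\setminus\{a_1,a_2,b_1,b_2\}$ and avoids $b_1,b_2$ unless $b_i\in\{a_1,a_2\}$ — here I should note that $b_i=a_j$ cannot happen when $s,t\notin V(T)$ because the $R$-paths are internally disjoint; the degenerate cases $s\in V(T)$ or $t\in V(T)$ need to be handled by inspection, but there $a_1=a_2=s$ forces a slightly different but still valid bookkeeping. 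Condition (ii) holds by taking $\T_1=\{\,\}$... more precisely $\T_1$ to be any partition class whose edge set is avoided by $R^{\mynearrow}_1\cup R^{\mynearrow}_2$ (which avoids all of $E^-$) and $\T_2=\T$ whose edges are... — wait, I need $Q_1\cup Q_2$ to avoid $E(\T_1)$, so I should set $\T_1=\emptyset$ and $\T_2=\T$, which trivially satisfies (ii) since $R^{\mynearrow}_1\cup R^{\mynearrow}_2$ has no negative edge at all. The second call is symmetric with terminals reversed, using that $S'_1\cup S'_2$ meets $T$ exactly in $\{b_1,b_2\}$ — this last fact is the one genuinely delicate structural point: it follows because the Lemma~\ref{lem:uncrossing} output inherits the property that it meets $T$ only where the $Q$-side did near $b_1,b_2$, together with the fact that $R^{\mynearrow}$-paths meet $T$ only at $a_1,a_2$ and $a_i\notin\{b_1,b_2\}$ in the non-degenerate case. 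I expect this verification — that the intermediate path pair $(S'_1,S'_2)$ still satisfies hypothesis (i) of Lemma~\ref{lem:uncrossing} for the second application, uniformly across the degenerate cases $s\in V(T)$, $t\in V(T)$ — to be the main obstacle, and it is where I would spend the most care; the weight bookkeeping and the flow-decomposition argument are routine.
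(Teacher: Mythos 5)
The first half of your argument (that $w^\star+w(Q_1)+w(Q_2)\le w(P_1)+w(P_2)$) matches the paper's and is fine. The gap is in the converse, and it is not where you suspected. The false step is the sentence ``the relabelling is free since $Q_1,Q_2$ may be permuted.'' Swapping $Q_1$ and $Q_2$ only changes which $Q$-path joins $a_i$ to $b_j$; it cannot change \emph{which of} $a_1,a_2,b_1,b_2$ a given flow path terminates at. Nothing in the definition of $N_{(a_1,b_1,a_2,b_2)}$ forces the two unit paths leaving $s$ to end at $a_1$ and $a_2$: a cheapest flow of value $4$ may perfectly well route one unit $s\to a_1$ and the other $s\to b_1$, with the two paths from $t$ ending at $a_2$ and $b_2$. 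In that configuration your ``glue with $Q_1,Q_2$ via two applications of Lemma~\ref{lem:uncrossing}'' strategy does not even typecheck, because there is no pair of flow paths both terminating in $\{a_1,a_2\}$ to play the role of $P_1,P_2$ in the first application.

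The paper splits the converse into two cases on exactly this point. Your argument covers only the case where the $s$-paths end at $\{a_1,a_2\}$ (the paper's Case B). In the other case (endpoints $\{a_i,b_j\}$) one must not use $Q_1,Q_2$ at all: the paper first establishes, via Claim~\ref{clm:nonsep-intersection}, that $T[a_1,b_1]$ and $T[a_2,b_2]$ overlap in some $X$, relabels so that $a_1,a_2$ lie in one component of $T\setminus X$ (hence $T[a_1,a_2]$ and $T[b_1,b_2]$ are edge-disjoint), and then stitches the four flow paths together by $T[a_1,a_2]$ and $T[b_1,b_2]$; the weight bound $w(T[a_1,a_2])+w(T[b_1,b_2])\le w(Q_1)+w(Q_2)$ comes from statement~(2) of Lemma~\ref{lem:T-min-pathlength}, not from the uncrossing lemma. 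You need this alternative construction, and you cannot reduce it away by permutation.

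Two smaller remarks on the part you did write. First, the condition (i) of Lemma~\ref{lem:uncrossing} that you were most worried about for the second application is a condition on the $P$-side (here, the flow paths $P^f_{t,1},P^f_{t,2}$), not on the intermediate output pair; since those flow paths avoid $V(T)\setminus\{a_1,a_2,b_1,b_2\}$ by construction of the network, condition (i) is immediate, so this was not the delicate point. Second, between the two uncrossing applications you do need to call $\Amend$ on the intermediate pair so that it is locally cheapest (Lemma~\ref{lem:uncrossing} requires the $Q$-side to be locally cheapest); you amended $Q_1,Q_2$ at the start but never the intermediate $(S'_1,S'_2)$.
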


\begin{proof}
Suppose that $(P_1,P_2)$ is a non-separable solution for our instance fulfilling the conditions of the lemma, 
and $Q_1$ and~$Q_2$ are two permissively disjoint $(\{a_1,a_2\},\{b_1,b_2\})$-paths in~$G$.
Clearly, $P_1[a_1,b_1]$ and $P_2[a_2,b_2]$ are permissively disjoint, so we have
\begin{equation}
\label{eqn:nonsepsol-e1}
w(Q_1)+w(Q_2) \leq w(P_1[a_1,b_1])+w(P_2[a_2,b_2]).
\end{equation}
The four paths $P_i[s,a_i]$ and $P_i[b_i,t]$, $i \in [2]$, do not share vertices other than~$s$ and~$t$,
and since $(\emptyset,\T \setminus \{T\},\emptyset)$ is a $T$-valid partition w.r.t.~$(P_1,P_2)$, 
we know that they do not contain any vertex of $V(T) \setminus \{a_1,a_2,b_1,b_2\}$ (recall condition~(iii) of Lemma~\ref{lem:partitioned-comps}). 
Therefore, these four paths are present (as directed paths) in~$N_{(a_1,b_1,a_2,b_2)}$, and 
together with the arcs incident to~$s^\star$ and~$t^\star$ yield a flow of value~4 in~$N_{(a_1,b_1,a_2,b_2)}$.
This implies
$w^\star \leq \sum_{i \in [2]} w(P_i[s,a_i])+w(P_i[b_i,t])$, which together with~(\ref{eqn:nonsepsol-e1}) yields 
\begin{align*}
w^\star + w(Q_1)+w(Q_2) & \leq  \sum_{i \in [2]} \bigg( w(P_i[s,a_i])+w(P_i[b_i,t]) \bigg) + w(P_1[a_1,b_1])+w(P_2[a_2,b_2]) \\
& = w(P_1)+w(P_2),
\end{align*}
as required. This proves the first statement of the lemma.

\smallskip
For the other direction, we first show the following claim. 

\begin{claim} 
\label{clm:nonsep-intersection}
$T[a_1,b_1]$ and $T[a_2,b_2]$ share at least one edge.
\end{claim}
\begin{claimproof}
Suppose for the sake of contradiction that the claim does not hold. 
Note that $T[a_1,b_1]$ and $T[a_2,b_2]$ may only share an endpoint if $a_1=a_2=s$ or if $b_1=b_2=t$.
Thus, if $T[a_1,b_1]$ and $T[a_2,b_2]$ share no edge, then 
replacing the subpaths~$P_1[a_1,b_1]$ and~$P_2[a_2,b_2]$ with $T[a_1,b_1]$ and $T[a_2,b_2]$, 
respectively, yields a solution~$(P'_1,P'_2)$ for~$(G,w,s,t)$.
where $P'_1=P_1 \setminus P_1[a_1,b_1] \cup T[a_1,b_1]$ and $P'_2=P_1 \setminus P_2[a_2,b_2] \cup T[a_2,b_2]$.
By statement~(1) of Lemma~\ref{lem:T-min-pathlength} we know that $w(P_1[a_1,b_1])\geq  w(T[a_1,b_1])$ 
and $w(P_2[a_2,b_2])\geq  w(T[a_2,b_2])$, which implies 
$w(P'_1)+w(P'_2)\leq w(P_1)+w(P_2)$. 
Since $(P_1,P_2)$ is a minimum-weight solution for~$(G,w,s,t)$, we know that equality must hold.
Using the last sentence in statement~(1) of Lemma~\ref{lem:T-min-pathlength}, this implies
$P_1[a_1,b_1] \supseteq T[a_1,b_1]$ and $P_2[a_2,b_2] \supseteq T[a_2,b_2]$. 
As $P_1$ and $P_2$ are paths, we immediately get $P_1[a_1,b_1] = T[a_1,b_1]$ and $P_2[a_2,b_2] = T[a_2,b_2]$. 

Recall that $(\emptyset, \T \setminus \{T\}, \emptyset)$ is a $T$-valid partition with respect to~$(P_1,P_2)$.
Thus, $P_1 \cap V(T')=\emptyset$ for any $T' \in \T \setminus \{\T\}$ follows from
$P_1[a_1,b_1] = T[a_1,b_1]$ and $P_2[a_2,b_2] = T[a_2,b_2]$. However, in this case $P_1$ and~$P_2$ are separable, a contradiction proving the claim.
\end{claimproof}

\medskip
Consider now a flow~$f$ of value~4 in~$N_{(a_1,b_1,a_2,b_2)}$.
Let $P^f_{s,1}$ and $P^f_{s,2}$ ($P^f_{t,1}$ and $P^f_{t,2}$) be the two paths
indicated by~$f$ from~$s$ (from~$t$, respectively) to two vertices  of $\{a_1,b_1,a_2,b_2\}$.
Let~$X=\T[a_1,b_1] \cap T[a_2,b_2]$.
If $a_1$ and~$a_2$ are not in the same connected component of~$T \setminus X$, then switch the names of~$a_2$ and~$b_2$. 
This way, we ensure that $a_1$ and~$a_2$ are in the same connected component of~$T \setminus X$.
In particular, $T[a_1,a_2]$ and $T[b_1,b_2]$ share no edges. 
We now distinguish between two cases.

\smallskip
{\bf Case A:}
 if the set of the endpoints of $P^f_{s,1}$ and $P^f_{s,2}$ is $\{s,a_i,b_j\}$ for some $i,j \in [2]$. 
Then, clearly, the set of endpoints of $P^f_{t,1}$ and $P^f_{t,2}$ is $\{t,a_{3-i},b_{3-j}\}$. Hence, the union of the paths $T[a_1,a_2]$ and $T[b_1,b_2]$ (using only edges of~$E(T)$) as well as the four paths $P^f_{s,i}$ and $P^f_{t,i}$ for $i \in [2]$ (using only edges of~$E(G) \setminus E^-$) yields two openly disjoint $(s,t)$-paths $S_1$ and $S_2$. Moreover,
by the second statement of Lemma~\ref{lem:T-min-pathlength}
we get
\[
w(S_1)+w(S_2) =w^\star + w(T[a_1,a_2])+w(T[b_1,b_2]) \leq w^\star + w(Q_1)+w(Q_2).
\]
Note that the paths~$S_1$ and~$S_2$ can be computed in linear time, given our flow of value~4. 
Hence, the second statement of the lemma holds in this case.

\smallskip
{\bf Case B:}
if the set of the endpoints of $P^f_{s,1}$ and $P^f_{s,2}$ is $\{s,a_1,a_2\}$ 
(the case when the two paths starting from~$s$ lead to~$b_1$ and~$b_2$ is symmetric). 
Then the set of the endpoints of $P^f_{t,1}$ and $P^f_{t,2}$ is $\{t,b_1,b_2\}$. 
We need to apply Lemma~\ref{lem:uncrossing} twice. 

First, if $a_1 \neq a_2$, then we apply Lemma~\ref{lem:uncrossing} for paths $P^f_{s,1}$, $P^f_{s,2}$, $Q_1$ and $Q_2$, 
i.e., setting $s=p_1=p_2$, $\{v_1,v_2\}=\{a_1,a_2\}$, $\{q_1,q_2\}=\{b_1,b_2\}$, $\T_1=\emptyset$ and~$\T_2=\T$. 
Note that the conditions of Lemma~\ref{lem:uncrossing} hold: 
by the definition of the network~$N_{(a_1,b_1,a_2,b_2)}$ we know that $P^f_{s,1} \cup P^f_{s,2}$ contains no edge of~$E^-$ and no vertices of~$T$ other than~$a_1$ and~$a_2$; 
moreover, $Q_1$ and~$Q_2$ are locally cheapest $(\{a_1,a_2\},\{b_1,b_2\})$-paths, as they have minimum total weight (recall Observation~\ref{obs:locally-cheapest}).
Hence, we obtain in linear time two permissively disjoint $(s,\{b_1,b_2\})$-paths $Q'_1$ and~$Q'_2$ in~$G$ whose weight is at most $\sum_{i \in [2]} w(P^f_{s,i})+w(Q_i)$. 
If $a_1=a_2$, then we also know $s=a_1=a_2$, and thus 
$P^f_{s,1}$ and $P^f_{s,2}$ are paths of length~0. In this case we simply define $Q'_1=Q_1$ and $Q'_2=Q_2$.

We next check whether $Q'_1$ and~$Q'_2$ are locally cheapest, and if not, we apply Observation~\ref{obs:locally-cheapest}  to compute $\Amend(Q'_1,Q'_2)$,
obtaining two locally cheapest, permissively disjoint $(s,\{b_1,b_2\})$-paths $Q''_1$ and~$Q''_2$ with weight at most $w(Q'_1)+w(Q'_2)$.
By our remarks following Observation~\ref{obs:locally-cheapest}, this can be done in linear time.

If $b_1\neq b_2$, then we apply Lemma~\ref{lem:uncrossing} once again, for paths $P^f_{t,1}$, $P^f_{t,2}$, $Q''_1$ and $Q''_2$, 
i.e., setting $t=p_1=p_2$, $\{v_1,v_2\}=\{b_1,b_2\}$, $s=q_1=q_2$, $\T_1=\emptyset$ and~$\T_2=\T$. 
Observe again that the conditions of Lemma~\ref{lem:uncrossing} hold:
by the definition of the network~$N_{(a_1,b_1,a_2,b_2)}$ we know that $P^f_{t,1} \cup P^f_{t,2}$ contains no edge of~$E^-$ and no vertices of~$T$ other than~$b_1$ and~$b_2$; 
moreover, $Q''_1$ and~$Q''_2$ are locally cheapest $(\{b_1,b_2\},s)$-paths.
Hence, we obtain in linear time two permissively disjoint $(t,s)$-paths $S_1$ and $S_2$ in~$G$ such that
\begin{align}
\label{eq:sol2b-weights}
w(S_1)+w(S_2) & \leq \sum_{i \in [2]} w(P^f_{t,i})+w(Q''_i) \leq 
\sum_{i \in [2]} w(P^f_{t,i})+w(Q'_i) \\ \notag
& \leq 
\sum_{i \in [2]} w(P^f_{t,i})+w(Q_i) + w(P^f_{s,i}) 
\\ \notag
&= w^\star + w(Q_1)+w(Q_2).   
\end{align}
Thus, $S_1$ and $S_2$ form a solution for our instance of \DISP{} with weight at most $w^\star + w(Q_1)+w(Q_2)$, as promised.

If $b_1=b_2$, then we also know $t=b_1=b_2$, and thus 
$P^f_{t,1}$ and $P^f_{t,2}$ are paths of length~0. In this case $Q''_1$ and~$Q''_2$ are two permissively disjoint $(s,t)$-paths, so we simply define $S_1=Q''_1$ and $S_2=Q''_2$.
Note that Inequality~\ref{eq:sol2b-weights} still holds, and thus $S_1$ and $S_2$ form a solution for~$(G,w,s,t)$with weight at most $w^\star + w(Q_1)+w(Q_2)$ as promised.

Note that applying Lemma~\ref{lem:uncrossing} (twice) and amending all shortcuts in a pair of paths can be performed in linear time.
This finishes the proof of the lemma.
\end{proof}

\subsection{Properties of a Non-separable Solution}
\label{sec:properties}

Let us now turn our attention to the subroutine lying at the heart of our algorithm for \DISP{}: an algorithm that, 
given two source terminals and two sink terminals on some tree~$T \in \T$,
computes two permissively disjoint paths from the two source terminals to the two sink terminals, with minimum total weight. 
It is straightforward to see that any non-separable solution whose paths are in contact at~$T$ contains such a pair of paths. 
Therefore, as described in Section~\ref{sec:nonsep-highlevel}, finding such paths is a necessary step to computing an optimal, 
non-separable solution for our instance~$(G,w,s,t)$.
This section contains observations about the properties of such paths.

Let us now formalize our setting.
Let $a_1, a_2, b_1,$ and $b_2$ be vertices on a fixed tree~$T \in \T$ such that $T[a_1,b_1]$ and $T[a_2,b_2]$ intersect in a path~$X$ (with at least one edge), 
with one component of~$T \setminus X$ containing $a_1$ and $a_2$, and the other containing $b_1$ and $b_2$. 
Let the vertices on~$X$ be $x_1,\dots,x_r$ with $x_1$ being the closest to~$a_1$ and~$a_2$. 
We will use the notation $A_i=T[a_i,x_1]$ and $B_i=T[b_i,x_r]$ for each~$i \in [2]$.
For each $i \in [r]$, let $T_i$ be the maximal subtree of~$T$ containing~$x_i$ but no other vertex of~$X$.
We also define $T_{(i,j)}=\bigcup_{i \leq h \leq j} T_h$ for some $i$ and~$j$ with~$1 \leq i\leq j \leq r$.

For convenience, for any path $Q$ that has $a_i \in \{a_1,a_2\}$ as its endpoint,  we will say that $Q$ \emph{starts} at $a_i$ and \emph{ends} at its other endpoint. 
Accordingly, for vertices $u,v \in V(Q)$ we say that $u$ \emph{precedes} $v$  on~$Q$, or equivalently, $v$ \emph{follows}~$u$ on~$Q$, if $u$ lies on $Q[a_i,v]$. 
When defining a vertex as the ``first'' (or ``last'') vertex with some property on~$Q$ or on a subpath~$Q'$ of $Q$ then, unless otherwise stated, we mean the vertex on~$Q$ or on~$Q'$ that is closest to~$a_i$ (or farthest from~$a_i$, respectively) that has the given property.

We begin with the following observation about the paths $A_1$, $A_2$, $B_1$, and $B_2$.
\begin{lemma}
\label{lem:starting}
Let $Q_1$ and $Q_2$ be two permissively disjoint $(\{a_1,a_2\},\{b_1,b_2\})$-paths in~$G$ with minimum total weight.
For each $i \in [2]$, neither $V(A_i) \setminus \{x_1\}$ nor
  $V(B_i) \setminus \{x_r\}$ can contain vertices both from~$Q_1$ and~$Q_2$.
\end{lemma}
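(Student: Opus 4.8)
The plan is to argue by contradiction using the local-cheapness of $Q_1$ and $Q_2$, which holds by Observation~\ref{obs:locally-cheapest} since $(Q_1,Q_2)$ has minimum total weight. Suppose, say, that $V(A_1)\setminus\{x_1\}$ contains a vertex of $Q_1$ and a vertex of $Q_2$ (the case of $B_i$ being symmetric under swapping the roles of the source and sink sides, and the index $i$ being irrelevant to the argument). Recall $A_1=T[a_1,x_1]$ is a path inside the tree $T\in\T$, so its edges are negative. The first thing I would do is pin down, among the vertices of $A_1\setminus\{x_1\}$ lying on $Q_1\cup Q_2$, two that are ``consecutive'' along $A_1$ and lie on different paths: walking along $A_1$ from $a_1$ towards $x_1$, there must be a first place where the path we are on (one of $Q_1$, $Q_2$) switches, giving vertices $z$ and $z'$ on $A_1$ with $z$ preceding $z'$, $z\in V(Q_j)$, $z'\in V(Q_{3-j})$, and no vertex of $Q_1\cup Q_2$ strictly between them on $A_1$ — and, importantly, no edge of $Q_1\cup Q_2$ on the subpath $A_1[z,z']$ either, since such an edge would be a negative-tree edge and I can also choose $z,z'$ so that $A_1[z,z']$ is internally untouched by both paths.

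The key step is then to observe that $T[z,z']=A_1[z,z']$ is \emph{almost} a shortcut, except that its endpoints lie on different paths rather than the same one. To fix this I would use the fact that $a_1\in\{a_1,a_2\}$ is an endpoint of one of the paths. Consider the subpath of $A_1$ from $a_1$ to whichever of $z,z'$ comes first; the endpoint $a_1$ of $A_1$ is the start of one of $Q_1,Q_2$, so one of $Q_j$ and $Q_{3-j}$ contains $a_1$. Using that both $z$ and $z'$ are reachable inside $A_1$ while staying on $T$, I would build a short tree-path connecting the path through $a_1$ back to itself, or more directly: since $a_1$ lies on one path, say $Q_1$, and $A_1$ goes from $a_1$ monotonically towards $x_1$, the \emph{last} vertex of $A_1\setminus\{x_1\}$ lying on $Q_1$ — call it $w$ — together with $a_1$ gives a subpath $A_1[a_1,w]$ all of whose interior is disjoint from $Q_1\cup Q_2$ provided I also chose $w$ minimally with the separation property; that makes $T[a_1,w]=A_1[a_1,w]$ a genuine shortcut for $(Q_1,Q_2)$ on $Q_1$, contradicting Corollary~\ref{cor:locally-cheapest}. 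The care needed is to set up $w$ and the ``first switch'' vertex so that one of the two resulting tree-subpaths is internally free of both paths and has both endpoints on the \emph{same} $Q_i$; since $A_1$ is a single path in a tree and $Q_1,Q_2$ are permissively disjoint (so they share no internal vertex, and no edge at all), a straightforward case analysis on which of $z,z'$ lies on the path containing $a_1$ closes this.

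Concretely the argument divides into two cases. In the first, $z$ (the one closer to $a_1$ along $A_1$) lies on the path containing $a_1$; then $A_1[a_1,z]\subseteq Q$ is impossible to exploit directly, so instead I look past $z$: actually the cleanest route is to take $z^\ast$ to be the first vertex of $A_1\setminus\{x_1\}$, traversed from $a_1$, that lies on the path \emph{not} containing $a_1$, and $z^{\ast\ast}$ the last vertex of $A_1[a_1,z^\ast]$ before it that lies on the path containing $a_1$; then $A_1[z^{\ast\ast},z^\ast]$ has no internal vertex of $Q_1\cup Q_2$, but its endpoints lie on different paths. This is exactly the configuration ruled out by the proof of Lemma~\ref{lem:twocomps-order}-style reasoning: I would instead amend using that one endpoint is reachable within $T$ from a vertex of the \emph{other} path, contradicting permissive disjointness once we route inside $T$, since $z^{\ast\ast}$ and $z^\ast$ lie on a common path of $T$ with interior avoiding both $Q_1$ and $Q_2$.

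The step I expect to be the main obstacle is making precise why the ``different-paths'' configuration is impossible, i.e.\ translating it into an actual shortcut or into a contradiction with permissive disjointness — the subtlety is that a shortcut by definition needs both endpoints on the \emph{same} path, so the honest fix is to observe that if $z^{\ast\ast}\in V(Q_1)$ and $z^\ast\in V(Q_2)$ with $A_1[z^{\ast\ast},z^\ast]$ internally disjoint from $Q_1\cup Q_2$, then one can reroute $Q_2$ along $T$ to avoid $A_1$ entirely near $x_1$, or simply note that the first vertex of $A_1$ on $Q_1\cup Q_2$ starting from $a_1$ must lie on the path containing $a_1$ (because the subpath of $A_1$ from $a_1$ to that first vertex is internally untouched, hence a shortcut if it returns to the same path, and $a_1$ is forced to be on one of the two paths) — iterating this observation shows \emph{all} of $V(A_1)\setminus\{x_1\}$ that is hit lies on a single path, which is the claim. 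I would present it in that iterative/extremal form to keep the case analysis minimal.
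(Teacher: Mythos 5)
Your plan runs into a genuine obstacle at exactly the point you flag as "the main obstacle," and the fixes you sketch do not close it. The configuration that needs to be excluded is: two vertices $z^{\ast\ast}, z^\ast$ on $A_1\setminus\{x_1\}$, consecutive among the visited vertices of $A_1$, with $z^{\ast\ast}\in V(Q_1)$, $z^\ast\in V(Q_2)$, and $A_1[z^{\ast\ast},z^\ast]$ internally untouched. As you correctly observe, this is \emph{not} a shortcut, and local cheapness says nothing about it. Your proposed escape -- "reroute $Q_2$ along $T$ to avoid $A_1$ entirely near $x_1$" -- is never made concrete, and the iterative claim at the end ("the first vertex of $A_1$ on $Q_1\cup Q_2$ starting from $a_1$ must lie on the path containing $a_1$, hence iterating gives the claim") is logically backwards: the first such vertex is $a_1$ itself (trivially on its own path), and the parenthetical reasoning ("hence a shortcut if it returns to the same path") would, if anything, rule out the \emph{same-path} case rather than the different-path case. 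So the iteration never rules out the mixed configuration, and the proof does not go through.

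What is actually needed here, and what the paper does, is a \emph{crossing reroute} that leaves $A_1$ entirely. Taking $z$ to be the vertex of $Q_2$ on $A_1$ closest to $a_1$, the paper first uses local cheapness to find a vertex $y$ of $Q_1$ on the \emph{other} branch $T[a_2,b_2]$ closest to $a_2$, and then swaps tails:
\[
S_1=T[a_1,z]\cup Q_2[z,b_2],\qquad S_2=T[a_2,y]\cup Q_1[y,b_1].
\]
Showing this pair is strictly cheaper is not a shortcut argument at all: it requires statement~(2) of Lemma~\ref{lem:T-min-pathlength}, applied to the vertex-disjoint subpaths $Q_1[a_1,y]$ and $Q_2[a_2,z]$, together with the observation that $T[a_1,y]$ and $T[a_2,z]$ \emph{properly intersect} (both contain $x_1$), which is what makes the inequality strict. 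Your proposal never invokes any lemma of this type and never identifies the vertex $y$ on the opposite branch, so the crucial weight comparison is missing. In short: the shortcut machinery handles the same-path case, but the mixed case forces you to leave the tree-path $A_1$ and appeal to the quantitative consequence of conservativity in Lemma~\ref{lem:T-min-pathlength}(2); without that, the argument stalls.
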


\begin{proof}
W.l.o.g.\ we suppose $a_i,b_i \in V(Q_i)$ for $i \in [2]$. See Figure~\ref{fig:starting} for an illustration of the proof.
For the sake of contradiction, assume first that there is a vertex on~$A_i$ other than~$x_1$ that is contained in~$Q_{3-i}$; 
by symmetry, we may also assume that $i=1$.
Note that in this case $a_1 \neq a_2$.
Let $z$ be the vertex of~$Q_2$ that is closest to~$a_1$ on~$T[a_1,x_1]$; then $V(T[a_1,z])\cap V(Q_2)=\{z\}$. 
Moreover, for each vertex $z'$ on~$T[a_1,z]$ that is on~$Q_1$, we must have $T[a_1,z'] \subseteq Q_1$, because $Q_1$ and~$Q_2$ are locally cheapest.

\begin{figure}
  \centering
    \begin{tikzpicture}[xscale=0.7, yscale=0.7]

  \node[draw, circle, fill=black, inner sep=1.7pt] (a1) at (-3, 1.5) {};
  \node[draw, circle, fill=black, inner sep=1.7pt] (z)  at (-1, 0.5) {};
  \node[draw, circle, fill=black, inner sep=1.7pt] (a2) at (-3, -1.5) {};
  \node[draw, circle, fill=black, inner sep=1.7pt] (x1) at (0, 0) {};
  \node[draw, circle, fill=black, inner sep=1.7pt] (y)  at (1, 0) {};
  \node[draw, circle, fill=black, inner sep=1.7pt] (xr) at (3, 0) {};
  \node[draw, circle, fill=black, inner sep=1.7pt] (b1) at (6, 1.5) {};
  \node[draw, circle, fill=black, inner sep=1.7pt] (b2) at (6, -1.5) {};
  \coordinate[inner sep=0pt] (h0) at (-2,1) {};
  \coordinate[inner sep=0pt] (h1) at (1.9,0) {};
  \coordinate[inner sep=0pt] (h2) at (2.4,0) {};  
  \coordinate[inner sep=0pt] (h3) at (4,0.5) {};  
  \coordinate[inner sep=0pt] (a1') at (-3,1.4) {};
  \coordinate[inner sep=0pt] (a2') at (-3,-1.6) {};
  \coordinate[inner sep=0pt] (z') at (-1,0.4) {};  
  \coordinate[inner sep=0pt] (x1') at (0,-0.1) {};
  \coordinate[inner sep=0pt] (x1'') at (0,-0.07) {};
  \coordinate[inner sep=0pt] (y') at (1,-0.07) {};  

  \node[left] at (a1) {$a_1$};
  \node[left] at (a2) {$a_2$};
  \node[right] at (b1) {$b_1$};
  \node[right] at (b2) {$b_2$};
  \node[above] at (z) {$z$};
  \node[below right, xshift=-2pt] at (y) {$y$};
  \node[below right, xshift=-2pt] at (x1) {$x_1$};
  \node[below, yshift=-2pt] at (xr) {$x_r$};

  \draw[line width=0.7pt] (a1) to node [pos=0.3, below, yshift=-2pt] {$A_1$} (x1) to (xr) to (b1);
  \draw[line width=0.7pt] (a2) to (x1);
  \draw[line width=0.7pt] (xr) to (b2);
  \draw[line width=1.4pt, blue] (a1) to (h0); 
  \draw[line width=1.4pt, blue, dashed] (h0) to [bend left=30] node[midway, above] {$Q_1$} (y); 
  \draw[line width=1.4pt, blue] (y) to (h1); 
  \draw[line width=1.4pt, blue, dashed] (h1) to [bend left=30] (h3); 
  \draw[line width=1.4pt, blue] (h3) to (b1);

  \draw[line width=1.4pt, green] (a2) to (x1) to (z);
  \draw[line width=1.4pt, green, dashed] (z) to [bend right=80] node[pos=0.6, below] {$Q_2$} (h2); 
  \draw[line width=1.4pt, green]  (h2) to (xr) to (b2);
  
  \draw[line width=1.0pt, loopcolor] (a1') to (z');
  \draw[line width=1.0pt, loopcolor] (a2') to (x1');
  \draw[line width=1.0pt, loopcolor] (x1'') to (y');

  \node[draw, circle, fill=black, inner sep=1.9pt] (a1) at (-3, 1.5) {};
  \node[draw, circle, fill=black, inner sep=1.9pt] (z)  at (-1, 0.5) {};
  \node[draw, circle, fill=black, inner sep=1.9pt] (a2) at (-3, -1.5) {};
  \node[draw, circle, fill=black, inner sep=1.9pt] (x1) at (0, 0) {};
  \node[draw, circle, fill=black, inner sep=1.9pt] (y)  at (1, 0) {};
  \node[draw, circle, fill=black, inner sep=1.9pt] (xr) at (3, 0) {};
  \node[draw, circle, fill=black, inner sep=1.9pt] (b1) at (6, 1.5) {};
  \node[draw, circle, fill=black, inner sep=1.9pt] (b2) at (6, -1.5) {};

\end{tikzpicture}
  \hspace{-5pt}
  \caption{Illustration for the proof of Claim~\ref{clm:Xmonotone}. 
  Edges within~$T$ are depicted using solid lines, edges not in~$T$ using dashed lines. 
  Paths $Q_1$ and~$Q_2$ are shown in \textcolor{blue}{\textbf{blue}} and in \textcolor{green}{\textbf{green}}, respectively. We highlighted the closed walk $W$ in \textcolor{coralred}{\textbf{coral red}}. 
  }
  \label{fig:starting}
\end{figure}
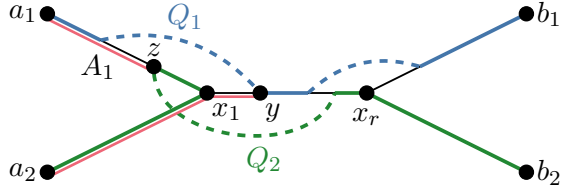

Since $z \in V(Q_2)$ and $z$ is not on~$T[a_2,b_2]$, using that $Q_1$ and~$Q_2$ are locally cheapest, 
we know that there exists a vertex of~$Q_1$ on~$T[a_2,b_2]$. 
Let $y$ be the vertex of~$Q_1$ on~$T[a_2,b_2]$ that is closest to~$a_2$; then $V(T[a_2,y]) \cap V(Q_1)=\{y\}$.
As before, for each vertex $y'$ on~$T[a_2,y]$ that is on~$Q_2$, we must have $T[a_2,y'] \subseteq Q_2$, because $Q_1$ and~$Q_2$ are locally cheapest.

These facts imply that $Q_1[y,b_1]$ shares no vertices with~$T[a_1,z]$, and similarly, $Q_2[z,b_2]$ shares no vertices with~$T[a_2,y]$. 
Define paths 
\begin{align*}
    S_1 &=T[a_1,z] \cup Q_2[z,b_2], \\
    S_2 &=T[a_2,y] \cup Q_1[y,b_1]. 
\end{align*}
Observe that $S_1$ and $S_2$ are permissively disjoint $(\{a_1,a_2\},\{b_1,b_2\})$-paths due to the permissive disjointness of $Q_1$ and $Q_2$, and 
our observation above. 

Using statement~(2) of Lemma~\ref{lem:T-min-pathlength} for paths~$Q_1[a_1,y]$ and~$Q_2[a_2,z]$, we obtain
\begin{align*}
 w(S_1)+w(S_2) &= w(T[a_1,z])+ w(Q_2[z,b_2])+w(T[a_2,y])+ w(Q_1[y,b_1])  \\
  & < w(Q_1[a_1,y])+w(Q_2[a_2,z]) + w(Q_2[z,b_2])+ w(Q_1[y,b_1])  \\
  & = w(Q_1)+w(Q_2)
\end{align*}
where the inequality follows from the fact that $T[a_1,y]$ and $T[a_2,z]$ properly intersect each other (since both contain~$x_1$).
This contradicts our definition of $Q_1$ and~$Q_2$. 
Thus, for each $i \in [2]$, we have proved that $Q_1$ and~$Q_2$ cannot both contain vertices of $V(A_i) \setminus \{x_1\}$. 

The second claim of the lemma follows by symmetry (i.e., by 
switching the roles of $A_i$ and~$B_i$ as well as that of~$x_1$ and~$x_r$).
\end{proof}

Using Lemmas~\ref{lem:closed-walk},~\ref{lem:T-min-pathlength} and~\ref{lem:starting}, we can establish the following properties of
a non-separable minimum-weight solution.
\begin{definition}[\bf $X$-monotone path]
A path~$Q$ starting at $a_1$ or $a_2$ is \emph{$X$-monotone} 
if for any vertices $u_1$ and $u_2$ on~$Q$ such that $u_1 \in V(T_{j_1})$ and $u_2 \in V(T_{j_2})$ for some $j_1<j_2$ 
it holds that $u_1$ precedes $u_2$ on~$Q$.
\end{definition}

\begin{definition}[\bf Plain path]
\label{def:plain}
A path~$Q$ is \emph{plain}, if whenever $Q$ contains some $x_i \in V(X)$, 
then the vertices of~$Q$ in~$T_i$ induce a path in~$T_i$. 
In other words, if vertices $x_j \in V(X)$ and $u \in V(T_j)$  
both appear on~$Q$, then $T[u,x_j] \subseteq Q$.
\end{definition}

\begin{lemma}
\label{lem:plain+X-monotone}
Let $Q_1$ and $Q_2$ be two permissively disjoint $(\{a_1,a_2\},\{b_1,b_2\})$-paths in~$G$ with minimum total weight.
Then both~$Q_1$ and~$Q_2$ are $X$-monotone and plain.
\end{lemma}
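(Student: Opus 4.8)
The plan is to prove the two properties in turn --- $X$-monotonicity first, plainness second --- and in both cases the mechanism is a minimal-counterexample argument driven by the uncrossing lemmas of \cref{sec:init_weight} together with the observation (\cref{obs:locally-cheapest}, applied to the permissively disjoint pair $(Q_1,Q_2)$) that $Q_1$ and $Q_2$ are locally cheapest. Throughout I assume w.l.o.g.\ that $a_i,b_i\in V(Q_i)$ for $i\in[2]$, and I will use freely that $A_1,A_2\subseteq T_1$, that $B_1,B_2\subseteq T_r$, and that for every $j\in[r-1]$ the edge $x_jx_{j+1}$ of~$X$ splits $T$ into a part $T^{\le j}$ with vertex set $V(T_1)\cup\dots\cup V(T_j)$ (containing $a_1,a_2$) and a part $T^{>j}$ with vertex set $V(T_{j+1})\cup\dots\cup V(T_r)$ (containing $b_1,b_2$).

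For $X$-monotonicity I would first record the following reformulation: the $(a_i,b_i)$-path $Q_i$ is $X$-monotone if and only if, for every $j\in[r-1]$, the vertices of $V(Q_i)\cap V(T)$ --- listed in the order they occur along $Q_i$ starting from $a_i$ --- form a (possibly empty) block lying in $T^{\le j}$ followed by a block lying in $T^{>j}$; in other words, $Q_i$ crosses each boundary edge $x_jx_{j+1}$ ``only once'' in the vertex sense. So suppose some $Q_i$, say $Q_1$, fails this at a \emph{smallest} index~$j$: there are vertices $v\in V(Q_1)\cap V(T^{>j})$ and $v'\in V(Q_1)\cap V(T^{\le j})$ with $v$ preceding $v'$ on~$Q_1$, and I would take the violating pair so that the subpath $Q_1[v,v']$ is inclusion-minimal over all such pairs and both paths, so that this ``return excursion'' is innermost and carries no avoidable vertices of $Q_1\cup Q_2$. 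Since $Q_2$ too runs from $a_2\in T^{\le j}$ to $b_2\in T^{>j}$, the two paths together cross the boundary $x_jx_{j+1}$ in an interleaved pattern, and I would then perform an exchange in exactly the spirit of the proof of \cref{lem:twocomps-order}: reroute the pieces of $Q_1$ and $Q_2$ that run between the two sides through suitable subpaths of~$T$ meeting~$X$ near~$x_j$, producing two permissively disjoint $(\{a_1,a_2\},\{b_1,b_2\})$-paths of strictly smaller total weight. The weight bookkeeping is the familiar one: \cref{lem:closed-walk} bounds the deleted closed walk(s) from below by~$0$, and the inserted subpaths of~$T$ have strictly negative weight since every edge of~$T$ is negative. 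Here \cref{lem:starting} is used to control how $Q_1$ and $Q_2$ may interact on $A_1,A_2,B_1,B_2$, i.e.\ at the two ends of~$X$. The resulting contradiction with the minimality of $w(Q_1)+w(Q_2)$ proves $X$-monotonicity; this is the assertion to which Figure~\ref{fig:starting} refers, and I would isolate it as a separate claim.

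For plainness I would again argue by a minimal counterexample, now using $X$-monotonicity as an available tool. Suppose some $Q_i$, say $Q_1$, contains a vertex $x_j\in V(X)$ together with a vertex $u\in V(T_j)\cap V(Q_1)$ such that $T[u,x_j]\not\subseteq Q_1$, chosen so that $|T[u,x_j]|$ is minimum over all such violations (for $Q_1$ and $Q_2$). Let $u^\star$ be the first vertex of $T[u,x_j]\setminus\{u\}$ met when travelling from~$u$ towards~$x_j$ that lies on~$Q_1$; minimality of $|T[u,x_j]|$ forces an edge of $T[u,x_j]$ outside $Q_1$ to lie on $T[u,u^\star]$, so $T[u,u^\star]\not\subseteq Q_1$ while $T[u,u^\star]$ has no internal vertex (hence no edge) on~$Q_1$. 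If $T[u,u^\star]$ moreover has no internal vertex of~$Q_2$, then $T[u,u^\star]$ is a shortcut for $Q_1$ and $Q_2$, contradicting local cheapness. Otherwise $T[u,u^\star]$ has an internal vertex $w\in V(Q_2)\cap V(T_j)$, and I would derive a contradiction from the coexistence of $w$ on~$Q_2$ and of $u,u^\star,x_j$ on~$Q_1$ inside the single subtree~$T_j$: using the already-established $X$-monotonicity of both paths (so that their footprints in~$T_j$ are contiguous subpaths) together with \cref{lem:T-min-pathlength}, I would reroute the relevant subpaths of $Q_1$ and $Q_2$ through~$T$ to obtain a cheaper permissively disjoint pair, again contradicting minimality.

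The step I expect to be the main obstacle is the exchange/rerouting itself. Unlike in \cref{lem:twocomps-order}, where the two offending excursions sit on two \emph{different} negative trees, here everything happens inside a single tree~$T$ --- and, for plainness, inside one hanging subtree~$T_j$ --- so $Q_1$ and $Q_2$ can meet~$T$ in many configurations. For each of them one has to verify that the reroutings keep the two paths permissively disjoint, that no negative edge is used more than once (so that \cref{lem:closed-walk} applies), and that a strictly negative subpath of~$T$ is genuinely inserted rather than merely shuffled around. Getting the minimality/innermostness conventions right so that this case analysis closes --- and choosing the precise intermediate statements (the $X$-monotonicity claim, and which of \cref{lem:starting} and the previously established properties to invoke at each point) --- is where the real work lies.
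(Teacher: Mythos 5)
Your overall strategy---treat $X$-monotonicity and plainness in that order, argue by minimal counterexample, and close each case by an uncrossing exchange bounded via \cref{lem:closed-walk} and local cheapness---is the same strategy the paper uses. But the proposal stops short of the work that actually carries the load, and you acknowledge this explicitly in your last paragraph. Two concrete issues are worth flagging. First, for $X$-monotonicity the paper does not take a smallest index $j$ with an inclusion-minimal excursion $Q_1[v,v']$; it defines a \emph{reversed pair} $(u_1,u_2)$ on a single path and minimizes first the distance $|T[u_1,x_1]|$ and then $|T[u_2,x_r]|$. This specific choice is what guarantees, via \cref{lem:starting}, that the two ``re-entry'' vertices $v\in T[x_1,u_1]$ and $v'\in T[u_2,x_r]$ of the \emph{other} path are \emph{not} themselves a reversed pair, hence that $Q_2[a_2,v]$ and $Q_2[v',b_2]$ are genuinely disjoint and can be recombined with subtree paths. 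Your inclusion-minimality criterion does not obviously yield this; one would have to re-derive that the splicing points on $Q_2$ sit on the right sides of one another before the exchange is even well-defined.

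Second, and more seriously, your plainness plan is framed as a local argument ``inside the single subtree $T_j$''. That is not how the exchange works. In the paper's Case A the rerouting uses a vertex $x_h$ on $X$ that lies on $T[x_1,x_{j-1}]$, i.e.\ potentially far away from $T_j$; the new path $S_2$ contains $T[x_h,x_j]$, a genuinely negative subpath along the trunk, while the other new path $S_1$ splices together $Q_1[a_1,z']$, a short segment $T[z',z]$ inside $T_j$, and $Q_2[z,b_2]$. The closed walk whose weight one needs to be nonnegative straddles both the spur $T[z,z']$ and the trunk segment $T[x_h,x_j]$; it is not confined to $T_j$. Related to this, your parenthetical ``so that their footprints in $T_j$ are contiguous subpaths'' overstates what $X$-monotonicity gives you: it says the visits to $T_j$ are \emph{consecutive along $Q_i$}, not that the intersection with $T_j$ is a path \emph{in~$T_j$}---the latter \emph{is} plainness, so invoking it would be circular. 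What $X$-monotonicity does legitimately supply in the paper's proof is that the offending $Q_1$ cannot re-enter $x_j$ via a trunk edge $x_{j-1}x_j$ or $x_jx_{j+1}$, which is how one shows $T[x_h,x_j]$ has no inner vertex of $Q_1$. So your high-level plan is the right one, but the concrete minimality conventions and the geometry of the exchange (where the negative subpath gets inserted, and that it really lies along $X$ rather than inside $T_j$) are precisely what is missing.
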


\begin{proof}
W.l.o.g.\ we suppose $a_i,b_i \in V(Q_i)$ for $i \in [2]$.
First we prove that $Q_1$ and~$Q_2$ are $X$-monotone, and then prove their plainness.
\begin{claim}
\label{clm:Xmonotone}
Both $Q_1$ and~$Q_2$ are $X$-monotone.
\end{claim}
\begin{claimproof}
For vertices $u_1 \in V(T_{j_1})$ and $u_2 \in V(T_{j_2})$ for some indices $j_1  < j_2$, we  
say that $(u_1,u_2)$ is a  \emph{reversed pair}, if $u_1$ and $u_2$ both appear on~$Q_i$  for some $i \in [2]$, but $u_2$ precedes $u_1$ on~$Q_i$. The statement of the lemma is that no reversed pair exists for $Q_1$ or $Q_2$. 
Assume the contrary, and choose $u_1$ and $u_2$ so that 
$u_1$ is as close to~$x_1$ as possible (i.e., it minimizes $|T[u_1,x_1]|$),
and subject to that, $u_2$ is as close to~$x_r$ as possible. By symmetry, we may assume that $u_1$ and $u_2$ are both on~$Q_1$. 
See Figure~\ref{fig:monotone} for an illustration.

First note that $T[a_1,u_1] \not\subseteq Q_1$ (because $u_2$ precedes~$u_1$), and therefore
$T[a_1,u_1]$ must contain a vertex of $Q_2$, since $Q_1$ and $Q_2$ are locally cheapest.
Let $v$ be a vertex on $T[a_1,u_1]$ closest to~$u_1$ among those contained in~$V(Q_2)$.  
Similarly, note that $T[u_2,b_1] \not\subseteq Q_2$, and therefore 
$T[u_2,b_1]$ must contain a vertex of $Q_2$, since $Q_1$ and $Q_2$ are locally cheapest.
Let $v'$ be a vertex on $T[u_2,b_1]$ closest to~$u_2$ among those contained in~$V(Q_1)$.  

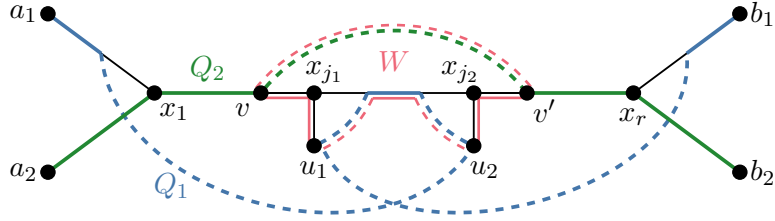
\begin{figure}
  \centering
    \begin{tikzpicture}[xscale=0.7, yscale=0.7]

  \node[draw, circle, fill=black, inner sep=1.7pt] (a1) at (-2, 1.5) {};
  \node[draw, circle, fill=black, inner sep=1.7pt] (a2) at (-2, -1.5) {};
  \node[draw, circle, fill=black, inner sep=1.7pt] (x1) at (0, 0) {};
  \node[draw, circle, fill=black, inner sep=1.7pt] (v)  at (2, 0) {};
  \node[draw, circle, fill=black, inner sep=1.7pt] (xj1) at (3, 0) {};
  \node[draw, circle, fill=black, inner sep=1.7pt] (u1) at (3, -1) {};
  \node[draw, circle, fill=black, inner sep=1.7pt] (xj2) at (6, 0) {};  
  \node[draw, circle, fill=black, inner sep=1.7pt] (u2) at (6, -1) {};
  \node[draw, circle, fill=black, inner sep=1.7pt] (v') at (7, 0) {};
  \node[draw, circle, fill=black, inner sep=1.7pt] (xr) at (9, 0) {};
  \node[draw, circle, fill=black, inner sep=1.7pt] (b1) at (11, 1.5) {};
  \node[draw, circle, fill=black, inner sep=1.7pt] (b2) at (11, -1.5) {};
  \coordinate[inner sep=0pt] (h0) at (-1,0.75) {};
  \coordinate[inner sep=0pt] (h1) at (4,0) {};
  \coordinate[inner sep=0pt] (h2) at (5,0) {};
  \coordinate[inner sep=0pt] (h3) at (10,0.75) {};  
  \coordinate[inner sep=0pt] (h1') at (4.1,-0.1) {};  
  \coordinate[inner sep=0pt] (h2') at (4.9,-0.1) {};    

  \node[left] at (a1) {$a_1$};
  \node[left] at (a2) {$a_2$};
  \node[right] at (b1) {$b_1$};
  \node[right] at (b2) {$b_2$};
  \node[below, yshift=-2pt] at (u1) {$u_1$};
  \node[below, yshift=-2pt, xshift=4pt] at (u2) {$u_2$};  
  \node[below left] at (v) {$v$};
  \node[below right, xshift=-1.5pt, yshift=1.5pt] at (v') {$v'$};  
  \node[above, xshift=4pt] at (xj1) {$x_{j_1}$};    
  \node[above left, xshift=6pt] at (xj2) {$x_{j_2}$};    
  \node[below right, xshift=-2pt] at (x1) {$x_1$};
  \node[below, yshift=-2pt] at (xr) {$x_r$};

  \draw[line width=0.7pt] (a1) to  (x1) to (xr) to (b1);
  \draw[line width=0.7pt] (a2) to (x1) to (xj1) to (u1);
  \draw[line width=0.7pt] (u2) to (xj2) to (xr) to (b2);
  
  \draw[line width=1.4pt, blue] (a1) to (h0); 
  \draw[line width=1.4pt, blue, dashed] (h0) to [bend right=70] node[pos=0.4, left, xshift=-4pt] {$Q_1$} (u2);
  \draw[line width=1.4pt, blue, dashed] (u2) to [bend left=20] (h2); 
  \draw[line width=1.4pt, blue] (h2) to (h1); 
  \draw[line width=1.4pt, blue, dashed] (h1) to [bend left=20] (u1) to [bend right=82] (h3); 
  \draw[line width=1.4pt, blue] (h3) to (b1);

  \draw[line width=1.4pt, green] (a2) to (x1) to node[pos=0.5, above] {$Q_2$}(v);
  \draw[line width=1.4pt, green, dashed] (v) to [bend left=50]  (v'); 
  \draw[line width=1.4pt, green]  (v') to (xr) to (b2);
  
  \draw[line width=1.0pt, loopcolor] (u1.north west) to (xj1.south west);
  \draw[line width=1.0pt, loopcolor] (xj1.south west) to (v.south east);
  \draw[line width=1.0pt, loopcolor, dashed] (v.north west) to [bend left=52] (v'.north east);
  \draw[line width=1.0pt, loopcolor, dashed] (h1') to [bend left=20] (u1.south east);
  \draw[line width=1.0pt, loopcolor] (h1') to node[pos=0.5, above, yshift=6pt] {$W$}  (h2');
  \draw[line width=1.0pt, loopcolor, dashed] (h2') to [bend right=20] (u2.south west);
  \draw[line width=1.0pt, loopcolor] (u2.north east) to (xj2.south east);
  \draw[line width=1.0pt, loopcolor] (xj2.south east) to (v'.south west);  

  \node[draw, circle, fill=black, inner sep=1.9pt] (a1) at (-2, 1.5) {};
  \node[draw, circle, fill=black, inner sep=1.9pt] (a2) at (-2, -1.5) {};
  \node[draw, circle, fill=black, inner sep=1.9pt] (x1) at (0, 0) {};
  \node[draw, circle, fill=black, inner sep=1.9pt] (v)  at (2, 0) {};
  \node[draw, circle, fill=black, inner sep=1.9pt] (xj1) at (3, 0) {};
  \node[draw, circle, fill=black, inner sep=1.9pt] (u1) at (3, -1) {};
  \node[draw, circle, fill=black, inner sep=1.9pt] (xj2) at (6, 0) {};  
  \node[draw, circle, fill=black, inner sep=1.9pt] (u2) at (6, -1) {};
  \node[draw, circle, fill=black, inner sep=1.9pt] (v') at (7, 0) {};
  \node[draw, circle, fill=black, inner sep=1.9pt] (xr) at (9, 0) {};
  \node[draw, circle, fill=black, inner sep=1.9pt] (b1) at (11, 1.5) {};
  \node[draw, circle, fill=black, inner sep=1.9pt] (b2) at (11, -1.5) {};

\end{tikzpicture}
  \hspace{-5pt}
  \caption{Illustration for the proof of Lemma~\ref{lem:starting}. 
  Edges within~$T$ are depicted using solid lines, edges not in~$T$ using dashed lines. 
  Paths $Q_1$ and~$Q_2$ are shown in \textcolor{blue}{\textbf{blue}} and in \textcolor{green}{\textbf{green}}, respectively. We highlighted paths $T[a_1,z]$ and $T[a_2,y]$ in \textcolor{coralred}{\textbf{coral red}}. 
  }
  \label{fig:monotone}
\end{figure}

By Lemma~\ref{lem:starting} we know that $v$ lies on~$T[x_1,u_1]$, and $v'$ lies on~$T[u_2,x_r]$.
Note that no inner vertex of~$T[v,u_1]$ is contained in $V(Q_2)$ by the definition of~$v$. 
Let $e_1$ denote the edge $T[x_1,u_1]$ incident to~$u_1$, and let $u'_1$ be the other endpoint of~$e_1$. 
Observe that $e_1 \in Q_1$ is not possible, because then $(u'_1,u_2)$ would also be a reversed pair, which would contradict our choice of~$(u_1,u_2)$
(since $u'_1$ is closer to~$x_1$ than~$u_1$). 
Using that $Q_1$ and $Q_2$ are locally cheapest, we have 
that no inner vertex of $T[v,u_1]$ is contained in $V(Q_1 \cup Q_2)$.
Reasoning analogously, it also follows that no inner vertex of $T[u_2,v']$ is contained in $V(Q_1 \cup Q_2)$.

Note that $v \in V(T_j)$ for some $j\leq j_1$, and $v' \in V(T_{j'})$ for some $j' \geq j_2$, implying $j<j'$. 
Moreover, $v$ is closer to~$x_1$ than~$u_1$. Hence, by our choice of $(u_1,u_2)$, the pair $(v,v')$ is not a reversed pair. 
This implies that $v$ is on $Q_2[a_2,v']$.
In particular, $Q_2[a_2,v]$ and $Q_2[v',b_2]$ are disjoint (because $v \neq v'$).
Define paths 
\begin{align*}
    S_1 &=Q_1[a_1,u_2] \cup T[u_2,v'] \cup Q_2[v',b_2], \\
    S_2 &=Q_2[a_2,v] \cup T[v,u_1] \cup Q_1[u_1,b_1]. 
\end{align*}
Observe that $S_1$ and $S_2$ are permissively disjoint $(\{a_1,a_2\},\{b_1,b_2\})$-paths due to the permissive disjointness of $Q_1$ and $Q_2$ 
and our observations on $T[v,u_1]$ and $T[u_2,v']$. 
Moreover, the walk 
\[ W=T[v,u_1] \cup Q_1[u_1,u_2] \cup T[u_2,v'] \cup Q_2[v',v]
\]
is closed and does not contain any edge of~$E^-$ more than once, since $Q_1$ and $Q_2$ share no edges, and no edge of $T[v,u_1] \cup T[u_2,v']$ 
is contained in~$Q_1 \cup Q_2$. Thus, $w(W)\geq 0$ by Lemma~\ref{lem:closed-walk}. This implies
\begin{align*}
\!\!w(S_1)+w(S_2) &= w(Q_1)+w(Q_2)-w(W \setminus (T[v,u_1] \cup T[u_2,v'])) + w(T[v,u_1] \cup T[u_2,v']) \\
&< w(Q_1)+w(Q_2),
\end{align*} 
because $w(T[v,u_1] \cup T[u_2,v']) <0$. This contradicts our definition of $Q_1$ and $Q_2$.
Hence, we have proved that no reversed pair exists, and thus both~$Q_1$ and~$Q_2$ are $X$-monotone.
\end{claimproof}

\begin{claim}
\label{clm:plain}
Both $Q_1$ and~$Q_2$ are plain.
\end{claim}
\begin{claimproof}
Suppose for contradiction that $x_j \in V(Q_1)$ and $u \in V(Q_1) \cap V(T_j)$, but $T[u,x_j]  \not \subseteq Q_1$;
the case when $u$ and $x_j$ are on~$Q_2$ is symmetric. 
Clearly, $u \neq x_j$.
Since $Q_1$ and $Q_2$ are locally cheapest, by $T[u,x_j] \not \subseteq Q_1$ we know that
$T[u,x_j]$ must contain a vertex of~$Q_2$. 

Let $z$ be the closest vertex on $T[u,x_j]$ to~$u$ that appears on~$Q_2$, and let $z'$ be the closest vertex on~$T[u,z]$ to~$z$  that appears on~$Q_1$. Then no inner vertex of $T[z,z']$ contained in $V(Q_1 \cup Q_2)$. 
Since $Q_1$ and $Q_2$ are locally cheapest, it follows from the definition of~$z$ and $z'$ that $T[u,z'] \subseteq Q_1$. 

We now distinguish between two cases; see Figure~\ref{fig:plain} for an illustration. 

\begin{figure}
  \centering
  \begin{subfigure}[b]{0.5\textwidth}
    \centering  
    \begin{tikzpicture}[xscale=0.7, yscale=0.7]

  \node[] (A) at (-0.9,2) {Case A:};
  \node[draw, circle, fill=black, inner sep=1.7pt] (a1) at (-1, 1) {};
  \node[draw, circle, fill=black, inner sep=1.7pt] (a2) at (-1, -1) {};
  \node[draw, circle, fill=black, inner sep=1.7pt] (x1) at (0, 0) {};
  \node[draw, circle, fill=black, inner sep=1.7pt] (xh)  at (2, 0) {};
  \node[draw, circle, fill=black, inner sep=1.7pt] (xj) at (4, 0) {};
  \node[draw, circle, fill=black, inner sep=1.7pt] (z) at (4, -1.5) {};  
  \node[draw, circle, fill=black, inner sep=1.7pt] (z')  at (4, -2.25) {};    
  \node[draw, circle, fill=black, inner sep=1.7pt] (u)  at (4, -3) {};      
  \node[draw, circle, fill=black, inner sep=1.7pt] (xr) at (6, 0) {};
  \node[draw, circle, fill=black, inner sep=1.7pt] (b1) at (7, 1) {};
  \node[draw, circle, fill=black, inner sep=1.7pt] (b2) at (7, -1) {};
  \coordinate[inner sep=0pt] (h0) at (-0.5,0.5) {};
  \coordinate[inner sep=0pt] (h1) at (6.5,-0.5) {};  
  \coordinate[inner sep=0pt] (h2) at (4,-0.75) {};    


  \node[left] at (a1) {$a_1$};
  \node[left] at (a2) {$a_2$};
  \node[right] at (b1) {$b_1$};
  \node[right] at (b2) {$b_2$};
  \node[right] at (u) {$u$};
  \node[right, yshift=3pt] at (z) {$z$};  
  \node[left] at (z') {$z'$};    
  \node[above, yshift=1.7pt, xshift=4pt] at (xh) {$x_h$};
  \node[above, xshift=4pt] at (xj) {$x_j$};    
  \node[below right, xshift=-2pt] at (x1) {$x_1$};
  \node[above, yshift=1.7pt, xshift=-6pt] at (xr) {$x_r$};

  \draw[line width=0.7pt] (a1) to  (x1) to (xr) to (b1);
  \draw[line width=0.7pt] (xh) to (xj) to (z');
  \draw[line width=0.7pt] (xr) to (b2);
  
  \draw[line width=1.4pt, blue] (a1) to (h0); 
  \draw[line width=1.4pt, blue, dashed] (h0) to [bend right=50] node[pos=0.5, above right, xshift=0pt] {$Q_1$} (u);
  \draw[line width=1.4pt, blue] (u) to (z'); 
  \draw[line width=1.4pt, blue, dashed] (z'.east) to [bend right=70] (xj.east); 
  \draw[line width=1.4pt, blue] (xj) to (xr) to (b1);

  \draw[line width=1.4pt, green] (a2) to (x1) to node[pos=0.5, above] {$Q_2$}(xh);
  \draw[line width=1.4pt, green, dashed] (xh.south) to [bend right=40]  (z); 
  \draw[line width=1.4pt, green]  (z) to (h2);
  \draw[line width=1.4pt, green, dashed] (h2) to [bend right=10]  (h1);   
  \draw[line width=1.4pt, green]  (h1) to (b2);
  
  \draw[line width=1.0pt, loopcolor] (xh.south east) to (xj.south west);
  \draw[line width=1.0pt, loopcolor, dashed] (z'.north east) to  [bend right=70] (xj.south east);
  \draw[line width=1.0pt, loopcolor] (z'.north east) to (z.south east);
  \draw[line width=1.4pt, loopcolor, dashed] (xh.south east) to [bend right=40] node[pos=0.7,above,yshift=4pt,xshift=4pt] {$W$} ( z.north west);   

  \node[draw, circle, fill=black, inner sep=1.9pt] (a1) at (-1, 1) {};
  \node[draw, circle, fill=black, inner sep=1.9pt] (a2) at (-1, -1) {};
  \node[draw, circle, fill=black, inner sep=1.9pt] (x1) at (0, 0) {};
  \node[draw, circle, fill=black, inner sep=1.9pt] (xh)  at (2, 0) {};
  \node[draw, circle, fill=black, inner sep=1.9pt] (xj) at (4, 0) {};
  \node[draw, circle, fill=black, inner sep=1.9pt] (z) at (4, -1.5) {};  
  \node[draw, circle, fill=black, inner sep=1.9pt] (z')  at (4, -2.25) {};    
  \node[draw, circle, fill=black, inner sep=1.9pt] (u)  at (4, -3) {};      
  \node[draw, circle, fill=black, inner sep=1.9pt] (xr) at (6, 0) {};
  \node[draw, circle, fill=black, inner sep=1.9pt] (b1) at (7, 1) {};
  \node[draw, circle, fill=black, inner sep=1.9pt] (b2) at (7, -1) {};

\end{tikzpicture}
  \end{subfigure} 
  \hspace{-5pt}
  \begin{subfigure}[b]{0.49\textwidth}
    \centering  
    \begin{tikzpicture}[xscale=0.7, yscale=0.7]

  \node[] (B) at (-0.9,2) {Case B:};
  \node[draw, circle, fill=black, inner sep=1.7pt] (a1) at (-1, 1) {};
  \node[draw, circle, fill=black, inner sep=1.7pt] (a2) at (-1, -1) {};
  \node[draw, circle, fill=black, inner sep=1.7pt] (x1) at (0, 0) {};
  \node[draw, circle, fill=black, inner sep=1.7pt] (xh)  at (4, 0) {};
  \node[draw, circle, fill=black, inner sep=1.7pt] (xj) at (2, 0) {};
  \node[draw, circle, fill=black, inner sep=1.7pt] (z) at (2, -1.5) {};  
  \node[draw, circle, fill=black, inner sep=1.7pt] (z')  at (2, -2.25) {};    
  \node[draw, circle, fill=black, inner sep=1.7pt] (u)  at (2, -3) {};      
  \node[draw, circle, fill=black, inner sep=1.7pt] (xr) at (6, 0) {};
  \node[draw, circle, fill=black, inner sep=1.7pt] (b1) at (7, 1) {};
  \node[draw, circle, fill=black, inner sep=1.7pt] (b2) at (7, -1) {};
  \coordinate[inner sep=0pt] (h0) at (-0.5,-0.5) {};
  \coordinate[inner sep=0pt] (h1) at (6.5,0.5) {};  
  \coordinate[inner sep=0pt] (h2) at (2,-0.75) {};    


  \node[left] at (a1) {$a_1$};
  \node[left] at (a2) {$a_2$};
  \node[right] at (b1) {$b_1$};
  \node[right] at (b2) {$b_2$};
  \node[below, yshift=-3pt] at (u) {$u$};
  \node[right, yshift=-3pt] at (z) {$z$};  
  \node[below right, yshift=4pt] at (z') {$z'$};    
  \node[above, yshift=1.7pt, xshift=4pt] at (xh) {$x_h$};
  \node[above, xshift=4pt] at (xj) {$x_j$};    
  \node[below right, xshift=-2pt] at (x1) {$x_1$};
  \node[above, yshift=1.7pt, xshift=-6pt] at (xr) {$x_r$};

  \draw[line width=0.7pt] (a2) to  (x1) to (xr) to (b1);
  \draw[line width=0.7pt] (xh) to (xj) to (z');
  \draw[line width=0.7pt] (xr) to (b2);

  \draw[line width=1.4pt, blue] (a1) to (x1) to node[pos=0.5, above] {$Q_1$}(xj);
  \draw[line width=1.4pt, blue, dashed] (xj) to [bend right=70] (u);
  \draw[line width=1.4pt, blue] (u) to (z'); 
  \draw[line width=1.4pt, blue, dashed] (z') to [in=-70, out=20] (h1); 
  \draw[line width=1.4pt, blue] (h1) to (b1);

  \draw[line width=1.4pt, green] (a2) to (h0); 
  \draw[line width=1.4pt, green, dashed] (h0) to [bend right=40] node [pos=0.3, below] {$Q_2$} (h2); 
  \draw[line width=1.4pt, green]  (h2) to (z);
  \draw[line width=1.4pt, green, dashed] (z.east) to [bend right=10]  (xh.south);   
  \draw[line width=1.4pt, green]  (xh) to (xr) to (b2);

  \draw[line width=1.4pt, loopcolor, dashed] (xj.south) to [bend right=76] (u.north);  
  \draw[line width=1.1pt, loopcolor] (xj.south east) to (xh.south west);
  \draw[line width=1.4pt, loopcolor, dashed] (z.north east) to [bend right=10] node[pos=0.5,above left, xshift=2pt, yshift=-2pt] {$W$} (xh);   
  \draw[line width=1.4pt, loopcolor] (z.south west) to (u.north west);  

  \node[draw, circle, fill=black, inner sep=1.7pt] (a1) at (-1, 1) {};
  \node[draw, circle, fill=black, inner sep=1.7pt] (a2) at (-1, -1) {};
  \node[draw, circle, fill=black, inner sep=1.7pt] (x1) at (0, 0) {};
  \node[draw, circle, fill=black, inner sep=1.7pt] (xh)  at (4, 0) {};
  \node[draw, circle, fill=black, inner sep=1.7pt] (xj) at (2, 0) {};
  \node[draw, circle, fill=black, inner sep=1.7pt] (z) at (2, -1.5) {};  
  \node[draw, circle, fill=black, inner sep=1.7pt] (z')  at (2, -2.25) {};    
  \node[draw, circle, fill=black, inner sep=1.7pt] (u)  at (2, -3) {};      
  \node[draw, circle, fill=black, inner sep=1.7pt] (xr) at (6, 0) {};
  \node[draw, circle, fill=black, inner sep=1.7pt] (b1) at (7, 1) {};
  \node[draw, circle, fill=black, inner sep=1.7pt] (b2) at (7, -1) {};

\end{tikzpicture}  
  \end{subfigure}
  \caption{Illustration for the proof of Claim~\ref{clm:plain}. 
  Edges within~$T$ are depicted using solid lines, edges not in~$T$ using dashed lines. 
  Paths $Q_1$ and~$Q_2$ are shown in \textcolor{blue}{\textbf{blue}} and in \textcolor{green}{\textbf{green}}, respectively. We highlighted the closed walk $W$ in \textcolor{coralred}{\textbf{coral red}}. 
  }
  \label{fig:plain}
\end{figure}
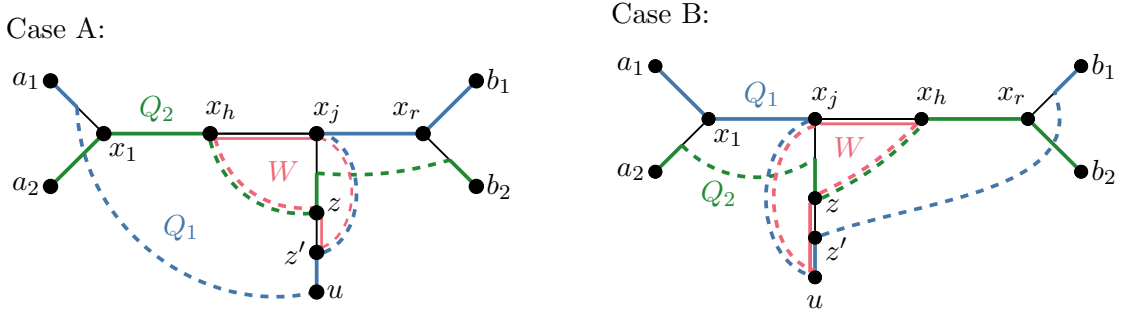

{\bf Case A.} First assume that $u$ precedes $x_j$ on~$Q_1$.
By $T[u,z'] \subseteq Q_1$ this implies that $z'$ also precedes $x_j$ on~$Q_1$. 

Using again that $Q_1$ and $Q_2$ are locally cheapest, by $T[a_1,x_j] \notin Q_1$ we know that $[Tx_1,x_j]$ must contain a vertex of~$Q_2$. Let $x_h$ be the vertex on~$T[x_1,x_j]$ closest to~$x_j$. We show that neither $Q_1$ nor $Q_2$ contains an inner vertex of $T[x_h,x_j]$. For~$Q_2$, this follows from the definition of~$x_h$. For $Q_1$, this follows from  Claim~\ref{clm:Xmonotone} 
since $u$ precedes~$x_j$ on~$Q_1$, the $X$-monotonicity of~$Q_1$ implies that $Q_1$ enters $x_j$ through an edge with both endpoints in~$T_j$, 
so in particular, $x_j x_{j-1} \notin Q_1$; using that $Q_1$ and $Q_2$ are locally cheapest, it follows that $Q_1$ contains no inner vertices of~$T[x_h,x_j]$.
By the $X$-monotonicity of~$Q_2$, we also know that $z$ follows $x_h$ on~$Q_2$.

Define the paths
\begin{align*}
S_1 &= Q_1[a_1,z'] \cup T[z',z] \cup Q_2[z,b_2], \\
S_2 &= Q_2[a_2,x_h] \cup T[x_h,x_j] \cup Q_1[x_j,b_1].
\end{align*}
Observe that $S_1$ and $S_2$ are permissively disjoint $(\{a_1,a_2\},\{b_1,b_2\})$-paths due to the permissive disjointness of $Q_1$ and $Q_2$, and our observations on $T[z,z']$ and $T[x_h,x_j]$. 
Moreover, the walk 
\[ W=Q_1[z',x_j] \cup T[x_j,x_h] \cup Q_2[x_h,z] \cup T[z,z']
\]
is closed and does not contain any edge of~$E^-$ more than once, since $Q_1$ and $Q_2$ share no edges, and no edge of $T[z,z'] \cup T[x_j,x_h]$ 
is contained in~$Q_1 \cup Q_2$. Thus, $w(W)\geq 0$ by Lemma~\ref{lem:closed-walk}. This implies
\begin{align*}
\!\!w(S_1)+w(S_2) &= w(Q_1)+w(Q_2)-w(W \setminus (T[z,z'] \cup T[x_j,x_h])) + w(T[z,z'] \cup T[x_j,x_h]) \\
&< w(Q_1)+w(Q_2),
\end{align*} 
because $w(T[z,z'] \cup T[x_j,x_h]) <0$. This contradicts our definition of $Q_1$ and $Q_2$.

{\bf Case B.} 
Assume now that $u$ follows $x_j$ on~$Q_1$; our reasoning is very similar to the previous case.
By $T[u,z'] \subseteq Q_1$ this implies that $z'$ also follows $x_j$ on~$Q_1$. 

Using again that $Q_1$ and $Q_2$ are locally cheapest, by $T[x_j,b_1] \notin Q_1$ we know that $T[x_j,x_r]$ must contain a vertex of~$Q_2$. Let $x_h$ be the vertex on~$T[x_j,x_r]$ closest to~$x_j$. We show that neither $Q_1$ nor $Q_2$ contains an inner vertex of $T[x_j,x_h]$. For~$Q_2$, this follows from the definition of~$x_h$. For $Q_1$, this follows from Lemma~\ref{lem:plain+X-monotone} (implying $x_j x_{j+1} \notin Q_1$) 
and the fact that $Q_1$ and $Q_2$ are locally cheapest.
Note also that by the $X$-monotonicity of~$Q_2$, we know that $z$ precedes $x_h$ on~$Q_2$.

Define the paths
\begin{align*}
S_1 &= Q_1[a_1,x_j] \cup T[x_j,x_h] \cup Q_2[x_h,b_2], \\
S_2 &= Q_2[a_2,z] \cup T[z,z'] \cup Q_1[z,b_1].
\end{align*}
Observe that $S_1$ and $S_2$ are permissively disjoint $(\{a_1,a_2\},\{b_1,b_2\})$-paths due to the permissive disjointness of $Q_1$ and $Q_2$, and our observations on $T[z',z]$ and $T[x_j,x_h]$. 
Moreover, the walk 
\[ W=Q_1[x_j,z'] \cup T[z',z] \cup Q_2[z,x_h] \cup T[x_h,x_j] 
\]
is closed and does not contain any edge of~$E^-$ more than once, since $Q_1$ and $Q_2$ share no edges, and no edge of $T[z',z] \cup T[x_h,x_j]$ 
is contained in~$Q_1 \cup Q_2$. Thus, $w(W)\geq 0$ by Lemma~\ref{lem:closed-walk}. This implies
\begin{align*}
\!\! w(S_1)+w(S_2) &= w(Q_1)+w(Q_2)-w(W \setminus (T[z',z] \cup T[x_h,x_j])) + w(T[z',z] \cup T[x_h,x_j]) \\
&< w(Q_1)+w(Q_2),
\end{align*} 
because $w(T[z,z'] \cup T[x_j,x_h]) <0$. This contradicts our definition of $Q_1$ and $Q_2$.
\end{claimproof}
Having proved Claims~\ref{clm:Xmonotone} and~\ref{clm:plain}, 
Lemma~\ref{lem:plain+X-monotone} follows.
\end{proof}

The following observation summarizes our understanding on how an optimal solution uses paths $A_1$, $A_2$, $B_1$, and~$B_2$.

\begin{lemma}
\label{lem:contains-A1orA2}
If $Q_1$ and $Q_2$ are two permissively disjoint $(\{a_1,a_2\},\{b_1,b_2\})$-paths that are locally cheapest and also plain, then one of them contains $A_1$ or~$A_2$, and one of them contains~$B_1$ or~$B_2$.    
\end{lemma}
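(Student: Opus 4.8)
The plan is to reduce the statement---using the plainness of $Q_1$ and $Q_2$---to the claim that $x_1$ lies on one of the two paths (and, symmetrically, that $x_r$ does), and then to prove this claim by exhibiting a shortcut whenever it fails, contradicting local cheapness.

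First I would normalise: assume without loss of generality that $a_i,b_i\in V(Q_i)$ for $i\in[2]$. The statement about $B_1,B_2$ follows from the statement about $A_1,A_2$ by the symmetry that interchanges the two ends of $X$ (reversing both paths turns $a_i,x_1,A_i$ into $b_i,x_r,B_i$, and permissive disjointness, local cheapness and plainness are all invariant under reversal), so it suffices to show that $Q_1$ or $Q_2$ contains $A_1$ or $A_2$. If $a_1=x_1$ or $a_2=x_1$ this is trivial, as then $A_1$ or $A_2$ is the empty edge set; so assume $a_1\neq x_1\neq a_2$. It now suffices to show $x_1\in V(Q_1)\cup V(Q_2)$: if $x_1\in V(Q_j)$, then $a_j\in V(T_1)\cap V(Q_j)$ and $x_1\in V(X)\cap V(Q_j)$, so plainness of $Q_j$ (Definition~\ref{def:plain}, with $u=a_j$) gives $A_j=T[a_j,x_1]\subseteq Q_j$.

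So the crux is to show that $x_1$ appears on $Q_1$ or $Q_2$. Assume it does not. Since $X$ has an edge, $r\geq 2$, hence $b_1,b_2\in V(T_r)\setminus V(T_1)$ while $a_1,a_2\in V(T_1)$; thus each $Q_i$ has a vertex inside $V(T_1)$ and a vertex outside it. The only edge of $T$ with exactly one endpoint in $V(T_1)$ is $x_1x_2$, and $x_1\notin V(Q_1\cup Q_2)$ by assumption; consequently each $Q_i$ can pass between $V(T_1)$ and its complement only along edges of $E\setminus E(T)$, and no edge of $Q_1\cup Q_2$ lying in $T_1$ is incident to $x_1$. I would then look for a shortcut inside $T_1$ that runs through $x_1$: concretely, among the vertices of $(V(Q_1)\cup V(Q_2))\cap V(T_1)$, take those closest to $x_1$ within $T_1$, and show that a suitable pair of them---lying on the same path $Q_j$ and on opposite sides of $x_1$ in $T_1$---is joined in $T$ by a path through $x_1$ with clean interior, hence by a shortcut of $Q_j$; this contradicts local cheapness and finishes the proof.

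I expect this last step to be the main obstacle. Producing the shortcut requires tracking, along a tree-path inside $T_1$, which vertices lie on $Q_1$, on $Q_2$, or on neither, and the awkward case is the one in which every vertex of $(V(Q_1)\cup V(Q_2))\cap V(T_1)$ belongs to a single component $C$ of $T_1-x_1$ (so that no such pair is separated by $x_1$); there the contradiction has to be extracted from the fact that $Q_1$ and $Q_2$ must still leave $V(T_1)$ and can re-enter $C$ only along non-tree edges. One must also handle the permissive-disjointness corner cases $a_1=a_2$ (then $A_1=A_2$, so $x_1$ lying on either path suffices) and, in the symmetric half, $b_1=b_2$.
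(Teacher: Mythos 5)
Your reduction to the claim that $x_1 \in V(Q_1 \cup Q_2)$ (and, symmetrically, $x_r \in V(Q_1 \cup Q_2)$) is sound and is exactly the structure of the paper's own argument, as is the handling of the degenerate case $a_i = x_1$. The gap is in the crux step. You propose to exhibit, inside $T_1$, two vertices of $(V(Q_1) \cup V(Q_2)) \cap V(T_1)$ that lie on the \emph{same} path $Q_j$ and on opposite sides of $x_1$, whose tree-path is a shortcut through $x_1$. Such a pair need not exist. Let $u_1$ be the vertex of $A_1$ closest to $x_1$ that is on $Q_1 \cup Q_2$, and $u_2$ analogously on $A_2$. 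A perfectly ordinary configuration has $u_1 \in V(Q_1)$ and $u_2 \in V(Q_2)$ (for instance, $u_1 = a_1$ and $u_2 = a_2$ with $Q_1 \cup Q_2$ touching $T_1$ nowhere else); then no pair of vertices in $T_1$ on the same path is separated by $x_1$ with a clean interior, and your plan produces no shortcut.

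Moreover, the ``awkward case'' you flag --- all of $(V(Q_1) \cup V(Q_2)) \cap V(T_1)$ in a single component of $T_1 - x_1$ --- cannot actually occur once you assume $a_1 \neq x_1 \neq a_2$: the paths $A_1$ and $A_2$ meet only at $x_1$, so $a_1 \in V(Q_1)$ and $a_2 \in V(Q_2)$ lie in different components of $T_1 - x_1$. So you worried about the impossible case and missed the real one.

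The paper's proof of $x_1 \in V(Q_1 \cup Q_2)$ is global rather than confined to $T_1$. Besides $u_1, u_2$ it also defines $v_1, v_2$ (the vertices of $B_1, B_2$ closest to $x_r$ that meet $Q_1 \cup Q_2$) and, if $Q_1 \cup Q_2$ meets $X$ at all, the vertex $z$ of $X$ closest to $x_1$ on $Q_1 \cup Q_2$. Supposing $x_1 \notin V(Q_1 \cup Q_2)$, either $u_1, u_2, z$ are three distinct vertices (if $Q_1 \cup Q_2$ meets $X$) or $u_1, u_2, v_1, v_2$ are four distinct vertices (if not); in either case at least two lie on the same path $Q_j$. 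The tree-path between any such pair runs through $x_1$ and, by the ``closest to $x_1$/$x_r$/$z$'' choices, has no inner vertex in $Q_1 \cup Q_2$, hence is a shortcut. Crucially, this shortcut may leave $T_1$ entirely --- e.g.\ $T[u_1, v_1]$ traverses all of $X$ into $T_r$, and $T[u_1, z]$ traverses an initial segment of $X$. This is what your $T_1$-local search is missing, and fixing it essentially requires rediscovering the paper's argument.
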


\begin{proof}
W.l.o.g.\ we suppose $a_i,b_i \in V(Q_i)$ for $i \in [2]$.
First, we show that both $x_1$ and $x_r$ are contained in $Q_1 \cup Q_2$. For each $i \in [2]$, let $u_i$ denote the vertex closest to~$x_1$ on~$A_i$ that is contained in~$Q_1 \cup Q_2$.
Similarly, let $v_i$ denote the vertex closest to~$x_r$ on~$B_i$ that is contained in~$Q_1 \cup Q_2$. 
If $V(X) \cap V(Q_1 \cup Q_2) =  \emptyset$, then $u_1, u_2, v_1$ and $v_2$ are four distinct vertices, with at least two of them belonging to the same path~$Q_1$ or $Q_2$, contradicting our assumption that $Q_1$ and $Q_2$ are locally cheapest. 
Consider the vertex $z$ of $V(X) \cap V(Q_1 \cup Q_2)$ closest to~$x_1$ on~$X$.
If $x_1 \notin V(Q_1 \cup Q_2)$, then $u_1, u_2$ and $z$ are three distinct vertices, so at least two of them belong to the same path, $Q_1$ or $Q_2$
which contradicts our assumption that $Q_1$ and $Q_2$ are locally cheapest. 
Hence, $x_1 \in V(Q_1 \cup Q_2)$, and an analogous argument shows $x_r \in V(Q_1 \cup Q_2)$.

Now, let $Q_i$ contain $x_1$. As $Q_i$ is plain and $a_i \in V(Q_i)$ we get that $V(A_i) \subseteq V(Q_i)$. As $Q_1$ and $Q_2$ are locally cheapest, we get that $A_i \subseteq Q_i$. The analogous argument shows that $B_i$ is contained in $Q_1$ or $Q_2$. 
\end{proof}

\subsection{Computing Partial Solutions}
\label{sec:partsol}

In this section we design a dynamic programming algorithm that computes two permissively disjoint $(\{a_1,a_2\},\{b_1,b_2\})$-paths of minimum total weight 
(we keep all definitions introduced in Section~\ref{sec:properties}, including our assumptions on vertices~$a_1,b_1,a_2,$ and $b_2$).
In Section~\ref{sec:properties} we have established that two permissively disjoint $(\{a_1,a_2\},\{b_1,b_2\})$-paths of minimum total weight 
are necessarily $X$-monotone, plain, and they form a locally cheapest pair.
A natural approach would be to require these same properties from a partial solution that we aim to compute. 
However, it turns out that the property of $X$-monotonicity is quite hard to ensure when building subpaths of a solution. 
The following relaxed version of monotonicity can be satisfied much easier, and still suffices for our purposes:

\begin{definition}[\bf Quasi-monotone path]
\label{def:quasimonotone}
A path~$P$ starting at~$a_1$ or $a_2$ is \emph{quasi-monotone}, if the following holds:
if $x_i \in V(P)$ for some $i \in [r]$, then all vertices in $\bigcup_{h \in [i-1]} V(T_h) \cap V(P)$ precede $x_i$ on~$P$,
and all vertices in $\bigcup_{h \in [r] \setminus [i]} V(T_h) \cap V(P)$ follow $x_i$ on~$P$.
\end{definition}

\begin{definition}[\bf Well-formed path pair]
Two paths $P_1$ and $P_2$ form a \emph{well-formed pair}, if they are locally cheapest, and both are plain and quasi-monotone.
\end{definition}

We are now ready to define partial solutions, the central notion that our dynamic programming algorithm relies on. 

\begin{definition}[\bf Partial solution]
\label{def:partsol}
Given vertices $u \in  V(T_i)$ and $v \in V(T_j)$ for some~$i \leq j$ and a set $\tau \subseteq \T \setminus \{T\}$,
two paths~$Q_1$ and $Q_2$ form a \emph{partial solution $(Q_1,Q_2)$ for}~$(u,v,\tau)$, if 
\begin{description}
    \item[(a)] $Q_1$ and $Q_2$ are permissively disjoint $(\{a_1,a_2\},\{u,v\})$-paths; 
    \item[(b)] $Q_1$ and $Q_2$ are a well-formed pair;
    \item[(c)] $Q_1$ ends with the subpath $T[x_i,u]$;
    \item[(d)] $V(T_{(i+1,r)}) \cap V(Q_2)\subseteq \{v\}$;
    \item[(e)] if $Q_1 \cup Q_2$ contains a vertex of some~$T' \in \T \setminus \{T\}$, then $T' \in \tau$;
    \item[(f)] there exists no tree $T' \in \T \setminus \{T\}$ such that $Q_1$ and $Q_2$ are in contact at~$T'$. 
\end{description}
\end{definition}

We will say that the vertices of $V(\T \setminus (\tau \cup \{T\}))$ are \emph{forbidden} for $(\tau,T)$; then
condition~(e) asks for $Q_1 \cup Q_2$ not to contain vertices forbidden for~$(\tau,T)$. 

Before turning our attention to the problem of computing partial solutions, let us first show 
how partial solutions enable us to find two permissively disjoint $(\{a_1,a_2\},\{b_1,b_2\})$-paths.%
\begin{lemma}
\label{lem:partsol-to-permdisj}
Paths $P_1$ and~$P_2$ are permissively disjoint $(\{a_1,a_2\},\{b_1,b_2\})$-paths of minimum weight in~$G$ if and only if
they form a partial solution for~$(b_h,b_{3-h},\T \setminus \{T\})$ of minimum weight for some \mbox{$h \in [2]$}.
\end{lemma}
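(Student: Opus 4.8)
The plan is to prove the biconditional in Lemma~\ref{lem:partsol-to-permdisj} by establishing the two directions separately, carefully reconciling Definition~\ref{def:partsol} (partial solution) with the structural results obtained for optimal permissively disjoint path pairs in Section~\ref{sec:properties}.

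\textbf{The ``only if'' direction.} Suppose $P_1$ and $P_2$ are permissively disjoint $(\{a_1,a_2\},\{b_1,b_2\})$-paths of minimum weight in~$G$. By Observation~\ref{obs:locally-cheapest} (or rather the fact that a minimum-weight pair admits no shortcut), $(P_1,P_2)$ is locally cheapest, and by Lemma~\ref{lem:plain+X-monotone} both paths are plain and $X$-monotone; since $X$-monotonicity implies quasi-monotonicity (a quick check against Definitions~\ref{def:quasimonotone} and the $X$-monotone definition), $(P_1,P_2)$ is a well-formed pair, giving condition~(b). W.l.o.g.\ $b_h \in V(P_1)$ and $b_{3-h} \in V(P_2)$ for some $h \in [2]$; write $u=b_h$, $v=b_{3-h}$. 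Condition~(a) is immediate with $(u,v)$ as the pair of endpoints. For condition~(c): $b_h$ lies in $T_{j}$ where $x_j=x_r$ (since $b_h \in V(B_h)$ and $B_h=T[b_h,x_r]$), and Lemma~\ref{lem:contains-A1orA2} together with plainness forces $B_h \subseteq P_1$, hence $P_1$ ends with $T[x_r,b_h]=T[x_i,u]$; here I must be careful that the indices line up, i.e.\ that $u\in V(T_i)$ with $i=r$ and $x_i=x_r$, so the trees $T_i$ are as defined and $b_h$ sits in the last one. Condition~(d) holds vacuously since $T_{(i+1,r)}=T_{(r+1,r)}=\emptyset$. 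Condition~(e) holds with $\tau=\T\setminus\{T\}$ trivially. Condition~(f) needs the most care: I must show $P_1$ and $P_2$ are not in contact at any $T'\in\T\setminus\{T\}$; this is exactly where I invoke that $(\emptyset,\T\setminus\{T\},\emptyset)$ being a $T$-valid partition (the setting in which $a_1,a_2,b_1,b_2$ were chosen, cf.\ Lemma~\ref{lem:type2bsol} and the surrounding discussion in Section~\ref{sec:properties}) rules this out — more precisely, I should argue that if $P_1,P_2$ were in contact at some $T'$, then one could apply the recursive step instead, contradicting the assumption underlying the whole section, OR derive a contradiction directly via Lemma~\ref{lem:twocomps-order}/Lemma~\ref{lem:childcomp}. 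Finally, minimality of the partial solution among all partial solutions for $(b_h,b_{3-h},\T\setminus\{T\})$ follows from the ``if'' direction applied in reverse: any partial solution for this triple is in particular a permissively disjoint $(\{a_1,a_2\},\{b_1,b_2\})$-path pair, so its weight is $\ge w(P_1)+w(P_2)$.

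\textbf{The ``if'' direction.} Suppose $(P_1,P_2)$ is a partial solution for $(b_h,b_{3-h},\T\setminus\{T\})$ of minimum weight, for some $h\in[2]$. By condition~(a), $P_1,P_2$ are permissively disjoint $(\{a_1,a_2\},\{b_h,b_{3-h}\})$-paths, hence permissively disjoint $(\{a_1,a_2\},\{b_1,b_2\})$-paths. It remains to show they have minimum weight among all such pairs. Let $(P_1^\ast,P_2^\ast)$ be a minimum-weight permissively disjoint $(\{a_1,a_2\},\{b_1,b_2\})$-pair; we may assume it is locally cheapest and, by the ``only if'' direction just proved, it is a partial solution for $(b_{h'},b_{3-h'},\T\setminus\{T\})$ for some $h'\in[2]$. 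If $h'=h$ we are done by minimality of $(P_1,P_2)$. If $h'\ne h$, I need to observe that the definition of partial solution is symmetric in the roles of the two sink terminals up to swapping which path is $Q_1$ and which is $Q_2$ — but condition~(c) breaks this symmetry, as it singles out $Q_1$ as the path ending in $T[x_i,u]$. The clean fix: show that for a minimum-weight pair, BOTH $T[x_r,b_1]\subseteq P^\ast_{j_1}$ and $T[x_r,b_2]\subseteq P^\ast_{j_2}$ where $b_i\in V(P^\ast_{j_i})$, so relabeling $Q_1:=$ the path containing $b_h$ always yields condition~(c); hence $(P^\ast_1,P^\ast_2)$ (suitably relabeled) is also a partial solution for $(b_h,b_{3-h},\T\setminus\{T\})$, and minimality of $(P_1,P_2)$ gives $w(P_1)+w(P_2)\le w(P^\ast_1)+w(P^\ast_2)$, so equality holds.

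\textbf{Main obstacle.} I expect the delicate point to be condition~(f) in the forward direction — arguing that a global minimum-weight $(\{a_1,a_2\},\{b_1,b_2\})$-pair is never in contact at a tree $T'\ne T$. In the context of the paper this is guaranteed by the hypothesis that $(\emptyset,\T\setminus\{T\},\emptyset)$ is $T$-valid, which is exactly the case handled by this subroutine (Lemma~\ref{lem:type2bsol}); one must be precise about whether this is part of the standing assumptions of Section~\ref{sec:partsol} (it is, via the setup inherited from Section~\ref{sec:properties} and Section~\ref{sec:nonsep-highlevel}) or whether it needs a short independent argument. The secondary subtlety is the index bookkeeping around conditions~(c) and~(d): verifying that $b_h\in V(T_r)$, that $x_r$ is the unique vertex of $X$ in $T_r$, and that plainness plus Lemma~\ref{lem:contains-A1orA2} genuinely force $P_1$ to end with the full path $T[x_r,b_h]$ rather than merely to contain $x_r$. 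Everything else is a routine matching of definitions.
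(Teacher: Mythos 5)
Your plan matches the paper's: establish conditions (a)--(e) via \cref{obs:locally-cheapest}, \cref{lem:plain+X-monotone}, and \cref{lem:contains-A1orA2}, and observe that any partial solution is in particular a permissively disjoint $(\{a_1,a_2\},\{b_1,b_2\})$-pair, which yields the ``if'' direction. Your index bookkeeping ($i=r$, condition (d) vacuous, etc.)\ is correct and is exactly what the paper does. So the overall route is the same; the question is whether your two flagged concerns are handled properly.

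On condition (f), you are right that something is needed, and in fact the paper's printed proof does not check it either. But neither of your two suggested resolutions closes the gap. The first (``invoke the $T$-valid partition'') is circular: \cref{lem:partsol-to-permdisj} is a self-contained statement about $G$, $T$, and the four terminals, with no standing hypothesis about a solution of \DISP{} or a $T$-valid partition floating around; it must hold for any graph satisfying the geometric assumptions on $a_1,a_2,b_1,b_2$ stated at the start of Section~\ref{sec:properties}. The second (``apply \cref{lem:twocomps-order} or \cref{lem:childcomp} directly'') fails because those lemmas are stated for openly disjoint \emph{$(s,t)$-paths} of minimum total weight, whereas here the two paths have four distinct endpoints. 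A clean fix is the same dummy-terminal trick the paper itself uses in \cref{def:sub-instances-Ts}: attach a new source $s'$ to $a_1,a_2$ and a new sink $t'$ to $b_1,b_2$ with edge weights $|w(T[a_1,a_2])|/2$ and $|w(T[b_1,b_2])|/2$ respectively; conservativeness is preserved by \cref{lem:T-min-pathlength}(1), the negative trees are unchanged, and a minimum-weight permissively disjoint $(\{a_1,a_2\},\{b_1,b_2\})$-pair extends to a minimum-weight openly disjoint $(s',t')$-pair. Each extended path now visits $T$, then $T'$, then $T$ again, so \cref{lem:childcomp} gives $V(Q_2)\cap V(T')=\emptyset$, contradicting the assumed contact. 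This is a real (if small) gap in the published proof that you correctly sensed but did not correctly fill.

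In the ``if'' direction, your ``clean fix'' for the $h'\ne h$ case rests on a false claim. You assert that for a minimum-weight pair \emph{both} $B_1\subseteq P^\ast_{j_1}$ and $B_2\subseteq P^\ast_{j_2}$ hold, so the pair would be a partial solution for both orderings of $(b_1,b_2)$. But if $b_1\ne b_2$ and $x_r\notin\{b_1,b_2\}$, then $x_r$ lies on exactly one of $P^\ast_1,P^\ast_2$ (they are permissively disjoint, and $x_r$ is not an endpoint), and only that one path can contain its $B_i$; \cref{lem:contains-A1orA2} promises exactly one of them does, not both. Thus the minimum-weight pair is in general a partial solution for only one value of $h$. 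The correct resolution is simpler and is what the paper (tersely) does and what Algorithm~\ref{alg:PermDisj} implements on Line~\ref{line:PD-final}: read ``of minimum weight for some $h\in[2]$'' as minimizing over both choices of $h$. With that reading, the ``only if'' direction shows the global optimum is attained by a partial solution for at least one $h$, and since every partial solution for either $h$ is by condition~(a) a permissively disjoint $(\{a_1,a_2\},\{b_1,b_2\})$-pair, the two minima coincide.
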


\begin{proof}
First assume that $P_1$ and~$P_2$ are permissively disjoint $(\{a_1,a_2\},\{b_1,b_2\})$-paths in~$G$. 
By Lemma~\ref{lem:contains-A1orA2} we know that one of them contains~$B_1$ or~$B_2$, so define~$h$ such that $B_h \subseteq P_1 \cup P_2$.
By Observation~\ref{obs:locally-cheapest} and Lemmas\ref{lem:plain+X-monotone} we know that $P_1$ and~$P_2$ are a well-formed pair.
Note also that condition~(d) holds vacuously, since $b_1,b_2 \in V(T_r)$. 
Condition~(e) holds trivially, since there are no vertices forbidden for~$(\T \setminus \{T\},T)$. 
Therefore, if $P_1$ ends with~$B_h$, then $(P_1,P_2)$ is a partial solution for~$(b_h,b_{3-h},\T)$,
and if $P_2$ ends with~$B_h$, then $(P_2,P_1)$ is a partial solution for~$(b_h,b_{3-h},\T)$.

Since a partial solution for $(b_h,b_{3-h},\T \setminus \{T\})$ for some~$h \in [2]$ is by 
definition a pair of two permissively disjoint $(\{a_1,a_2\},\{b_1,b_2\})$-paths,
the lemma follows.
\end{proof}

We now present our approach for computing partial solutions using dynamic programming.

\paragraph*{Computing partial solutions: high-level view.}
For each $u \in  V(T_i)$ and $v \in V(T_j)$ for some $i\leq j$, and each $\tau \subseteq \T \setminus \{T\}$,
we are going to compute a partial solution for $(u,v,\tau)$ of minimum weight, 
denoted by $F(u,v,\tau)$, using dynamic programming;
if there exists no partial solution for $(u,v,\tau)$, 
we set $F(u,v,\tau)=\varnothing$. 

To apply dynamic programming, 
we fix an ordering~$\prec$ over~$V(T)$ fulfilling the condition that for each $i'<i$, $u \in V(T_i)$ and $u' \in V(T_{i'})$ we have $u' \prec u$. 
We compute the values $F(u,v,\tau)$ based on the ordering~$\prec$ in the sense that 
$F(u',v',\tau')$ is computed before $F(u,v,\tau)$ whenever $u' \prec u$.
This computation is performed by Algorithm~\ref{alg:PartSol} which determines a partial solution~$F(u,v,\tau)$
based on partial solutions already computed.

To compute $F(u,v,\tau)$ in a recursive manner, we use an observation that
either the partial solution has a fairly simple structure, 
or it strictly contains a partial solution for~$(u',v',\tau')$ for some vertices~$u'$ and~$v'$ with $u' \in V(T_{i'})$ and $i'<i$, and some set~$\tau' \subseteq \tau$. 
We can thus try all possible values for~$u', v'$ and~$\tau'$, and use the partial solution~$(Q'_1,Q'_2)$ 
we have already computed and stored in $F(u',v',\tau')$.
To obtain a partial solution for~$(u,v,\tau)$ based on~$Q'_1$ and~$Q'_2$, we append paths to~$Q'_1$ and to~$Q'_2$ so that 
they fulfill the requirements of Definition~\ref{def:partsol} -- most importantly, that 
$Q_1$ ends with~$T[x_i,u]$, that $Q_2$ ends at~$v$, and that $Q_1 \cup Q_2$ contains no vertex of~$V(\T \setminus (\tau \cup \{T\}))$. 
To this end, we create a path $P_1=Q'_2 \cup T[v',u]$ 
and a path $P_2=Q'_1 \cup R$ where
$R$ is a shortest $(u',v)$-path in a certain auxiliary graph.
Essentially, we use the tree~$T$ for getting from~$v'$ to~$u$, and 
we use the ``remainder'' of the graph for getting from~$u'$ to~$v$; 
note that we need to avoid the forbidden vertices and ensure condition~(f) as well.
The precise definition of the auxiliary subgraph of~$G$ that we use for this purpose is provided in Definition~\ref{def:auxgraph}.
If the obtained path pair $(P_1,P_2)$ is indeed a partial solution for~$(u,v,\tau)$, then we store it. 
After trying all possible values for~$u',v',$ and~$\tau'$, we select a partial solution that has minimum weight among those we computed.

\begin{definition}[\bf Auxiliary graph]
\label{def:auxgraph}
For some $T \in \T$, let $P \subseteq T$  be a path within~$T$, let $u$ and $v$ be two vertices on~$T$, and let $\tau \subseteq \T \setminus \{T\}$. 
Then the \emph{auxiliary graph} $G \langle P,u,v,\tau \rangle$ 
denotes the graph defined as
\[
G \langle P,u,v,\tau \rangle = G - \left(\bigcup\{ V(T_h): V(T_h) \cap V(P)=\emptyset\}  \cup  V(P) \setminus \{u,v\} \cup 
V(\T \setminus (\tau \cup \{T\})) \right).
\]
In other words, we obtain $G \langle P,u,v,\tau \rangle$ from~$G$ by deleting all trees~$T_h$ that do not intersect~$P$, and deleting $P$ itself as well,
while taking care not to delete $u$ or $v$, and additionally deleting all vertices forbidden for~$(\tau,T)$.
\end{definition}

Working towards a proof for the correctness of Algorithm~\ref{alg:PartSol},
we start with two simple observations. 
The first one, stated by Lemma~\ref{lem:paths-vs-X} below, essentially says that a path in a partial solution that uses a subtree~$T_h$ of~$T$ for some~$h \in [r]$
should also go through the vertex~$x_h$ whenever possible, that is, unless the other path uses~$x_h$.

\begin{lemma}
\label{lem:paths-vs-X}
Let $Q_1$ and $Q_2$ be two permissively disjoint, locally cheapest $(\{a_1,a_2\},\{x_i,v\})$-paths for some $v \in V(T_j)$ where $1 \leq i\leq j \leq r$.
Let $z \in V(T_h)$ for some $h \leq j$ such that $h<j$ or $x_h \in V(T[z,v])$. 
If $z\in V(Q_1 \cup Q_2)$, then $x_h \in V(Q_1 \cup Q_2)$.
\end{lemma}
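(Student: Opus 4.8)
\textbf{Proof plan for Lemma~\ref{lem:paths-vs-X}.}

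The plan is to argue by contradiction, exploiting the fact that $Q_1$ and $Q_2$ are locally cheapest and that every vertex of $T_h$ reaches the rest of $T$ only through $x_h$. So suppose $z \in V(T_h) \cap V(Q_1 \cup Q_2)$ but $x_h \notin V(Q_1 \cup Q_2)$; say $z$ lies on $Q_1$. Since the endpoints of $Q_1$ are $a_1$ (or $a_2$) and one of $\{x_i, v\}$, I would first locate these endpoints relative to $T_h$. The starting endpoint $a_1$ or $a_2$ lies in the component of $T \setminus X$ containing $x_1$, hence is not an inner vertex of $T_h$ unless $h = 1$ and it equals $x_1$ — but $x_1 \notin V(Q_1 \cup Q_2)$ by assumption, so the starting endpoint is not in $V(T_h) \setminus \{x_h\}$, i.e.\ it lies ``outside'' $T_h$. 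For the terminal endpoint: if it is $x_i$, then since $x_i \in V(X)$ and $x_i \neq x_h$ (as $x_h \notin V(Q_1\cup Q_2)$), again the terminal endpoint is outside $T_h$. If the terminal endpoint is $v \in V(T_j)$, then either $h < j$, in which case $T_h \neq T_j$ and $v \notin V(T_h)$, or $h = j$ and $x_h \in V(T[z,v])$; in the latter case $v \in V(T_h)$, but the hypothesis $x_h \in V(T[z,v])$ together with $x_h \notin V(Q_1)$ will be the key leverage point.

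The core of the argument: consider the maximal subpath $Q_1'$ of $Q_1$ whose vertex set lies entirely in $V(T_h) \setminus \{x_h\}$ and which contains $z$. Because both endpoints of $Q_1$ lie outside $V(T_h)\setminus\{x_h\}$ (in the case the terminal endpoint is $x_i$ or lies in $T_j$ with $j > h$), the subpath $Q_1'$ is a proper subpath of $Q_1$, and the two vertices of $Q_1$ immediately flanking $Q_1'$ are not in $V(T_h) \setminus \{x_h\}$. But any edge leaving $T_h$ must pass through $x_h$, since $x_h$ is the unique vertex of $T_h$ on $X$ and $T_h$ is a maximal subtree of $T$ meeting $X$ only in $x_h$ — wait, that is only true for edges of $T$, not for arbitrary edges of $G$. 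This is the subtlety I need to handle: a path in $G$ can leave $T_h$ via a non-tree edge without visiting $x_h$. So instead I would argue structurally using \emph{plainness} or directly using the locally-cheapest property: the first vertex of $Q_1$ on $T_h$ and the last vertex of $Q_1$ on $T_h$, call them $p$ and $q$, satisfy $p, q \in V(T_h)$; if neither equals $x_h$, then $T_h[p, x_h]$ and $T_h[q, x_h]$ both avoid $Q_2$ (since $x_h \notin V(Q_2)$ either and any vertex of $Q_2$ on these paths closer to $x_h$ would have to itself be reachable — here I'd invoke that $x_h$ separates $T_h$ from the rest of $T$, so $Q_2$ entering $T_h[p,x_h]$ near $x_h$ is impossible as $Q_2$ would need $x_h$ to get there through $T$, or uses a non-tree edge, a case to rule out via the $(\{a_1,a_2\},\{x_i,v\})$ endpoints being outside $T_h$). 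Then $T_h[p,q] \subseteq Q_1$ would follow from locally-cheapest-ness (otherwise $T_h[p,q]$ is a shortcut), and $x_h \in T_h[p,q]$ since $x_h$ separates $p$ from $q$ in the tree $T_h$ unless $p$ and $q$ are on the same side — but both being on the ``$z$-side'' contradicts $T_h[p,q]\subseteq Q_1$ together with $z$ being between them. In the remaining case the terminal endpoint is $v$ with $h = j$ and $x_h \in V(T[z,v])$: then $v$ and $z$ are on opposite sides of $x_h$ within $T_h$, so the subpath $Q_1[z,v]$ must ``cross'' from one side to the other, and since $x_h \notin V(Q_1)$ it must use a non-tree edge to do so, but then I would derive a shortcut or use $X$-monotonicity-type arguments.

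\textbf{Main obstacle.} The delicate point is precisely that a path in $G$ (as opposed to in $T$) can exit the subtree $T_h$ through a non-tree edge without touching $x_h$, so $x_h$ is \emph{not} a cut vertex of $G$. The whole lemma only holds because $Q_1, Q_2$ are locally cheapest: whenever a path wanders into $T_h$ and out again without using $x_h$, the tree-path connecting the entry and exit points within $T_h$ (which does pass near $x_h$) is a shortcut, contradicting locally-cheapest-ness, \emph{provided} this tree-path is disjoint from the other path. Ensuring that disjointness — i.e.\ that the short tree-path inside $T_h$ avoids $Q_{3-i}$ — is where I expect to spend most of the effort, and it should follow from $x_h \notin V(Q_1 \cup Q_2)$ combined with the observation that $Q_{3-i}$ could only reach the relevant portion of $T_h$ by itself using $x_h$ (impossible) or a non-tree edge (which would then generate its own shortcut, and one would iterate or take an extremal choice of entry/exit vertices to terminate the argument). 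I would set this up by choosing, among all violating configurations, one minimizing $|T[z, x_h]|$, so that no inner vertex of the relevant tree-path lies on $Q_1 \cup Q_2$, making the shortcut argument clean.
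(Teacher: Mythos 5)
Your proposal has the right high-level intuition (a path that wanders into the subtree $T_h$ and leaves again without touching $x_h$ should create a shortcut, which locally-cheapest-ness forbids), but it is not a complete proof, and you yourself flag the two places where it stalls: (a) ensuring the candidate shortcut inside $T_h$ is disjoint from the \emph{other} path, and (b) forcing the entry and exit vertices of $Q_1$ in $T_h$ to lie on opposite sides of $x_h$. Neither is resolved in the writeup, and (b) can genuinely fail: $Q_1$ may dip into $T_h$ and leave again entirely on $z$'s side of $x_h$, in which case the tree path between the entry and exit vertices does not contain $x_h$ at all, and your argument produces nothing. There is also a smaller slip: for $h=1$ the starting endpoint $a_1$ or $a_2$ \emph{does} lie in $V(T_h)$ (recall $a_1,a_2\in V(T_1)$), so the claim that both endpoints of $Q_1$ sit ``outside $T_h$'' is not correct as stated.

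The idea you are missing is a pigeonhole over three tree paths emanating from $x_h$. Depending on whether $h<i$ or $i<h\le j$, take the three openly disjoint tree paths from $x_h$ to, respectively, (the starting endpoint $a$, $x_i$, $z$) or ($x_i$, $v$, $z$); the hypothesis $h<j$ or $x_h\in V(T[z,v])$ is exactly what guarantees the three target vertices lie in three different branches of $T$ at $x_h$. Each of these three paths ends at a vertex that is certainly on $Q_1\cup Q_2$ (two of them are path endpoints, the third is $z$). On each of the three tree paths, take the vertex of $V(Q_1\cup Q_2)$ closest to $x_h$; since $x_h\notin V(Q_1\cup Q_2)$ these three vertices are distinct, and the portions of the three paths between $x_h$ and these chosen vertices are free of $Q_1\cup Q_2$. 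Two of the three chosen vertices must lie on the same $Q_i$ (pigeonhole), and the tree path joining them through $x_h$ is then a shortcut, contradicting locally-cheapest-ness. This single extremal-plus-pigeonhole step replaces all the case analysis about where $Q_1$ enters and exits $T_h$, and it automatically supplies the disjointness-from-$Q_{3-i}$ that you were struggling to establish.
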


\begin{proof}
W.l.o.g.\ we may assume $z \in V(Q_1)$; let $a$ denote the starting vertex ($a_1$ or $a_2$) of $Q_1$.
Suppose for contradiction that $x_h \notin V(Q_1 \cup Q_2)$; then either $1 \leq h<i$ or $i<h\leq j$. 
In the former case, define $P_1$, $P_2$, and $P_3$ as the three openly disjoint paths leading within~$T$ from $x_h$ to~$a$, to~$x_i$, and to~$z$, respectively. 
In the latter case, define $P_1$, $P_2$, and $P_3$ as the three openly disjoint paths leading within~$T$ from $x_h$ to~$x_i$, to~$v$, and to~$z$, respectively;
observe that such paths exist due to the condition that either $h<j$ or $x_h \in V(T[z,v])$. 
Let $z_1$, $z_2$, and $z_3$ denote the vertex closest to~$x_h$ on $T_1$, $T_2$, and $T_3$, respectively, that is contained in~$V(Q_1 \cup Q_2)$; 
note that $z_1$, $z_2$ and $z_3$ are three distinct vertices, with no vertex of~$V(Q_1 \cup Q_2)$ lying between any two of them on~$T$. 
Since at least two vertices from $\{z_1,z_2,z_3\}$ belong to the same path $Q_1$ or~$Q_2$, this contradicts the assumption that $Q_1$ and $Q_2$ are locally cheapest.
\end{proof}

As a consequence of Lemma~\ref{lem:paths-vs-X}, applied with $a_1$ taking the role of~$z$ and $x_1$ taking the role of~$x_h$, 
we get that every partial solution for some~$(u,v,\tau)$ must contain~$x_1$; using that both paths in a partial solution must be plain, 
we get the following fact.
\begin{observation}
\label{obs:contains-A1orA2}
Let $(Q_1,Q_2)$ be a partial solution for~$(u,v,\tau)$ for
vertices $u \in  V(T_i)$ and $v \in V(T_j)$ for some $i\leq j$ and for $\tau \subseteq \T \setminus \{T\}$.
Then either $Q_1$ or $Q_2$ contains $A_1$ or $A_2$.
\end{observation}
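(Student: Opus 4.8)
The plan is to follow the route indicated by the remark preceding the statement: first establish that $x_1 \in V(Q_1 \cup Q_2)$, and then exploit the plainness of whichever of $Q_1,Q_2$ passes through $x_1$.

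If $i=1$ there is nothing to do for the first step, since then condition~(c) of Definition~\ref{def:partsol} already gives $x_1 = x_i \in V(Q_1)$. So assume $i\geq 2$ (hence also $j \geq 2$). By condition~(c) the path $Q_1$ ends with $T[x_i,u]$, so $x_i \in V(Q_1)$; writing $a \in \{a_1,a_2\}$ for the starting vertex of $Q_1$, the pair $(Q_1[a,x_i],Q_2)$ is a pair of permissively disjoint $(\{a_1,a_2\},\{x_i,v\})$-paths with $v \in V(T_j)$ and $1 \leq i \leq j \leq r$. I would then apply Lemma~\ref{lem:paths-vs-X} to this pair with $a_1$ in the role of $z$ and $x_1$ in the role of $x_h$: indeed $a_1 \in V(T_1)$ (because $A_1 = T[a_1,x_1]$ meets $X$ only in $x_1$ and hence lies inside the subtree $T_1$), the index $h=1$ satisfies $h\leq j$ and $h<j$, and $a_1 \in V(Q_1 \cup Q_2)$ since it is an endpoint of one of the two paths. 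Lemma~\ref{lem:paths-vs-X} then yields $x_1 \in V(Q_1 \cup Q_2)$.

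The delicate point here is that $(Q_1[a,x_i],Q_2)$ is again \emph{locally cheapest}, as Lemma~\ref{lem:paths-vs-X} requires; permissive disjointness is inherited from $(Q_1,Q_2)$, and I expect this local-cheapness check to be the main obstacle. My plan is to show that any shortcut $T'[p,q]$ for $(Q_1[a,x_i],Q_2)$ is either a shortcut for $(Q_1,Q_2)$ as well --- contradicting condition~(b), since $(Q_1,Q_2)$ is locally cheapest --- or else some inner vertex or edge of $T'[p,q]$ lies in the discarded part $T[x_i,u] \subseteq T_i$. In the latter case $T'=T$ (distinct trees in $\T$ are vertex-disjoint), and the plainness and quasi-monotonicity of both $Q_1$ and $Q_2$ pin down how they meet $T_i$: the $T_i$-vertices of $Q_1$ form a path through $x_i$ whose final arm is exactly $T[x_i,u]$, while the $T_i$-vertices of $Q_2$ form a path avoiding $x_i$, hence confined to a single component of $T_i - x_i$, which cannot straddle a vertex of that arm; this contradiction rules out the latter case. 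An alternative, entirely self-contained route --- mirroring the proof of Lemma~\ref{lem:contains-A1orA2} --- is this: if $x_1 \notin V(Q_1 \cup Q_2)$, then $x_i \in V(Q_1)$ forces $V(X) \cap V(Q_1 \cup Q_2) \neq \emptyset$; letting $u_1,u_2,z$ be the vertices of $V(Q_1 \cup Q_2)$ closest to $x_1$ on $A_1$, on $A_2$, and on $X$ respectively, these are three distinct vertices, the $T$-path joining any two of them runs through $x_1 \notin V(Q_1 \cup Q_2)$ and has no inner vertex in $V(Q_1 \cup Q_2)$, so two of them lie on a common path $Q_k$ and the $T$-path between them --- which has at least two edges, as it passes through $x_1$ --- is a shortcut for $(Q_1,Q_2)$, contradicting condition~(b).

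Once $x_1 \in V(Q_1 \cup Q_2)$ is known, the conclusion follows at once: pick $k \in [2]$ with $x_1 \in V(Q_k)$ and let $a \in \{a_1,a_2\}$ be the starting vertex of $Q_k$, so that $T[a,x_1]$ is $A_1$ or $A_2$; since $a \in V(T_1)$ and $x_1 \in V(T_1)$ both lie on $Q_k$, plainness of $Q_k$ (part of condition~(b)) gives $T[a,x_1] \subseteq Q_k$, i.e.\ $Q_k$ contains $A_1$ or $A_2$, as required. The only remaining care concerns the degenerate cases $a_1 = a_2$ (which happens precisely when $s \in V(T)$) or $u=v$, where the three vertices $u_1,u_2,z$ of the self-contained argument may collapse to two; there, however, $s = a_1 = a_2$ lies on both $Q_1$ and $Q_2$ and supplies the missing third vertex, so the same reasoning still applies.
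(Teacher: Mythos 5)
Your overall plan---first show $x_1 \in V(Q_1 \cup Q_2)$, then invoke plainness---is the right one, and this matches the paper's one-line justification of the observation. However, your two candidate routes to $x_1 \in V(Q_1\cup Q_2)$ are not equally sound.

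Your main plan (applying Lemma~\ref{lem:paths-vs-X} to $(Q_1[a,x_i],Q_2)$) has a genuine gap in the local-cheapness check. You claim that the $T_i$-vertices of $Q_2$ ``form a path avoiding $x_i$, hence confined to a single component of $T_i-x_i$, which cannot straddle a vertex of that arm.'' But plainness of $Q_2$ only controls $V(Q_2)\cap V(T_i)$ when $x_i\in V(Q_2)$, and here $x_i\in V(Q_1)$, so generically $x_i\notin V(Q_2)$; nothing in conditions~(a)--(f) forces $V(Q_2)\cap V(T_i)$ to be a path. Concretely, $Q_2$ can hit two vertices $p,q$ of the component $C$ of $T_i-x_i$ containing $u$, with $T[p,q]$ passing through an interior vertex $c$ of $T[x_i,u]$. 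Then $T[p,q]$ is \emph{not} a shortcut for $(Q_1,Q_2)$ (because $c\in V(Q_1)$), so $(Q_1,Q_2)$ can be a perfectly good partial solution, yet $T[p,q]$ \emph{is} a shortcut for $(Q_1[a,x_i],Q_2)$, since $c\notin V(Q_1[a,x_i])$. So $(Q_1[a,x_i],Q_2)$ need not be locally cheapest, and Lemma~\ref{lem:paths-vs-X} cannot be applied to it.

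Your self-contained alternative---mirroring the proof of Lemma~\ref{lem:contains-A1orA2}---is the correct route, and it works cleanly in the main case $a_1\neq a_2$: there $A_1\setminus\{x_1\}$, $A_2\setminus\{x_1\}$, $X\setminus\{x_1\}$ are pairwise disjoint, so $u_1,u_2,z$ are distinct, the $T$-path between any two passes through $x_1$ with no inner vertex or edge in $Q_1\cup Q_2$, and pigeonhole produces a shortcut. For the degenerate case $a_1=a_2=s$, your fix (``$s$ supplies the missing third vertex'') does not actually work: if $u_1\neq s$ lies on one path and $z$ lies on the other, the $T$-path from $s$ to $z$ passes through $u_1$, which \emph{is} in $V(Q_1\cup Q_2)$, so it is not a shortcut. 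However, the degenerate case is trivial for a simpler reason: when $a_1=a_2=s$, both $T[a_1,b_1]$ and $T[a_2,b_2]$ start at $s$, so $s\in V(X)$ and hence $x_1=s$; then $A_1=A_2=T[s,s]$ is the edgeless path, which every $(\{a_1,a_2\},\{u,v\})$-path contains. Replacing your third-vertex remark with this observation (or simply stating that the paper's setup has $a_1,a_2$ lying in a component of $T\setminus X$, which forces $a_1\neq a_2$) would close the argument.
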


Let us now give some insight on Algorithm~\ref{alg:PartSol} that computes a minimum-weight partial solution $(Q_1,Q_2)$ for $(u,v,\tau)$, if it exists, for vertices $u \in V(T_i)$ and $v \in V(T_j)$ with $i \leq j$ and trees $\tau \subseteq \T \setminus \{T\}$.
It distinguishes between two cases based on whether $Q_2$ 
contains a vertex of~$T[x_1,x_i]$ or not.
In both cases it constructs candidates for a partial solution, and then chooses the one among these with minimum weight.

\begin{description}
\item[Case A:] $Q_2$ does not contain any vertices from~$T[x_1,x_i]$. See Figure~\ref{fig:partsolA} for an illustration.
In this case, due to Observation~\ref{obs:contains-A1orA2}, 
we know that $Q_1$ contains $A_h$ for some $h \in [2]$; let us fix this value of~$h$.
Since $Q_1$ and~$Q_2$ are locally cheapest and~$Q_1$ ends with~$T[x_i,u]$, we also know that $Q_1$ must contain~$T[x_1,x_i]$.
Therefore, we obtain $Q_1=A_h \cup T[x_1,u]$. 
In this case, we can also prove that $Q_2$ is a shortest $(a_{3-h},v)$-path
in the auxiliary graph~$G \langle A_h \cup T[x_1,u],a_{3-h},v,\tau \rangle$.
Hence, Algorithm~\ref{alg:PartSol} computes such a path~$R$ and 
constructs the pair~$(A_h \cup T[x_1,u],R)$ as a candidate for a partial solution for~$(u,v,\tau)$.
\item[Case B:] $Q_2$ contains a vertex from~$T[x_1,x_i]$. See Figure~\ref{fig:partsolB} for an illustration.
In this case, 
let $x_{i'}$ be the vertex on $T[x_1,x_{i-1}]$ closest to~$x_i$ that appears on~$Q_2$, 
and let $u'$ be the last vertex of $Q_2$ in~$T_{i'}$;
since $Q_2$ is plain, we know $T[x_{i'},u'] \subseteq Q_2$.
Let $x_{j'}$ denote the vertex on~$T[x_{i'},x_i]$ closest to $x_{i'}$ that appears on~$Q_1$; then $i' <j' \leq i$.
As $Q_1$ and $Q_2$ are locally cheapest, $T[x_{j'},x_i] \subseteq Q_1$ follows. 
Let $v'$ denote the first vertex of~$Q_1$ in~$T_{j'}$. Since $Q_1$ is plain, we know $T[v',x_{j'}] \subseteq Q_1$.

Define $\Q_1=Q_2 \setminus Q_2[u',v]$ and $\Q_2=Q_1 \setminus Q_1[v',u]$.
Let also~$\tau'$ denote those trees in~$\T \setminus \{T\}$ that share a vertex with~$\Q_1 \cup \Q_2$. 
We can then prove that $(\Q_1,\Q_2)$ is a partial solution for~$(u',v',\tau')$; moreover,
$Q_2[u',v]$ is a path in the auxiliary graph \hbox{$G \langle T[v',u],u',v, \tau \setminus \tau'\rangle$}.
Thus, Algorithm \ref{alg:PartSol} takes a partial solution $(Q'_1,Q'_2)$ for~$(u',v',\tau')$, already computed,
and computes a shortest $(u',v)$-path~$R$ in $G \langle T[v',u],u',v, \tau \setminus \tau'\rangle$. 
It then creates the path pair  $(Q'_2 \cup T[v',u],Q'_1 \cup R)$ as a candidate for a partial solution for~$(u,v,\tau)$. 
\end{description}

\begin{varalgorithm}{{\textsc{PartSol}}}
\caption{Computes a partial solution $F(u,v,\tau)$ of minimum weight for $(u,v,\tau)$ where $u \in V(T_i)$ and $v \in V(T_j)$ with $i\leq j$, and $\tau \subseteq \T \setminus \{T\}$. 
}
\label{alg:PartSol}
\begin{algorithmic}[1]
\Require{Vertices $u$ and $v$ where $u \in V(T_i)$ and $v \in V(T_j)$ for some $i\leq j$, and a set $\tau \subseteq \T \setminus \{T\}$.}
\Ensure{A partial solution $F(u,v,\tau)$ for~$(u,v,\tau)$ of minimum weight, or $\varnothing$ if not existent.}
\State Let $\mathcal{S}=\emptyset$.
\ForAll{$h \in [2]$}				\label{line:PS-case1-start}
	\If{$A_h \cup T[x_1,u]$ is a path}
		\If{$v$ is reachable from~$a_{3-h}$ in~$G \langle A_h \cup T[x_1,u],a_{3-h},v,\tau \rangle$} 
			\State Compute a shortest $(a_{3-h},v)$-path $R$ in~$G \langle A_h \cup T[x_1,u],u,a_{3-h},v,\tau \rangle$.
			\If{$(A_h \cup T[x_1,u], R )$ is a partial solution for~$(u,v,\tau)$}				
				\State $\mathcal{S} \leftarrow (A_h \cup T[x_1,u], R)$.			\label{line:PS-comp1}
			\EndIf				
		\EndIf
	\EndIf 
\EndFor 
\ForAll{$i' \in [i-1]$ and $u' \in V(T_{i'})$}		\label{line:PS-case-s2tart}
	\ForAll{$j' \in [i] \setminus [i']$ and $v' \in V(T_{j'})$ such that $T[x_{j'},v'] \cap T[x_i,u]=\emptyset$} \label{line:choosev'}
		\ForAll{$\tau' \subseteq \tau$}
			\If{$F(u',v',\tau') = \varnothing$}  {\bf continue;} 	\label{line:PS-callF}
			\EndIf
			\State Let $(Q'_1,Q'_2)=F(u',v',\tau')$.					\label{line:PS-defQ'}
			\If{$v$ is not reachable from~$u'$ in~$G \langle T[v',u],u',v, \tau \setminus \tau' \rangle$}  {\bf continue;}  \label{line:PS-reach}
			\EndIf
			\State Compute a shortest $(u',v)$-path $R$ in~$G \langle  T[v',u],u',v, \tau \setminus \tau' \rangle$.			\label{line:PS-shortest}
			\State Let $P_1=Q'_2 \cup T[v',u]$ and $P_2=Q'_1 \cup R$. 
			\If{$(P_1,P_2)$ is a partial solution for~$(u,v,\tau)$}
				\State $\mathcal{S} \leftarrow (P_1,P_2)$. 						\label{line:PS-case2-end}
			\EndIf
		\EndFor					
	\EndFor		
\EndFor
\If{$\mathcal{S}=\emptyset$} {\bf return} $\varnothing$.
\Else{ Let $S^\star$ be the cheapest pair among those in $\mathcal{S}$, and {\bf return} $F(u,v,\tau):=S^\star$.} \label{line:SP-final}
\EndIf
\end{algorithmic}
\end{varalgorithm}


The following lemma guarantees the correctness of Algorithm~\ref{alg:PartSol}.
Formally, we say that $F(u,v,\tau)$ is \emph{correctly computed}
if either it contains a minimum-weight partial solution for~$(u,v,\tau)$, or 
no partial solution for~$(u,v,\tau)$ exists and $F(u,v,\tau)=\varnothing$.
\begin{lemma}
\label{lem:partsol-alg}
Let $i,j \in [r]$ with $i\leq j$, $u \in V(T_i)$, $v \in V(T_j)$ and $\tau \subseteq \T \setminus \{T\}$.
Assuming that the values $F(u',v',\tau')$ are correctly computed for each $u' \in V(T_{i'})$ with~$i'<i$, 
Algorithm~\ref{alg:PartSol} correctly computes~$F(u,v,\tau)$.
\end{lemma}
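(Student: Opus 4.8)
The plan is to prove correctness in two directions: (soundness) every path pair that Algorithm~\ref{alg:PartSol} places in $\mathcal S$ is a genuine partial solution for $(u,v,\tau)$, so returning the cheapest element of $\mathcal S$ — or $\varnothing$ when $\mathcal S=\emptyset$ — never over-claims; and (completeness) if a partial solution for $(u,v,\tau)$ exists, then some path pair of weight at most the optimum is inserted into $\mathcal S$. Soundness is essentially by construction: the algorithm only executes line~\ref{line:PS-comp1} or line~\ref{line:PS-case2-end} after explicitly checking that the constructed pair is a partial solution for $(u,v,\tau)$, so I only need to remark that this check is decidable in polynomial time (it amounts to verifying conditions (a)--(f) of Definition~\ref{def:partsol}, each of which is a simple local property). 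The real content is completeness.

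For completeness, fix a minimum-weight partial solution $(Q_1,Q_2)$ for $(u,v,\tau)$, where by condition~(c) the path $Q_1$ ends with $T[x_i,u]$; by Observation~\ref{obs:contains-A1orA2} one of the two paths contains $A_1$ or $A_2$. I split into the two cases of the algorithm according to whether $V(Q_2)\cap V(T[x_1,x_i])=\emptyset$. In Case~A, since $Q_1$ ends at $u\in V(T_i)$ and $Q_1,Q_2$ are locally cheapest, Lemma~\ref{lem:paths-vs-X} (and plainness) forces $x_1,\dots,x_i\in V(Q_1)$ and hence $T[x_1,x_i]\subseteq Q_1$; combined with $Q_1\supseteq A_h$ for the appropriate $h\in[2]$ and local cheapness, $Q_1=A_h\cup T[x_1,u]$ exactly. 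Then $Q_2$ is an $(a_{3-h},v)$-path avoiding $V(Q_1)\setminus\{u,v\}$ — so by condition~(e), condition~(f) and $X$-monotone/plain structure it lives in the auxiliary graph $G\langle A_h\cup T[x_1,u],a_{3-h},v,\tau\rangle$, hence has weight at least that of the shortest such path $R$; the pair $(A_h\cup T[x_1,u],R)$ is then checked and inserted with weight $\le w(Q_1)+w(Q_2)$. (Conservativeness guarantees the shortest path is well-defined and a walk can be short-cut to a path without increasing weight, via Lemma~\ref{lem:closed-walk}.) In Case~B, I define $x_{i'},u',x_{j'},v',\widetilde Q_1,\widetilde Q_2,\tau'$ exactly as in the high-level description preceding the algorithm, verify that $(\widetilde Q_1,\widetilde Q_2)$ satisfies (a)--(f) for $(u',v',\tau')$ — conditions (c),(d) using plainness and the choice of $x_{j'},v'$ and the fact that $Q_2$'s tail $Q_2[u',v]$ leaves $T_{(i'+1,r)}$ only at $v$ — so that $F(u',v',\tau')\ne\varnothing$ and, by the inductive hypothesis, $w(F(u',v',\tau'))\le w(\widetilde Q_1)+w(\widetilde Q_2)$. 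The leftover piece $Q_2[u',v]$ is a $(u',v)$-path that avoids the forbidden set for $(\tau,T)$ and, by plainness/local cheapness, avoids $T[v',u]$ except at its endpoints and every subtree $T_h$ disjoint from $T[v',u]$; hence it lives in $G\langle T[v',u],u',v,\tau\setminus\tau'\rangle$, so the shortest path $R$ there has $w(R)\le w(Q_2[u',v])$. The candidate $(Q'_2\cup T[v',u],\,Q'_1\cup R)$ then has weight $\le w(Q_1)+w(Q_2)$ and — after checking it is a partial solution for $(u,v,\tau)$ — is inserted into $\mathcal S$.

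Finally I observe that because $i'<i$ in Case~B, the value $F(u',v',\tau')$ is indeed available by the inductive hypothesis, and that the candidate in both cases has weight at most that of the optimal partial solution, so line~\ref{line:SP-final} returns a minimum-weight partial solution; if no partial solution exists, neither case produces a valid pair (the final ``is a partial solution'' tests fail), so $\mathcal S=\emptyset$ and the algorithm returns $\varnothing$, as required.

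I expect the main obstacle to be the bookkeeping in Case~B: verifying that the reassembled pairs $(\widetilde Q_1,\widetilde Q_2)$ and $(Q'_2\cup T[v',u],Q'_1\cup R)$ really satisfy \emph{all} of conditions (a)--(f) — in particular that quasi-monotonicity and plainness are preserved when we cut $Q_1,Q_2$ at $u',v'$ and later glue $T[v',u]$ and a fresh shortest path $R$ onto the stored partial solution, and that condition~(f) (no contact at any $T'\ne T$) is not silently violated by the new path $R$. Ensuring that the auxiliary-graph definition deletes exactly the right vertices so that ``a path in $G\langle\cdot\rangle$'' coincides with ``a subpath of a legitimate partial solution'' is the crux; everything else is routine. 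I would handle it by isolating each of (a)--(f) as a short sub-claim with a one-line justification, leaning on Lemmas~\ref{lem:paths-vs-X}, \ref{lem:plain+X-monotone}, \ref{lem:contains-A1orA2} and Observation~\ref{obs:contains-A1orA2}.
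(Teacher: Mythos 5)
Your high-level decomposition --- Case~A versus Case~B, split on whether $Q_2$ touches $T[x_1,x_i]$ --- matches the paper's, and your treatment of Case~A is essentially the paper's. But there is a genuine gap in Case~B that your plan does not close, and it is not the ``bookkeeping'' you flag. You correctly establish that $Q_2[u',v]$ lives in the auxiliary graph $G\langle T[v',u],u',v,\tau\setminus\tau'\rangle$ and hence the computed shortest path $R$ satisfies $w(R)\leq w(Q_2[u',v])$. However, $F(u',v',\tau')=(Q'_1,Q'_2)$ is \emph{some} minimum-weight partial solution for $(u',v',\tau')$, not necessarily the pair $(\widetilde Q_1,\widetilde Q_2)$ you carved out of $(Q_1,Q_2)$, and the path $R$ is likewise not $Q_2[u',v]$. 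The auxiliary graph does \emph{not} delete the vertices of $Q'_1\cup Q'_2$, so nothing in your plan guarantees $V(R)\cap V(Q'_1\cup Q'_2)=\emptyset$. If $R$ meets $Q'_1\cup Q'_2$, the candidate $(Q'_2\cup T[v',u],\,Q'_1\cup R)$ is not a permissively disjoint pair of paths, the ``is a partial solution'' check on line~\ref{line:PS-case2-end} fails, nothing is inserted, and completeness collapses. Your sentence ``the candidate \dots is inserted into $\mathcal S$'' assumes exactly what has to be proved, namely that condition~(a) holds.

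The paper spends most of its proof on precisely this: Claim~\ref{clm:caseB-Rdisjoint} reduces matters to showing $V(R)\cap V(Q'_1\cup Q'_2)=\emptyset$, and Claims~\ref{clm:caseB-Q1intersectsR} and~\ref{clm:caseB-Q2intersectsR} show that an intersection would let one build a partial solution for $(u,v,\tau)$ of weight strictly less than $w(Q_1)+w(Q_2)$, contradicting the assumed optimality. Those contradiction arguments are substantive exchange arguments: one extracts a closed walk from the overlap, invokes Lemma~\ref{lem:closed-walk} for a weight bound, then applies $\Amend$ and must re-verify plainness and quasi-monotonicity of the amended pair using Lemma~\ref{lem:paths-vs-X}. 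You anticipated the need to re-check conditions (b)--(f) and flagged quasi-monotonicity and condition~(f) as risks, but condition~(a) --- disjointness --- is the crux in Case~B, and it is not a ``one-line justification'' per your plan; it requires essentially the two longest claims of the paper's proof.
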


\begin{proof} 
Let $(Q_1,Q_2)$ be a partial solution for~$(u,v,\tau)$ with minimum weight. 
We distinguish between two cases. 

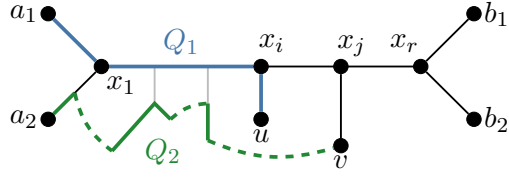
\begin{figure}
    \centering  
    \begin{tikzpicture}[xscale=0.7, yscale=0.7]

  \node[draw, circle, fill=black, inner sep=1.9pt] (a1) at (-1, 1) {};
  \node[draw, circle, fill=black, inner sep=1.9pt] (a2) at (-1, -1) {};
  \node[draw, circle, fill=black, inner sep=1.9pt] (x1) at (0, 0) {};
  \node[draw, circle, fill=black, inner sep=1.9pt] (xi)  at (3, 0) {};
  \node[draw, circle, fill=black, inner sep=1.9pt] (xj) at (4.5, 0) {};
  \node[draw, circle, fill=black, inner sep=1.9pt] (u)  at (3, -1) {};      
  \node[draw, circle, fill=black, inner sep=1.9pt] (v) at (4.5, -1.5) {};
  \node[draw, circle, fill=black, inner sep=1.9pt] (xr) at (6, 0) {};
  \node[draw, circle, fill=black, inner sep=1.9pt] (b1) at (7, 1) {};
  \node[draw, circle, fill=black, inner sep=1.9pt] (b2) at (7, -1) {};
  \coordinate[inner sep=0pt] (h0) at (-0.5,-0.5) {};
  \coordinate[inner sep=0pt] (h6) at (1,0) {};  
  \coordinate[inner sep=0pt] (h7) at (2,0) {};    
  \coordinate[inner sep=0pt] (h4) at (2,-1.4) {};    
  \coordinate[inner sep=0pt] (h5) at (2,-0.7) {};    
  \coordinate[inner sep=0pt] (h2) at (1,-0.7) {};    
  \coordinate[inner sep=0pt] (h1) at (0.2,-1.6) {};      
  \coordinate[inner sep=0pt] (h3) at (1.3,-1) {};      


  \node[left] at (a1) {$a_1$};
  \node[left] at (a2) {$a_2$};
  \node[right] at (b1) {$b_1$};
  \node[right] at (b2) {$b_2$};
  \node[below] at (u) {$u$};
  \node[below] at (v) {$v$};  
  \node[above, yshift=1.7pt, xshift=4pt] at (xi) {$x_i$};
  \node[above, xshift=4pt] at (xj) {$x_j$};    
  \node[below right, xshift=-2pt] at (x1) {$x_1$};
  \node[above, yshift=1.7pt, xshift=-6pt] at (xr) {$x_r$};

  \draw[line width=0.7pt] (a2) to  (x1) to (xr) to (b1);
  \draw[line width=0.7pt,grey] (h2) to (h6);
  \draw[line width=0.7pt,grey] (h5) to (h7);  
  \draw[line width=0.7pt] (xr) to (b2);
  \draw[line width=0.7pt] (v) to (xj);
  
  \draw[line width=1.4pt, blue] (a1) to (x1) to node[pos=0.5, above] {$Q_1$}(xi) to (u);

  \draw[line width=1.4pt, green] (a2) to (h0); 
  \draw[line width=1.4pt, green, dashed] (h0) to [bend right=20]  (h1); 
  \draw[line width=1.4pt, green] (h1) to node [midway, below right] {$Q_2$} (h2) to (h3);
  \draw[line width=1.4pt, green, dashed] (h3) to [bend left=20]  (h5);   
  \draw[line width=1.4pt, green]  (h4) to (h5);
  \draw[line width=1.4pt, green, dashed] (h4) to [bend right=20]  (v);

\end{tikzpicture}
  \caption{Illustration for Case~A in Lemma~\ref{lem:partsol-alg}. 
  Edges within~$T$ are depicted using solid lines, edges not in~$T$ using dashed lines. 
  The figure assumes $Q_1=A_1 \cup T[x_1,u]$ (so $h=1$). Paths~$Q_1$ and~$Q_2$
  are shown in \textcolor{blue}{\textbf{blue}} and in \textcolor{green}{\textbf{green}}, respectively. 
  }
  \label{fig:partsolA}
\end{figure}

\noindent
{\bf Case A:} $Q_2$ contains no vertex of~$T[x_1,x_i]$. See Figure~\ref{fig:partsolA} for an illustration.
Due to Observation~\ref{obs:contains-A1orA2}, 
we know that $Q_1$ contains $A_h$ for some $h \in [2]$; let us fix this value of~$h$.
Since $Q_1$ and~$Q_2$ are locally cheapest and~$Q_1$ ends with~$T[x_i,u]$, we also know that $Q_1$ must contain~$T[x_1,x_i]$.
Therefore, we obtain $Q_1=A_h \cup T[x_1,u]$. 

First, $Q_1$ and $Q_2$ may share a vertex only if $x_1=a_1=a_2$.
We also know that $Q_2$ may only contain vertices of~$T_{(1,i)}$ that are not on the path~$A_h \cup T[x_1,u]$, 
except possibly for the starting vertex~$a_{3-h}$ (in case $a_1=a_2$). 
Second, by condition~(d) in the definition of a partial solution, we also have that $Q_2$ contains no vertices 
from~$T_h$ for any $i<h \leq r$ other than its endpoint~$v$.
Third, by condition~(e) in the definition of a partial solution, we know that $Q_1 \cup Q_2$ contains no vertex that is forbidden for~$(\tau,T)$.
This means that $Q_2$ is a path in the auxiliary graph~$G \langle A_h \cup T[x_1,u],a_{3-h},v,\tau \rangle$.

Conversely, we claim that $(A_h \cup T[x_1,u], R)$ is a partial solution for~$(u,v,\tau)$
for any shortest $(a_{3-h},v)$-path~$R$ in~$G \langle A_h \cup T[x_1,u],a_{3-h},v,\tau \rangle$,
supposing that $A_h \cup T[x_1,u]$ is an $(a_h,u)$-path; 
note that this condition is explicitly checked by the algorithm (and trivial whenever $i>1$).
By the definition of $G \langle A_h \cup T[x_1,u],a_{3-h},v,\tau \rangle$, 
we know that  $A_h \cup T[x_1,u]$ and $R$ are two permissively disjoint $(\{a_1,a_2\},\{u,v\})$-paths.

Let us show that $A_h \cup T[x_1,u]$ and $R$ are also locally cheapest: 
assuming that there exists a shortcut $T[q,q']$ witnessing the opposite,
 we get that both~$q$ and~$q'$ must be vertices of~$R$ on~$T_{(1,i)}$, and 
the path~$T[q,q']$ must share no vertices with~$A_h \cup T[x_1,u]$.
Hence, replacing $R[q,q']$ with $T[q,q']$ would yield a path in~$G \langle A_h \cup T[x_1,u],a_{3-h},v \rangle$ whose weight is less than~$w(R)$
by Lemma~\ref{lem:T-min-pathlength},
contradicting the fact that $R$ is a shortest path. 
Thus, $A_h \cup T[x_1,u]$ and $R$ are locally cheapest.
They also form a well-formed pair, since 
their quasi-monotonicity and plainness is obvious.
It is straightforward to see that they also satisfy conditions~(c)--(f) in Definition~\ref{def:partsol}, 
and thus form a partial solution for~$(u,v,\tau)$. 

Therefore, for the right choice of~$h$, 
Algorithm~\ref{alg:PartSol} will put $(A_h \cup T[x_1,u], R)$ into~$\mathcal{S}$ on line~\ref{line:PS-comp1}
where $R$ is a shortest $(a_{3-h},v)$-path in~$G \langle A_h \cup T[x_1,u],a_{3-h},v,\tau \rangle$.
Observe that 
\begin{align*}
w(A_h \cup T[x_1,u])+w( R) &=w(Q_1) + w(R)  
 \leq w(Q_1)+w(Q_2),
\end{align*}
implying that $(A_h \cup T[x_1,u], R)$ is indeed a partial solution for $(u,v,\tau)$ with minimum total weight.
This proves the lemma for Case~A.

\medskip
\noindent
{\bf Case B:} $Q_2$ contains a vertex of~$T[x_1,x_i]$.
Consider the vertex~$x_{i'}$ on $T[x_1,x_{i-1}]$ closest to~$x_i$ that appears on~$Q_2$, and let $u'$ be the last vertex of $Q_2$ in~$T_{i'}$;
since $Q_2$ is plain, we know $T[x_{i'},u'] \subseteq Q_2$.
Let $x_{j'}$ denote the vertex on~$T[x_{i'},x_i]$ closest to $x_{i'}$ that is contained in~$Q_1$; then $i' <j' \leq i$.
As $Q_1$ and $Q_2$ are locally cheapest, we have $T[x_{j'},x_i] \subseteq Q_1$. 
Let $v'$ denote the first vertex of~$Q_1$ in~$T_{j'}$. Since $Q_1$ is plain, we know $T[v',x_{j'}] \subseteq Q_1$.
Note also that $T[x_{j'},v'] \cap  T[x_i,u]=\emptyset$: 
this is trivial if $x_{j'}\neq x_i$, and it follows from the plainness of~$Q_1$ if $x_{j'}= x_i$, because 
then $T[v',x_{j'}]$ and~$T[x_{j'},u]$ are both subpaths of~$Q_1$ and therefore can share no edge.

Let $\Q_1=Q_2 \setminus Q_2[u',v]$ and $\Q_2=Q_1 \setminus Q_1[v',u]$.
Let also~$\tau'$ denote those trees in~$\T \setminus \{T\}$ that share a vertex with~$\Q_1 \cup \Q_2$. 

\begin{figure}
    \centering  
    \begin{tikzpicture}[xscale=0.7, yscale=0.7]

  \node[draw, circle, fill=black, inner sep=1.7pt] (a1) at (-1, 1) {};
  \node[draw, circle, fill=black, inner sep=1.7pt] (a2) at (-1, -1) {};
  \node[draw, circle, fill=black, inner sep=1.7pt] (x1) at (0, 0) {};
  \node[draw, circle, fill=black, inner sep=1.7pt] (xi')  at (1.5, 0) {};
  \node[draw, circle, fill=black, inner sep=1.7pt] (xj')  at (3, 0) {};
  \node[draw, circle, fill=black, inner sep=1.7pt] (xi) at (5.5, 0) {};
  \node[draw, circle, fill=black, inner sep=1.7pt] (xj) at (7, 0) {};  
  \node[draw, circle, fill=black, inner sep=1.7pt] (u')  at (1.5, -0.8) {};        
  \node[draw, circle, fill=black, inner sep=1.7pt] (v')  at (3, -1.3) {};          
  \node[draw, circle, fill=black, inner sep=1.7pt] (u)  at (5.5, -1) {};      
  \node[draw, circle, fill=black, inner sep=1.7pt] (v) at (7, -1.5) {};
  \node[draw, circle, fill=black, inner sep=1.7pt] (xr) at (8, 0) {};
  \node[draw, circle, fill=black, inner sep=1.7pt] (b1) at (9, 1) {};
  \node[draw, circle, fill=black, inner sep=1.7pt] (b2) at (9, -1) {};
  \coordinate[inner sep=0pt] (h0) at (-0.5,-0.5) {};
  \coordinate[inner sep=0pt] (h0') at (-0.5,-0.37) {};  

  \coordinate[inner sep=0pt] (h4) at (5,-1.7) {};    
  \coordinate[inner sep=0pt] (h5) at (5,-1) {};    
  \coordinate[inner sep=0pt] (h2) at (4,-1.5) {};    
  \coordinate[inner sep=0pt] (h1) at (3.4,-2.1) {};      
  \coordinate[inner sep=0pt] (h3) at (4.3,-1.8) {};      

  \coordinate[inner sep=0pt] (h6) at (4,0) {};  
  \coordinate[inner sep=0pt] (h7) at (5,0) {};

  \node[left] at (a1) {$a_1$};
  \node[left] at (a2) {$a_2$};
  \node[right] at (b1) {$b_1$};
  \node[right] at (b2) {$b_2$};
  \node[below, yshift=-3pt] at (u) {$u$};
  \node[below, yshift=-3pt] at (v) {$v$};  
  \node[right] at (u') {$u'$};
  \node[right] at (v') {$v'$};  
  \node[above, yshift=1.7pt, xshift=4pt] at (xi) {$x_i$};
  \node[above, yshift=1.7pt, xshift=4pt] at (xi') {$x_{i'}$};
  \node[above, xshift=4pt] at (xj) {$x_j$};    
  \node[above, xshift=4pt] at (xj') {$x_{j'}$};      
  \node[below right, xshift=-2pt] at (x1) {$x_1$};

  \draw[line width=0.7pt] (a2) to  (x1) to (xr) to (b1);
  \draw[line width=0.7pt] (xr) to (b2);
  \draw[line width=0.7pt, grey] (h2) to (h6);
  \draw[line width=0.7pt, grey] (h5) to (h7);  
  \draw[line width=0.7pt, grey] (xj) to (v);  
  
  \draw[line width=2.6pt, green] (a1) to (x1) to  node[pos=0.4, above] {$\Q_1$} (xi') to (u');
  \draw[line width=1.4pt, green, arrows = {-Parenthesis[]}] (xi') to (u');
  \draw[line width=1.4pt, green, dashed] (u') to [bend right=40] (h1);
  \draw[line width=1.4pt, green] (h1) to node [pos=0.5, below right] {$\R \subseteq Q_2$}  (h2) to (h3);
  \draw[line width=1.4pt, green, dashed] (h3) to [bend left=20]  (h5);   
  \draw[line width=1.4pt, green]  (h4) to (h5);
  \draw[line width=1.4pt, green, dashed] (h4) to [bend right=20]  (v);  
  
  \draw[line width=2.6pt, blue] (a2) to (h0); 
  \draw[line width=2.6pt, blue, dashed] (h0) to [bend right=50]  node[pos=0.5, below] {$\Q_2$}  (v'); 
  \draw[line width=1.4pt, blue, dashed, arrows = {-Parenthesis[]}] (h0) to [bend right=50]  node[pos=0.5, below] {}  (v'); 
  \draw[line width=1.4pt, blue] (v') to (xj') to  node[pos=0.5, above] {$Q_1$} (xi) to (u);

\end{tikzpicture}
  \caption{Illustration for Case~B in Lemma~\ref{lem:partsol-alg}. 
  Edges within~$T$ are depicted using solid lines, edges not in~$T$ using dashed lines. 
  Paths~$Q_1$ and~$Q_2$ are shown in \textcolor{blue}{\textbf{blue}} and in \textcolor{green}{\textbf{green}}, respectively. 
  Their subpaths $\Q_2$ and $\Q_1$, forming a partial solution for~$(u',v',\tau')$, are depicted in bold, with their endings 
  marked by a parenthesis-shaped delimiter.
 }
  \label{fig:partsolB}
\end{figure}
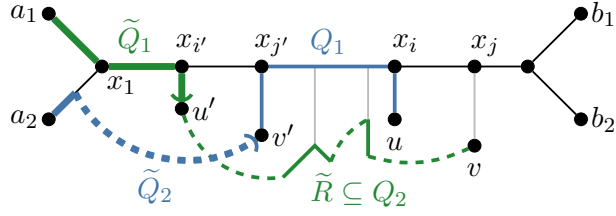

\begin{claim} 
\label{clm:partsol-recurse} $(\Q_1,\Q_2)$ is a partial solution for~$(u',v',\tau')$.
\end{claim}
\begin{claimproof}
Observe that $\Q_1$ and $\Q_2$ are permissively disjoint $(\{a_1,a_2\},\{u',v'\})$-paths.
It is straightforward to verify that they form a well-formed pair and satisfy condition~(c)
in the definition of partial solutions.
Observe that neither~$Q_1$ nor~$Q_2$ passes through an inner vertex of~$T[x_{i'},x_{j'}]$, by the definition of~$x_{i'}$ and $x_{j'}$.
By Lemma~\ref{lem:paths-vs-X} 
this implies that $V(\Q_2) \cap V(T_h)=\emptyset$ for any $i' < h <j'$.
Recall also that $V(\Q_2) \cap V(T_{j'})=\{v'\}$ by the definition of~$v'$. 
Furthermore, since $Q_1$ is quasi-monotone, we also have that $\Q_2$ contains no vertices from~$T_{(j'+1,r)}$,
so $\Q_2$ satisfies condition~(d) as well.
By our definition of~$\tau'$, we also know that $\Q_1 \cup \Q_2$ contains no vertex of~$V(\T \setminus (\tau' \cup \{T\})$, 
that is, no vertex that is forbidden for~$(\tau',T)$. Since condition~(f) holds for~$Q_1$ and $Q_2$, it also holds for their subpaths, $\Q_1$ and~$\Q_2$.
Thus $(\Q_1,\Q_2)$ is a partial solution for~$(u',v',\tau')$.
\end{claimproof}

\begin{claim} 
\label{clm:partsol-path}
$\R=Q_2[u',v]$ is a $(u',v)$-path in~$G \langle T[v',u], u',v,\tau \setminus \tau' \rangle$. 
\end{claim}
\begin{claimproof}
First we show that $\R$ cannot contain any vertex that is forbidden for~$(\tau \setminus \tau',T)$. 
For the sake of contradiction assume that $\R$ contains some vertex~$v_f$ forbidden for~$(\tau \setminus \tau',T)$.
By condition~(e) in the definition of a partial solution for~$(u,v,\tau)$, we know that $Q_1 \cup Q_2$ 
does not contain vertices forbidden for~$(\tau,T)$, and by $\R \subseteq Q_2$, neither does~$\R$.
Note that the vertices that are forbidden for~$(\tau \setminus \tau',T)$ but not for~$(\tau',T)$ are exactly the vertices
\[
\underbrace{V(\T \setminus (\tau \setminus \tau' \cup \{T\}))}_{\textrm{forbidden for }(\tau \setminus \tau',T)} \,\, \setminus  \,\,  
\underbrace{V(\T \setminus (\tau                 \cup \{T\}))}_{\textrm{forbidden for }(\tau,T)} = V(\tau') \]
Consequently, $v_f$ must be contained in $\tau'$. Let $T'$ be the negative tree in~$\tau'$ that contains~$v_f$. 

Recall that by the definition of~$\tau'$, $\Q_1 \cup \Q_2$ must contain a vertex of~$T'$. 
In this case we can show that $Q_1$ contains a vertex of~$T'$:
if $V(\Q_2) \cap V(T') \neq \emptyset$, then this follows from by $\Q_2 \subseteq Q_1$;
if $V(\Q_1) \cap V(T') \neq \emptyset$, then by $\Q_1=Q_2 \setminus Q_2[u',v]$ we get that $Q_2$ visits $T'$ both 
before and after the vertex $u' \in V(T)$, and 
since $Q_1$ and~$Q_2$ are locally cheapest, this implies that $Q_1$ must contain a vertex of~$T'$.
Hence, in either case we get that $Q_1$ and $Q_2$ are in contact at~$T'$, 
a contradiction to our assumption that $(Q_1,Q_2)$ is a partial solution for $(u,v,\tau)$ 
and thus satisfies condition~(f). 
This proves that $\R$ contains no vertex forbidden for~$(\tau \setminus \tau',T)$. 

\smallskip
It is clear that $\R \subseteq Q_2$ has no common vertices with~$T[v',u] \subseteq Q_1$ by the (permissive) disjointness of~$Q_1$ and~$Q_2$.
Hence it suffices to prove the following:
if $\R$ contains a vertex in~$T_\ell$ other than~$u'$ or~$v$, then $j' \leq \ell \leq i$.
First, $\ell<i'$ is not possible by the quasi-monotonicity of~$Q_2$ (using that $x_{i'}$ is on $Q_2$). 
Second, $\ell>i$ is also not possible, due to condition~(d) for~$Q_2$ in the definition of a partial solution.
Third, since $Q_2$ is plain, we get $\ell \neq i'$ from the definition of~$u'$.
Fourth, recall that 
neither~$Q_1$ nor~$Q_2$ passes through an inner vertex of~$T[x_{i'},x_{j'}]$, by the definition of~$x_{i'}$ and~$x_{j'}$.
Hence, by Lemma~\ref{lem:paths-vs-X}, no vertex of~$Q_1$ or $Q_2$
can be contained in~$\bigcup_{i' < h <j'} V(T_h)$, so $\ell$ is not between $i'$ and $j'$.
Thus, $j' \leq \ell \leq i$ as claimed, and so 
$\R$ is a path in~$G \langle T[v',u],u',v,\tau \setminus \tau' \rangle$. 
\end{claimproof}

\smallskip
Consider the iteration when Algorithm~\ref{alg:PartSol} picks the values for $i'$, $j'$, $u'$, $v'$, and $\tau'$ as defined above 
(note that the condition on line~\ref{line:choosev'} holds for these values);
we call this the \emph{lucky} iteration.
In this iteration, the algorithm will find on line~\ref{line:PS-callF} that $F(u',v',\tau') \neq \varnothing$:
by Claim~\ref{clm:partsol-recurse} and our assumption that $F(u',v',\tau')$ is already correctly computed due to~$i'<i$, 
we know that $F(u',v',\tau')=(Q'_1,Q'_2)$ is a partial solution for~$(u',v',\tau')$ with minimum total weight. 
Moreover, by Claim~\ref{clm:partsol-path} Algorithm~\ref{alg:PartSol} will find on line~\ref{line:PS-reach} 
that $v$ is reachable from~$u'$ in~$G \langle T[v',u],u',v,\tau \setminus \tau' \rangle$, 
and so the algorithm will not continue with the next iteration but will proceed with computing a pair~$(P_1,P_2)$
where $P_1=Q'_2 \cup T[v',u]$ and $P_2=Q'_1 \cup R$ for some shortest $(u',v)$-path~$R$ in~$G \langle T[v',u],u',v,\tau \setminus \tau' \rangle$.

In the remainder of the proof we show that $(P_1,P_2)$ is a  partial solution for $(u,v,\tau)$, and moreover, 
its weight is at most~$w(Q_1)+w(Q_2)$.
Clearly, this finishes the proof of the lemma for Case B.
We start with the following claim.
\begin{claim}
\label{clm:caseB-almostpartsol}
$(P_1,P_2)$ satisfies conditions~(b)--(f) of being a partial solution for $(u,v,\tau)$. 
\end{claim}
\begin{claimproof}
Recall that $P_1=Q'_2 \cup T[v',u]$ and $P_2=Q'_1 \cup R$. 

Observe that both~$P_1$ and~$P_2$ are plain and quasi-monotone by the definitions.
Let us show now that $(P_1,P_2)$ is a locally cheapest and, hence, a well-formed pair.
Suppose for contradiction that there is a shortcut for $P_1$ and $P_2$, with endpoints $q$ and~$q'$. 
Since $Q'_1$ and~$Q'_2$ are locally cheapest, it follows that $q$ and $q'$ must lie on~$R$. 
Note first that the auxiliary graph~$G \langle T[v',u],u',v,\tau \setminus \tau' \rangle$ has the property that for each $T' \in \T \setminus \{T\}$ it 
either contains $T'$ as a subgraph, or contains no vertex of~$T'$. Since $R$ is a shortest path in this graph, 
it follows that $R$ intersects each tree in $\T \setminus \{T\}$ in a path (if at all), and hence any shortcut on~$R$ must be within~$T$, 
i.e., it has the form~$T[q,q']$. 
On the one hand, if $T[q,q']$ is contained in the auxiliary graph~$G \langle T[v',u],u',v,\tau \setminus \tau' \rangle$, 
then this contradicts the optimality of~$R$. On the other hand, if the auxiliary graph does not contain $T[q,q']$ 
even though it contains both $q$ and~$q'$, 
then $q$ and~$q'$ must be contained in different components of~$T_{(j',i)} - V(T[v',u])$: in this case, however, by $T[v',u] \subseteq P_1$ 
we obtain that there is a vertex of~$P_1$ on~$T[q,q']$, contradicting our assumption that $T[q,q']$ is a shortcut for~$(P_1,P_2)$.
This proves that $(P_1,P_2)$ is a well-formed pair.

Note that condition~(c) for being a partial solution for~$(u,v,\tau)$ holds, because $T[v',x_{j'}] \cap T[x_i,u] = \emptyset$  
is guaranteed by the algorithm's choice for~$v'$, and so we have $T[x_i,u] \subseteq T[v',u]$.
Condition~(d) holds by the definition of~$G \langle T[v',u],u',v,\tau \setminus \tau' \rangle$ and of~$R$. 

To see that $(P_1,P_2)$ satisfies condition~(e), note that $Q'_1 \cup Q'_2$ contains no vertex that is forbidden for~$(\tau',T)$, 
and $R$ contains no vertex that is forbidden for~$(\tau \setminus \tau',T)$. Hence, neither of them contains a vertex forbidden for~$(\tau,T)$.
Consequently, $P_1 \cup P_2=Q'_2 \cup T[v',u] \cup Q'_1 \cup R$ does not contain vertices forbidden for~$(\tau,T)$, 
i.e., condition~(e) holds for $(P_1,P_2)$.

Finally, let us show that condition~(f) also holds for~$(P_1,P_2)$. Consider some $T' \in \T \setminus \{T\}$.
First, if $T' \notin \tau'$, then the vertices of~$T'$ are forbidden for $(\tau',T)$ and hence cannot appear on $Q'_1 \cup Q'_2$. 
This implies $V(P_1) \cap V(T') = \emptyset$, so $P_1$ and~$P_2$ are certainly not in contact at~$T'$. 
Second, if $T' \in \tau'$, then the vertices of~$T'$ are forbidden for $(\tau \setminus \tau',T)$ and hence cannot appear on $R$,
by the definition of $G \langle T[v',u],u',v,\tau \setminus \tau',\rangle$. 
Moreover, since $(Q'_1,Q'_2)$ is a partial solution for~$(u',v',\tau')$, we know that $Q'_1$ and~$Q'_2$ are not in contact at~$T'$, due to condition~(f).
By $V(T') \cap V(R)=\emptyset$, this implies that $P_1$ and~$P_2$ cannot be in contact at~$T'$ either. 
This proves that $(P_1,P_2)$ satisfies condition~(f).
\end{claimproof}

\smallskip
By Claim~\ref{clm:caseB-almostpartsol}, to show that $(P_1,P_2)$ is a  partial solution for $(u,v,\tau)$,
it remains to prove that $P_1$ and $P_2$ are permissively disjoint $(\{a_1,a_2\},\{u,v\})$-paths. 
We show that this holds whenever $R$ is vertex-disjoint from~$Q'_1 \cup Q'_2$, as stated in Claim~\ref{clm:caseB-Rdisjoint}.
Then we will prove in Claims~\ref{clm:caseB-Q1intersectsR} and~\ref{clm:caseB-Q2intersectsR}, that this is always the case.

\begin{claim}
\label{clm:caseB-Rdisjoint}
If $V(R) \cap V(Q'_1 \cup Q'_2)=\emptyset$, then $(P_1,P_2)$ is a  partial solution for $(u,v,\tau)$. 
\end{claim}
\begin{claimproof}
As a consequence of Claim~\ref{clm:caseB-almostpartsol}, it remains to prove that $P_1$ and $P_2$ are permissively disjoint $(\{a_1,a_2\},\{u,v\})$-paths. 

By definition, $Q'_1$ and $Q'_2$ are permissively disjoint $(\{a_1,a_2\},\{u',v'\})$-paths.
Moreover, they do not share any vertex with~$T[v',u] $ except for $v'$:
for $Q'_1$ this follows from its quasi-monotonicity and $i'<j'$;
for $Q'_2$ this follows from condition~(d) on~$Q'_2$ for being a partial solution for~$(u',v').$
Note that $R$ cannot share a vertex with $T[v',u]$ either,
by the definition of~$G \langle T[v',u],u',v,\tau \setminus \tau' \rangle$. 
Hence, our assumption $V(R) \cap V(Q'_1 \cup Q'_2)=\emptyset$ implies that 
$P_1$ and $P_2$ are permissively disjoint $(\{a_1,a_2\},\{u,v\})$-paths, as claimed.
\end{claimproof}

\smallskip 
Using Claims~\ref{clm:partsol-recurse} and~\ref{clm:partsol-path}, the optimality of~$(Q'_1,Q'_2)$ and that of~$R$, we get
\begin{align}
\begin{split}
\label{eq:qpqp}
w(Q_1) + w(Q_2) & = w(\Q_2) +  w(T[v',u]) + w(\Q_1) + w(\R)  \\
& \geq w(Q'_1)+w(Q'_2) + w(T[v',u]) + w(R) \geq  w(P_1) + w(P_2)
\end{split}
\end{align}
where the last inequality follows from the fact that $Q'_1$, $Q'_2$, and~$T[v',u]$ are 
pairwise edge-disjoint, as we have proved in Claim~\ref{clm:caseB-Rdisjoint},
and moreover, $R$ cannot share a negative-weight edge with any of these four paths.
This latter fact follows from our definition of the auxiliary graph~$G \langle T[v',u],u',v,\tau \setminus \tau' \rangle$,
the quasi-monotonicity of~$Q'_1$ and~$Q'_2$, and the fact that they avoid vertices forbidden for~$(\tau',T)$;
note that each vertex in~$V(\T \setminus \{T\})$ is either forbidden for~$(\tau',T)$, or forbidden for~$(\tau \setminus \tau',T)$ (or both), 
so no edge of~$E(\T \setminus \{T\})$  can be shared by $R$ and $Q'_1 \cup Q'_2$. 

It remains to show that $(P_1,P_2)$ is indeed a partial solution for~$(u,v,\tau)$; Inequality~\ref{eq:qpqp} then implies its optimality.
By Claim~\ref{clm:caseB-Rdisjoint}, it suffices to show that $R$ is vertex-disjoint from $Q'_1 \cup Q'_2$.
To this end, we show in Claims~\ref{clm:caseB-Q1intersectsR} and~\ref{clm:caseB-Q2intersectsR} that assuming otherwise 
contradicts the optimality of~$(Q_1,Q_2)$.


\smallskip 
Let $y$ be the vertex on~$R$ closest to~$v$ that is contained in~$Q'_1 \cup Q'_2$, assuming that such a vertex exists. 

\begin{claim}
\label{clm:caseB-Q1intersectsR}
If $y \in V(Q'_1)$, then 
$(u,v,\tau)$ admits a partial solution of weight less than $w(Q_1)+w(Q_2)$. 
\end{claim}

\begin{claimproof}
Define $P'_2=P_2 \setminus W$ 
for the closed walk $W=Q'_1[y,u'] \cup R[u',y]$; see Figure~\ref{fig:yQ1} for an illustration. 
Note that $Q'_1$ and $R$ cannot share vertices of~$T$ other than~$u'$, because
$Q'_1$ contains no vertex of $T$ outside $T_{(1,i')}$ due to its plainness and quasi-monotonicity, 
and $R$ contains no vertex of~$T$ outside $T_{(j',i)}$ except for~$u'$ and~$v$, because $R$ is in~$G \langle T[v',u],u',v, \tau \setminus \tau' \rangle$;
from $i'<j'$ thus follows $V(Q'_1) \cap V(R) \cap V(T)=\{u'\}$. 
Therefore, $W$ does not contain any edge of~$T$ more than once.

It is also clear that $W$ does not contain any edge of~$E(\T \setminus \{T\})$ more than once:
if some~$T' \in \T \setminus \{T\}$ shares an edge with $Q'_1$, 
then $T' \in \tau'$  holds by condition~(e) for $(Q'_1,Q'_2)$ being a partial solution for~$(u',v',\tau')$; 
however, then the vertices of~$T'$ are forbidden for~$(\tau \setminus \tau',T)$ and hence cannot be contained in~$R$.
This implies that $W$ has non-negative weight by Lemma~\ref{lem:closed-walk}.
Using Inequality~\ref{eq:qpqp} this implies 
\begin{align}
\label{eq:p1p2}
w(P_1) + w(P'_2) &= w(P_1)+w(P_2) -w(W) \leq  w(P_1)+ w(P_2) \leq w(Q_1) + w(Q_2).
\end{align}

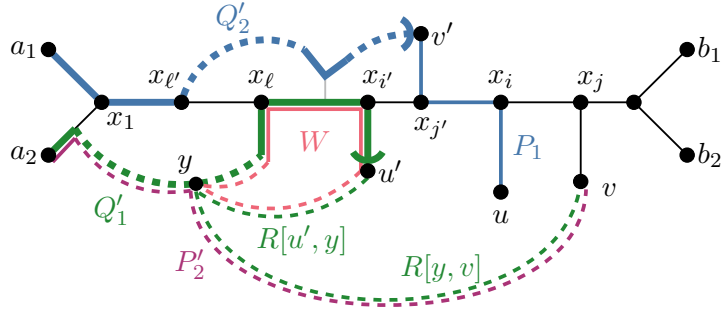
\begin{figure}
    \centering  
    \begin{tikzpicture}[xscale=0.7, yscale=0.7]

  \node[draw, circle, fill=black, inner sep=1.7pt] (a1) at (-1, 1) {};
  \node[draw, circle, fill=black, inner sep=1.7pt] (a2) at (-1, -1) {};
  \node[draw, circle, fill=black, inner sep=1.7pt] (x1) at (0, 0) {};
  \node[draw, circle, fill=black, inner sep=1.7pt] (xell')  at (1.5, 0) {};
  \node[draw, circle, fill=black, inner sep=1.7pt] (xell)  at (3, 0) {};
  \coordinate[inner sep=0pt] (w)  at (3, -1) {};
  \node[draw, circle, fill=black, inner sep=1.7pt] (xi')  at (5, 0) {};
  \node[draw, circle, fill=black, inner sep=1.7pt] (xj')  at (6, 0) {};
  \node[draw, circle, fill=black, inner sep=1.7pt] (xi) at (7.5, 0) {};
  \node[draw, circle, fill=black, inner sep=1.7pt] (xj) at (9, 0) {};  
  \node[draw, circle, fill=black, inner sep=1.7pt] (u')  at (5, -1.3) {};        
  \node[draw, circle, fill=black, inner sep=1.7pt] (v')  at (6, 1.3) {};          
  \node[draw, circle, fill=black, inner sep=1.7pt] (u)  at (7.5, -1.7) {};      
  \node[draw, circle, fill=black, inner sep=1.7pt] (v) at (9, -1.5) {};
  \node[draw, circle, fill=black, inner sep=1.7pt] (xr) at (10, 0) {};
  \node[draw, circle, fill=black, inner sep=1.7pt] (b1) at (11, 1) {};
  \node[draw, circle, fill=black, inner sep=1.7pt] (b2) at (11, -1) {};
  \coordinate[inner sep=0pt] (h0) at (-0.5,-0.5) {};
  \coordinate[inner sep=0pt] (h4) at (4.2,0) {};
  \coordinate[inner sep=0pt] (h2) at (4.2,0.5) {};
  \coordinate[inner sep=0pt] (h1) at (3.8,1) {};  
  \coordinate[inner sep=0pt] (h3) at (4.6,0.8) {};  
   

  \node[left] at (a1) {$a_1$};
  \node[left] at (a2) {$a_2$};
  \node[right] at (b1) {$b_1$};
  \node[right] at (b2) {$b_2$};
  \node[below, yshift=-3pt] at (u) {$u$};
  \node[right, yshift=-3pt, xshift=4pt] at (v) {$v$};  
  \node[right] at (u') {$u'$};
  \node[right] at (v') {$v'$};  
  \node[above, yshift=1.7pt, xshift=0pt] at (xell) {$x_\ell$};
  \node[above, yshift=1.7pt, xshift=-6pt] at (xell') {$x_{\ell'}$};
  \node[above, yshift=1.7pt, xshift=0pt] at (xi) {$x_i$};
  \node[above, yshift=1.7pt, xshift=4pt] at (xi') {$x_{i'}$};
  \node[above, xshift=4pt] at (xj) {$x_j$};    
  \node[below, xshift=4pt, yshift=-2pt] at (xj') {$x_{j'}$};      
  \node[below right, xshift=-2pt] at (x1) {$x_1$};

  \draw[line width=0.7pt] (a2) to  (x1) to (xr) to (b1);
  \draw[line width=0.7pt] (xr) to (b2);
  \draw[line width=0.7pt,grey] (h2) to (h4);
  \draw[line width=0.7pt] (v) to (xj);  

  \draw[line width=2.6pt, green] (a2) to (h0); 
  \draw[line width=2.6pt, green, dashed] (h0) to [bend right=50] node(y)[pos=0.66,circle, fill=black, line width=1pt, inner sep=1.7pt] {} 
  		node[pos=0.3, below, yshift=-4pt,xshift=-4pt] {$Q'_1$}(w);

  \path (u') +(0,-0.13) coordinate (hu');
  \path (y) +(0,-0.13) coordinate (hy);
  \coordinate (yy) at (y);
  \node[above, xshift=-4pt] at (yy) {$y$};    
  \coordinate (uu') at (u');  
  \path (w) +(0.13,-0.13) coordinate (hw);
  \path (xell) +(0.13,-0.13) coordinate (hell);
  \path (xi') +(-0.13,-0.13) coordinate (hi');  
  \path (u') +(-0.13,0) coordinate (hhu');    
  \draw[line width=1.4pt, loopcolor, dashed] (hhu') to [bend left=40] (yy);
  \draw[line width=1.4pt, loopcolor, dashed] (y) to [bend right=30] (hw);
  \draw[line width=1.4pt, loopcolor] (hw) to (hell) to node[midway, below,yshift=-4pt] {$W$} (hi') to (hhu');
  
  \draw[line width=1.4pt, green, dashed] (hu') to [bend left=40] node[pos=0.4, below,yshift=0pt] {$R[u',y]$}  (hy);
  \draw[line width=1.4pt, green, dashed] (y.south) to [bend right=80] node[pos=0.6, above,yshift=0pt] {$R[y,v]$} (v);

  \draw[line width=2.6pt, blue] (a1) to (x1) to (xell');
  \draw[line width=2.6pt, blue, dashed] (xell') to [bend left=50] node[midway,above] {$Q'_2$} (h1);
  \draw[line width=2.6pt, blue] (h1) to (h2) to (h3);
  \draw[line width=2.6pt, blue, dashed, arrows = {-Parenthesis[]}] (h3) to [bend left=20] (v');  
  \draw[line width=1.4pt, blue]  (v') to (xj') to (xi) to node[midway, right] {$P_1$} (u);
  
  \draw[line width=2.6pt, green, arrows = {-Parenthesis[]}] (w) to (xell) to (xi') to (u');

	  \path (a2) +(0.08,-0.13) coordinate (ba2);
	  \path (h0) +(0,-0.18) coordinate (bh0);
	  \path (yy) +(-0.1,-0.13) coordinate (by);  
	  \path (v) +(0.1,-0.15) coordinate (rv);  
    
  \draw[line width=1.4pt, alterpathcolor] (ba2) to (bh0); 
  \draw[line width=1.4pt, alterpathcolor, dashed] (bh0) to [bend right=33](by);   
  \draw[line width=1.4pt, alterpathcolor, dashed] (by) to [bend right=88] node[pos=0.2, left, xshift=-6pt] {$P'_2$}(rv);

   \node[circle,fill=black, inner sep=1.9pt] (y) at (yy){};
  \node[draw, circle, fill=black, inner sep=1.9pt] (a1) at (-1, 1) {};
  \node[draw, circle, fill=black, inner sep=1.9pt] (a2) at (-1, -1) {};
  \node[draw, circle, fill=black, inner sep=1.9pt] (x1) at (0, 0) {};
  \node[draw, circle, fill=black, inner sep=1.9pt] (xell')  at (1.5, 0) {};
  \node[draw, circle, fill=black, inner sep=1.9pt] (xell)  at (3, 0) {};
  \node[draw, circle, fill=black, inner sep=1.9pt] (xi')  at (5, 0) {};
  \node[draw, circle, fill=black, inner sep=1.9pt] (xj')  at (6, 0) {};
  \node[draw, circle, fill=black, inner sep=1.9pt] (xi) at (7.5, 0) {};
  \node[draw, circle, fill=black, inner sep=1.9pt] (xj) at (9, 0) {};  
  \node[draw, circle, fill=black, inner sep=1.9pt] (u')  at (5, -1.3) {};        
  \node[draw, circle, fill=black, inner sep=1.9pt] (v')  at (6, 1.3) {};          
  \node[draw, circle, fill=black, inner sep=1.9pt] (u)  at (7.5, -1.7) {};      
  \node[draw, circle, fill=black, inner sep=1.9pt] (v) at (9, -1.5) {};
  \node[draw, circle, fill=black, inner sep=1.9pt] (xr) at (10, 0) {};
  \node[draw, circle, fill=black, inner sep=1.9pt] (b1) at (11, 1) {};
  \node[draw, circle, fill=black, inner sep=1.9pt] (b2) at (11, -1) {};

\end{tikzpicture}
  \caption{Illustration for Claim~\ref{clm:caseB-Q1intersectsR}. 
  Paths~$P_1$ and~$P_2$ are shown in \textcolor{blue}{\textbf{blue}} and in \textcolor{green}{\textbf{green}}, respectively. 
  Their subpaths $Q'_2$ and $Q'_1$ are depicted in bold, with their endings  marked by a parenthesis-shaped delimiter.
    We highlighted the closed walk~$W$ in \textcolor{coralred}{\textbf{coral red}}, and path~$P'_2$ in \textcolor{purple}{\textbf{purple}}.
 }
  \label{fig:yQ1}

\end{figure}

It is clear by our choice of~$y$ that $R[y,v]$ shares no vertices with~$Q'_1$ or $Q'_2$.
Using the same arguments as in the proof of Claim~\ref{clm:caseB-Rdisjoint}, 
we obtain that $P'_1$ and $P_2$ are permissively disjoint $(\{a_1,a_2\},\{u,v\})$-paths. 
It is also clear that they satisfy conditions~(c)--(f) of being a partial solution for~$(u,v,\tau)$, 
and that $P_1$ is plain and quasi-monotone.
To see that $P'_2$ is also plain and quasi-monotone, recall that $y$ cannot be on~$T$.
Since $P'_2$ is obtained by deleting the subpath of~$Q'_1$ starting at~$y$, and then appending a subpath of~$R$ (not containing any vertex of~$X$) 
it follows that $P'_2$ is quasi-monotone and plain.

Therefore, $(P_1,P'_2)$ satisfies all conditions for being a partial solution for~$(u,v,\tau)$, except the condition of being locally cheapest.
\smallskip

We are going to show that $(P_1,P'_2)$ is \emph{not} locally cheapest, but $\Amend(P_1,P'_2)$ is a partial solution for~$(u,v,\tau)$. 
Observe that this suffices to prove the claim: 
if $(P_1,P'_2)$ is \emph{not} locally cheapest, 
then by Observation~\ref{obs:locally-cheapest} and Inequality~\ref{eq:p1p2} we get that 
$\Amend(P_1,P'_2)$ has weight less than $w(P_1)+w(P'_2) \leq w(Q_1)+w(Q_2)$.

\smallskip
By Claim~\ref{clm:caseB-almostpartsol}, we know that $P_1$ and~$P_2$ are not in contact at any tree of~$\T$ other than~$T$, 
so any shortcut for $(P_1,P'_2)$ must be on~$T$.
Recall also that $P_1$ and~$P_2$ are locally cheapest, again by Claim~\ref{clm:caseB-almostpartsol}. 
Hence, any shortcut for $(P_1,P'_2)$ must be the result of deleting the closed walk~$W=Q'_1[y,u'] \cup R[u',y] \subseteq P_2$.
This means that if $T[q,q']$ is a shortcut for~$(P_1,P'_2)$, then $T[q,q']$ contains an inner vertex $\hat{q} \in V(W)$. 
In addition, note that $\hat{q}$ cannot be on~$R$ unless $\hat{q}=u'$: since $V(P_1)\cap V(T_{(j',i)})=V(T[v',u])$, 
we know that $P_1$ contains no two vertices that may be separated on~$T$ 
by any vertex of~$T_{(j',i)}$. 
Hence, any shortcut for~$(P_1,P'_2)$ is of the form $T[q,q']$ where 
$q$ and $q'$ are on~$P_1$, 
and $T[q,q']$ contains a vertex $\hat{q} \in V(Q'_1[y,u'])$.

Let $(\widetilde{P}_1,\widetilde{P}_2)=\Amend(P_1,P'_2)$. 
By the above arguments we have $\widetilde{P}_2=P'_2 = Q'_1 \cup R \setminus W$. 
Observe that by amending shortcuts we cannot violate permissive disjointness, 
and it is easy to see that conditions~(c)--(f) remain true as well.
Moreover, after amending all shortcuts, the resulting pair of paths must be locally cheapest. 
Recall that we already proved that $\widetilde{P}_2=P'_2$ is plain and quasi-monotone. 
Hence, to prove that $(\widetilde{P}_1,\widetilde{P}_2)$ is a partial solution for~$(u,v,\tau)$
it suffices to check the plainness and quasi-monotonicity of $\widetilde{P}_1$.


Let $x_\ell$ be the first vertex on~$Q'_1$ after $y$ that is on~$X$.
By $y \notin V(T)$ we know that $Q'_1$ does not contain $T[x_1,x_\ell]$, and since $Q'_1$ and $Q'_2$ are locally cheapest,
$T[x_1,x_\ell]$ must contain a vertex of~$Q'_2$. Let $x_{\ell'}$ denote the vertex on $T[x_1,x_\ell]$ closest to~$x_\ell$ that belongs to~$Q'_2$. 
We claim that $\widetilde{P}_1= Q'_2 \setminus Q'_2[x_{\ell'},x_{j'}] \cup T[x_{\ell'},x_{j'}]$. 
To see this, note that the deletion of the vertices of~$Q'_1[x_\ell,u']$ 
removes all vertices of~$Q'_1$ from $T[x_{\ell'},x_{j'}]$; these vertices (weakly) follow $x_\ell$ on~$Q'_1$ by its quasi-monotonicity. 
Hence, as $x_{\ell'}$ and $x_{j'}$ are both on~$Q'_2$, 
we obtain $T[x_{\ell'},x_{j'}] \subseteq \widetilde{P}_1$.
Observe also that if a vertex~$z$ of $Q'_1[y,x_\ell]$  is contained in some~$T_h$ with $h<\ell$, then $x_h \in V(Q'_1 \cup Q'_2)$ by Lemma~\ref{lem:paths-vs-X},
which implies $x_h \in V(Q'_2)$ by the definition of~$x_\ell$, and thus we get $h \leq \ell'$ from the definition of~$x_{\ell'}$. 
However, then the removal of~$z$ from~$Q'_1$ cannot result in a shortcut, since $Q'_2$ is plain and thus its intersection with~$T_h$ is a path. 
Therefore, $\widetilde{P}_1= Q'_2 \setminus Q'_2[x_{\ell'},x_{j'}] \cup T[x_{\ell'},x_{j'}]$ indeed holds.
As a consequence, the plainness and quasi-monotonicity of~$P_1$ immediately implies that $\widetilde{P}_1$ is plain and quasi-monotone as well. 
Hence, $(\widetilde{P}_1,\widetilde{P}_2)$ is a partial solution for~$(u,v,\tau)$.

To see that $(P_1,P'_2)$ is not locally cheapest,
observe that at least one shortcut is created when deleting~$W$: we know that $T[x_{\ell'},x_{j'}] \subseteq \widetilde{P}_1$,
but $x_{i'}$ lies on the path~$T[x_{\ell'},x_{j'}]$ and is a vertex of~$W$. Hence $w(\widetilde{P}_1)+w(\widetilde{P}_2)<w(P_1)+w(P_2) \leq w(Q_1)+w(Q_2)$.
\end{claimproof}

\smallskip
The proof of Claim~\ref{clm:caseB-Q2intersectsR} is similar to the proof of Claim~\ref{clm:caseB-Q1intersectsR}, 
but it is not entirely symmetric.

\begin{claim}
\label{clm:caseB-Q2intersectsR}
If $y \in V(Q'_2)$, then 
$(u,v,\tau)$ admits a partial solution of weight less than $w(Q_1)+w(Q_2)$. 
\end{claim}

\begin{claimproof}
Define $P'_1=Q'_1 \setminus T[x_{i'},u'] \cup T[x_{i'},u]$ and $P'_2=Q'_2 \setminus Q'_2[y,v'] \cup R[y,v]$, and
consider the closed walk
\[W= R[u',y] \cup Q'_2[y,v'] \cup T[u',v'].\]

See Figure~\ref{fig:yQ2} for an illustration.
We argue that $W$ does not contain any edge of~$T$ more than once. 
First, by the definition of $G \langle T[v',u],u',v,\tau \setminus \tau' \rangle$ we know that 
any vertex of~$T$ contained in the path $R$ belongs to~$T_{(j',i)}$, but not to~$T[v',u]$. 
Hence, $R$ shares no edges with~$T[u',v']$.
Condition~(d) for $Q'_2$ being a partial solution for~$(u',v',\tau')$
requires $V(T_{(i'+1,r)}) \cap V(Q'_2) \subseteq \{v'\}$.
This shows that $Q'_2$ cannot share an edge of~$T$ with~$R$ or with~$T[x_{i'},v']$; note that $y \notin V(T)$ also follows.
Lastly, $Q'_2[y,v']$ shares no edge of~$T$ with $T[u',x_{i'}]$ by
the (permissive) disjointness of $Q'_1$ and $Q'_2$. 

It is also clear that $W$ does not contain any edge of~$E(\T \setminus \{T\})$ more than once:
if some~$T' \in \T \setminus \{T\}$ shares an edge with $Q'_2$, 
then $T' \in \tau'$  holds by condition~(e) for $(Q'_1,Q'_2)$ being a partial solution for~$(u',v',\tau')$; 
however, then the vertices of~$T'$ are forbidden for~$(\tau \setminus \tau',T)$ and hence cannot be contained in~$R$.
Therefore, by Lemma~\ref{lem:closed-walk} we know $w(W)\geq 0$.
By Inequality~\ref{eq:qpqp} this implies 
\begin{align} 
w(P'_1) + w(P'_2)
& = w(P_1)+ w(P_2)- w(T[x_{i'},u'] \cup R[u',y]  \cup Q'_2[y,v'] \cup T[v',x_{j'}])  \notag \\
& \phantom{=,} + w(T[x_{i'},v']) \notag \\
& = w(P_1)+w(P_2)- w(W \setminus T[x_{i'},x_{j'}]) + w(T[x_{i'},x_{j'}]) \notag \\
& < w(P_1)+w(P_2) \leq w(Q_1) + w(Q_2). \label{eq:vmi}
\end{align}

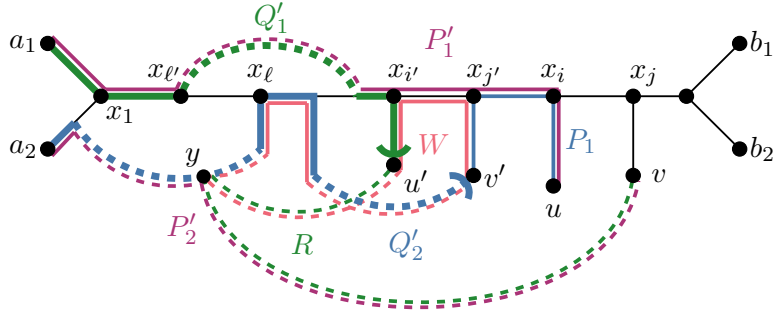
\begin{figure}
    \centering  
    \begin{tikzpicture}[xscale=0.7, yscale=0.7]

  \node[draw, circle, fill=black, inner sep=1.7pt] (a1) at (-1, 1) {};
  \node[draw, circle, fill=black, inner sep=1.7pt] (a2) at (-1, -1) {};
  \node[draw, circle, fill=black, inner sep=1.7pt] (x1) at (0, 0) {};
  \node[draw, circle, fill=black, inner sep=1.7pt] (xell')  at (1.5, 0) {};
  \node[draw, circle, fill=black, inner sep=1.7pt] (xell)  at (3, 0) {};
  \coordinate[inner sep=0pt] (w)  at (3, -1) {};
  \node[draw, circle, fill=black, inner sep=1.7pt] (xi')  at (5.5, 0) {};
  \node[draw, circle, fill=black, inner sep=1.7pt] (xj')  at (7, 0) {};
  \node[draw, circle, fill=black, inner sep=1.7pt] (xi) at (8.5, 0) {};
  \node[draw, circle, fill=black, inner sep=1.7pt] (xj) at (10, 0) {};  
  \node[draw, circle, fill=black, inner sep=1.7pt] (u')  at (5.5, -1.3) {};        
  \node[draw, circle, fill=black, inner sep=1.7pt] (v')  at (7, -1.5) {};          
  \node[draw, circle, fill=black, inner sep=1.7pt] (u)  at (8.5, -1.7) {};      
  \node[draw, circle, fill=black, inner sep=1.7pt] (v) at (10, -1.5) {};
  \node[draw, circle, fill=black, inner sep=1.7pt] (xr) at (11, 0) {};
  \node[draw, circle, fill=black, inner sep=1.7pt] (b1) at (12, 1) {};
  \node[draw, circle, fill=black, inner sep=1.7pt] (b2) at (12, -1) {};
  \coordinate[inner sep=0pt] (h0) at (-0.5,-0.5) {};
    \coordinate[inner sep=0pt] (h7) at (4.8,0) {};  
    \coordinate[inner sep=0pt] (h5) at (4,0) {};      
    \coordinate[inner sep=0pt] (h6) at (4,-1.5) {};          
    \coordinate[inner sep=0pt] (h8) at (3,-1) {};              
   

  \node[left] at (a1) {$a_1$};
  \node[left] at (a2) {$a_2$};
  \node[right] at (b1) {$b_1$};
  \node[right] at (b2) {$b_2$};
  \node[below, yshift=-3pt] at (u) {$u$};
  \node[right, yshift=0pt, xshift=3pt] at (v) {$v$};  
  \node[below right, yshift=2pt] at (u') {$u'$};
  \node[right] at (v') {$v'$};  
  \node[above, yshift=1.7pt, xshift=0pt] at (xell) {$x_\ell$};
  \node[above, yshift=1.7pt, xshift=-6pt] at (xell') {$x_{\ell'}$};
  \node[above, yshift=1.7pt, xshift=0pt] at (xi) {$x_i$};
  \node[above, yshift=1.7pt, xshift=4pt] at (xi') {$x_{i'}$};
  \node[above, xshift=4pt] at (xj) {$x_j$};    
  \node[above, xshift=4pt] at (xj') {$x_{j'}$};      
  \node[below right, xshift=-2pt] at (x1) {$x_1$};

  \draw[line width=0.7pt] (a2) to  (x1) to (xr) to (b1);
  \draw[line width=0.7pt] (xr) to (b2);
  \draw[line width=0.7pt] (v) to (xj);

  \draw[line width=2.6pt, blue] (a2) to (h0);
  \draw[line width=2.6pt, blue, dashed] (h0) to [bend right=50]  node(y)[pos=0.7,circle, fill=black, line width=1pt, inner sep=1.7pt] {} (h8);
  \coordinate (yy) at (y);
  \node[above, xshift=-4pt] at (yy) {$y$};    

  \draw[line width=2.6pt, green] (a1) to (x1) to (xell'); 
  \draw[line width=2.6pt, green, dashed] (xell') to [bend left=70] node[midway,above] {$Q'_1$} (h7);
  \path (yy) +(0.1,0.07) coordinate (ay);   	  
  \draw[line width=1.4pt, green, dashed] (u') to [bend left=50] (ay);  
  \draw[line width=1.4pt, green, dashed] (yy) to [bend right=80] node[pos=0.3, above,yshift=6pt] {$R$} (v);  

  \path (a1) +(0.08,0.13) coordinate (fa1);
  \path (x1) +(0.08,0.13) coordinate (fx1);  
  \path (xell') +(-0.08,0.13) coordinate (fxell');  
  \path (h7) +(0.08,0.13) coordinate (fh7);  
  \path (xi) +(0.1,0.13) coordinate (fxi);  
  \path (u) +(0.1,0) coordinate (fu); 
  \path (a2) +(0.08,-0.13) coordinate (ba2);
  \path (h0) +(0,-0.18) coordinate (bh0);
  \path (yy) +(-0.04,-0.13) coordinate (by);  
  \path (v) +(0.1,-0.1) coordinate (rv);  
  \path (xj) +(0.1,0) coordinate (rxj);  
  
  \draw[line width=1.4pt, alterpathcolor] (fa1) to (fx1);
  \draw[line width=1.4pt, alterpathcolor] (fx1) to (fxell'); 
  \draw[line width=1.4pt, alterpathcolor, dashed] (fxell') to [bend left=70] (fh7);
  \draw[line width=1.4pt, alterpathcolor] (fh7) to node[pos=0.4, above, yshift=8pt] {$P'_1$} (fxi); 
  \draw[line width=1.4pt, alterpathcolor] (fxi) to (fu); 
  
  \draw[line width=1.4pt, alterpathcolor] (ba2) to (bh0); 
  \draw[line width=1.4pt, alterpathcolor, dashed] (bh0) to [bend right=33](by);   
  \draw[line width=1.4pt, alterpathcolor, dashed] (by) to [bend right=80] node[pos=0.1, left, xshift=-4pt] {$P'_2$}(rv);   

  \path (v') +(-0.13,-0.2) coordinate (lv'); 
  \path (xj') +(-0.13,-0.1) coordinate (lxj'); 
  \path (xi') +(0.13,-0.1) coordinate (lxi');   
  \path (u') +(0.13,-0.1) coordinate (lu');   	  
  \path (yy) +(0,-0.03) coordinate (fy);   	  
  \path (h8) +(0.13,-0.1) coordinate (rh8);   	    
  \path (xell) +(0.13,-0.13) coordinate (rxell);  
  \path (h5) +(-0.13,-0.13) coordinate (lh5);   	       	    
  \path (h6) +(-0.13,-0.13) coordinate (lh6);   	       	      
  
  \draw[line width=1.4pt, loopcolor] (lv') to node[midway, left, xshift=-2pt] {$W$} (lxj');
  \draw[line width=1.4pt, loopcolor] (lxj') to (lxi');  
  \draw[line width=1.4pt, loopcolor] (lxi') to (lu');    
  \draw[line width=1.4pt, loopcolor] (lxi') to (lu');    
  \draw[line width=1.4pt, loopcolor, dashed] (lu') to [bend left=50] (yy);  
  \draw[line width=1.4pt, loopcolor, dashed] (yy) to [bend right=20] (rh8);    
  \draw[line width=1.4pt, loopcolor] (rh8) to (rxell);    
  \draw[line width=1.4pt, loopcolor] (rxell) to (lh5);      
  \draw[line width=1.4pt, loopcolor] (lh5) to (lh6);        
  \draw[line width=1.4pt, loopcolor, dashed] (lh6) to [bend right=40] (lv');    

  \draw[line width=2.6pt, green, arrows = {-Parenthesis[]}] (h7) to (xi') to (u');

  \draw[line width=2.6pt, blue] (h8) to (xell) to (h5) to (h6);
  \draw[line width=2.6pt, blue, dashed, arrows = {-Parenthesis[]}] (h6) to [bend right=40] node[pos=0.6, below,yshift=-4pt] {$Q'_2$}(v');  
  \draw[line width=1.4pt, blue] (v') to (xj') to (xi) to node[midway, right] {$P_1$} (u);

  \node[draw, circle, fill=black, inner sep=1.9pt] (a1) at (-1, 1) {};
  \node[draw, circle, fill=black, inner sep=1.9pt] (a2) at (-1, -1) {};
  \node[draw, circle, fill=black, inner sep=1.9pt] (x1) at (0, 0) {};
  \node[draw, circle, fill=black, inner sep=1.9pt] (xell')  at (1.5, 0) {};
  \node[draw, circle, fill=black, inner sep=1.9pt] (xell)  at (3, 0) {};
  \node[draw, circle, fill=black, inner sep=1.9pt] (xi')  at (5.5, 0) {};
  \node[draw, circle, fill=black, inner sep=1.9pt] (xj')  at (7, 0) {};
  \node[draw, circle, fill=black, inner sep=1.9pt] (xi) at (8.5, 0) {};
  \node[draw, circle, fill=black, inner sep=1.9pt] (xj) at (10, 0) {};  
  \node[draw, circle, fill=black, inner sep=1.9pt] (u')  at (5.5, -1.3) {};        
  \node[draw, circle, fill=black, inner sep=1.9pt] (v')  at (7, -1.5) {};          
  \node[draw, circle, fill=black, inner sep=1.9pt] (u)  at (8.5, -1.7) {};      
  \node[draw, circle, fill=black, inner sep=1.9pt] (v) at (10, -1.5) {};
  \node[draw, circle, fill=black, inner sep=1.9pt] (xr) at (11, 0) {};
  \node[draw, circle, fill=black, inner sep=1.9pt] (b1) at (12, 1) {};
  \node[draw, circle, fill=black, inner sep=1.9pt] (b2) at (12, -1) {};
    
   \node[circle,fill=black, inner sep=1.9pt] (y) at (yy){};

\end{tikzpicture}
  \caption{Illustration for Claim~\ref{clm:caseB-Q2intersectsR}. 
  Paths~$P_1$ and~$P_2$ are shown in \textcolor{blue}{\textbf{blue}} and in \textcolor{green}{\textbf{green}}, respectively,
 and their subpaths $Q'_2$ and $Q'_1$ are depicted in bold, with their endings marked by a parenthesis-shaped delimiter.
   Paths~$P'_1$ and~$P'_2$ are shown both in \textcolor{purple}{\textbf{purple}}, 
    and we highlighted the closed walk~$W$ in \textcolor{coralred}{\textbf{coral red}}.
 }
  \label{fig:yQ2}
\end{figure}

It is clear by our choice of~$y$ that $R[y,v]$ shares no vertices with~$Q'_1$ or $Q'_2$.
Using the same arguments as in the proof of Claim~\ref{clm:caseB-Rdisjoint}, 
we obtain that $P'_1$ and $P'_2$ are permissively disjoint $(\{a_1,a_2\},\{u,v\})$-paths. 
It is also clear that they satisfy conditions~(c)--(f), 
and that both $P'_1$ and $P'_2$ are plain and quasi-monotone.
Therefore, $(P'_1,P'_2)$ satisfies all conditions for being a partial solution for~$(u,v,\tau)$, except for the condition of being locally cheapest.

\smallskip
We are going to show that $\Amend(P'_1,P'_2)$ is a partial solution for~$(u,v,\tau)$. 
Observe that this suffices to prove the claim, as
by Observation~\ref{obs:locally-cheapest} and Inequality~\ref{eq:vmi}, 
$\Amend(P'_1,P'_2)$ has weight at most $w(P'_1)+w(P'_2) < w(Q_1)+w(Q_2)$.

\smallskip
By Claim~\ref{clm:caseB-almostpartsol}, we know that $P_1$ and~$P_2$ are not in contact at any tree of~$\T$ other than~$T$, 
so any shortcut for $(P'_1,P'_2)$ must be on~$T$.
Let $(\widetilde{P}_1,\widetilde{P}_2)=\Amend(P'_1,P'_2)$. 
Observe that by amending shortcuts we cannot violate permissive disjointness, 
and it is obvious that conditions~(c)--(f) remain true as well. 
Moreover, after amending all shortcuts, the resulting pair of paths must be locally cheapest. 
Hence, to prove that $(\widetilde{P}_1,\widetilde{P}_2)$ is a partial solution for~$(u,v,\tau)$
it suffices to check the plainness and quasi-monotonicity of $\widetilde{P}_1$ and $\widetilde{P}_2$.

Suppose for contradiction that the condition of plainness or quasi-monotonicity is violated for~$\widetilde{P}_h$ for some~$h \in [2]$
at some vertex~$x_k$; 
by the definition of these properties (see Definitions~\ref{def:plain} and~\ref{def:quasimonotone}) 
we have $x_k \in V(\widetilde{P}_h)$. 
If $x_k$ is also contained in $P'_h$, then exactly by the plainness and quasi-monotonicity of~$P'_1$ and~$P'_2$, these properties
will not get violated at~$x_k$ when amending shortcuts.
Hence, $x_k \notin V(P'_h)$ and thus must be contained in a shortcut for~$(P'_1,P'_2)$. 
Since $P'_1$ ends with $T[x_{i'},u]$, we have that $\widetilde{P}_1$ must also end with~$T[x_{i'},u]$; 
therefore, $x_k \notin V(T[x_{i'},x_i])$.
Note that $k > i$ is not possible either, since $R$ and thus $P'_2$ contains no vertex of $T_{(i+1,r)}$ other than~$v$.
Thus, we get $k<i'$. 

Note that $x_k$ cannot be contained in a shortcut that results from the deletion of $R[y,u'] \cup T[u',x_{i'}]$, 
since such shortcuts cannot contain a vertex of $T_{(1,i'-1)}$.
Consequently, $x_k$ is contained in a shortcut resulting from the deletion of~$Q'_2[y,v']$; 
it also follows that the two endpoints of such a shortcut must lie on~$Q'_1$. 
In particular, $x_k$ is a vertex of $\widetilde{P}_1$ not contained in~$P'_1$.
Since $Q'_1$ and~$Q'_2$ are locally cheapest and plain, we get that 
$Q'_2[y,v']$ must contain a vertex of~$T[x_1,x_{i'-1}]$.

Let $x_\ell$ be the first vertex on~$Q'_2$ after $y$ that is on~$X$ (by our previous sentence, such a vertex exists). 
By $y \notin V(T)$ we know that $Q'_2$ does not contain $T[x_1,x_\ell]$, and since $Q'_1$ and $Q'_2$ are locally cheapest,
it must contain a vertex of~$Q'_1$. Let $x_{\ell'}$ denote the vertex on $T[x_1,x_\ell]$ closest to~$x_\ell$ that belongs to~$Q'_1$. 
We claim that $\widetilde{P}_1= Q'_1 \setminus Q'_1[x_{\ell'},x_{i'}] \cup T[x_{\ell'},u]$. 
To see this, note that the deletion of the vertices of~$Q'_2[x_\ell,v']$ 
removes all vertices of~$Q'_2$ from $T[x_{\ell'},x_{i'}]$; these vertices (weakly) follow $x_\ell$ on~$Q'_2$ by its quasi-monotonicity. 
Hence, as $x_{\ell'}$ and $x_{i'}$ are both on~$Q'_2$, 
we obtain $T[x_{\ell'},x_{i'}] \subseteq \widetilde{P}_1$.
Observe also that if a vertex~$z$ of $Q'_2[y,x_\ell]$  is contained in some~$T_h$ with $h<\ell$, then $x_h \in V(Q'_1 \cup Q'_2)$ by Lemma~\ref{lem:paths-vs-X},
which implies $x_h \in V(Q'_1)$ by the definition of~$x_\ell$, and thus we get $h \leq \ell'$ from the definition of~$x_{\ell'}$. 
However, then the removal of~$z$ from~$Q'_2$ cannot result in a shortcut, since $Q'_1$ is plain and thus its intersection with~$T_h$ is a path. 
Therefore, $\widetilde{P}_1= Q'_1 \setminus Q'_1[x_{\ell'},x_{i'}] \cup T[x_{i'},u]$ indeed holds.
As a consequence, the plainness and quasi-monotonicity of~$P_1$ immediately implies that $\widetilde{P}_1$ is plain and quasi-monotone as well,
a contradiction to our assumption $x_k$ is a vertex on $\widetilde{P}_1$ where one of these properties is violated. 
Hence, $(\widetilde{P}_1,\widetilde{P}_2)$ is a partial solution for~$(u,v,\tau)$.
\end{claimproof}

\smallskip
A direct consequence of Claims~\ref{clm:caseB-Q1intersectsR} and~\ref{clm:caseB-Q2intersectsR}
is that neither $y \in V(Q'_1)$ nor $y \in V(Q_2)$ is possible, as that would contradict the optimality of~$(Q_1,Q_2)$.
Hence, $V(Q'_1 \cup Q'_2) \cap V(R)=\emptyset$, so by Claim~\ref{clm:caseB-Rdisjoint} and Inequality~\ref{eq:qpqp}
we know that $(P_1,P_2)$ is a partial solution for~$(u,v,\tau)$ of minimum weight, as required. 
\end{proof}


\subsection{Assembling the Parts}
\label{sec:permdis}

Using Algorithm~\ref{alg:PartSol} as a subroutine, 
it is now straightforward to construct an algorithm that computes
two permissively disjoint $(\{a_1,a_2\},\{b_1,b_2\})$-paths of minimum total weight; see Algorithm~\ref{alg:PermDisj}.

\begin{varalgorithm}{\textsc{PermDisjoint}}
\caption{Computing two permissively disjoint $(\{a_1,a_2\},\{b_1,b_2\})$-paths of minimum total weight in~$G$. 
}
\label{alg:PermDisj}
\begin{algorithmic}[1]
\Require{Vertices $a_1,a_2,b_1,$ and $b_2$ in~$T$ for which $T[a_1,b_1] \cap T[a_2,b_2] \neq \emptyset$.}
\Ensure{Two permissively disjoint $(\{a_1,a_2\},\{b_1,b_2\})$-paths of minimum total weight, or $\varnothing$ if no such paths exist. }
	\ForAll{$u,v \in V(T) \textrm{ and } \tau \subseteq \T \setminus \{T\}$} $F(u,v,\tau)\leftarrow \varnothing$.
	\EndFor
	\For{$i=1$ to $r$}
		\ForAll{$j$ where $j>i$ or $j=r$}
		   \ForAll{$u \in V(T_i)$, $v \in V(T_j)$ and $ \tau  \subseteq \T \setminus \{T\}$}
		       \State $F(u,v,\tau) \leftarrow $ \textsc{PartSol}$(u,v,\tau)$.					
		   \EndFor
		\EndFor		
	\EndFor
	\If{$F(b_1,b_2,\T \setminus \{T\})=\varnothing$ and $F(b_2,b_1,\T \setminus \{T\})=\varnothing$} {\bf return} $\varnothing$.
	\Else{ {\bf return} a path pair in $\{F(b_1,b_2,\T \setminus \{T\}),F(b_2,b_1,\T \setminus \{T\})\}$ with minimum weight.} \label{line:PD-final}
	\EndIf
\end{algorithmic}
\end{varalgorithm}

Using the correctness of Algorithm~\ref{alg:PartSol}, as established by Lemma~\ref{lem:partsol-alg}, 
and the observation in Lemma~\ref{lem:partsol-to-permdisj} on how partial solutions can be used to 
find two permissively disjoint $(\{a_1,a_2\},\{b_1,b_2\})$-paths of minimum total weight, we immediately obtain the following.

\begin{corollary}
\label{cor:perm-disjoint-paths}
For each constant~$c \in \mathbb{N}$, there is a polynomial-time algorithm that finds two permissively disjoint $(\{a_1,a_2\},\{b_1,b_2\})$-paths of minimum total weight in~$G$ (if such paths exist), where the set of negative edges in~$G$ spans $c$ trees.
\end{corollary}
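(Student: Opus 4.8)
The plan is to show that Algorithm~\ref{alg:PermDisj} (\textsc{PermDisjoint}) is the algorithm claimed by the corollary, running under the standing assumptions of Section~\ref{sec:properties} on the four terminals (they lie on $T$, the paths $T[a_1,b_1]$ and $T[a_2,b_2]$ meet in a path $X$ with at least one edge, and $a_1,a_2$ lie in one component of $T\setminus X$ while $b_1,b_2$ lie in the other); these are exactly the configurations in which the subroutine is invoked in Section~\ref{sec:nonsep-highlevel}, after the relabelling carried out in the proof of Lemma~\ref{lem:type2bsol}. Correctness is then a direct consequence of the two facts we have already established about partial solutions. First I would argue that the dynamic-programming table $F$ is filled consistently: \textsc{PermDisjoint} computes $F(u,v,\tau)$ via \textsc{PartSol} in the order of the outer loop $i=1,\dots,r$, and since the fixed ordering $\prec$ on $V(T)$ refines the block order of the subtrees $T_i$, whenever \textsc{PartSol}$(u,v,\tau)$ is invoked with $u\in V(T_i)$ every entry $F(u',v',\tau')$ with $u'\in V(T_{i'})$ and $i'<i$ has already been computed. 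By induction on $i$ — the base case $i=1$ being vacuous, as \textsc{PartSol} then has no recursive dependencies — Lemma~\ref{lem:partsol-alg} yields that every entry $F(u,v,\tau)$ is correctly computed, i.e.\ it holds a minimum-weight partial solution for $(u,v,\tau)$, or equals $\varnothing$ if none exists. In particular this holds for $F(b_1,b_2,\T\setminus\{T\})$ and $F(b_2,b_1,\T\setminus\{T\})$, the two entries consulted on line~\ref{line:PD-final}.

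To finish correctness I would invoke Lemma~\ref{lem:partsol-to-permdisj}, which states that a pair of paths is a minimum-weight permissively disjoint $(\{a_1,a_2\},\{b_1,b_2\})$-path pair in $G$ if and only if it is a minimum-weight partial solution for $(b_h,b_{3-h},\T\setminus\{T\})$ for some $h\in[2]$; moreover, condition~(a) of Definition~\ref{def:partsol} forces any partial solution to be a pair of permissively disjoint $(\{a_1,a_2\},\{b_1,b_2\})$-paths, so if no such pair exists then both table entries are $\varnothing$. Hence returning the cheaper of $F(b_1,b_2,\T\setminus\{T\})$ and $F(b_2,b_1,\T\setminus\{T\})$, or $\varnothing$ when both are empty, produces a minimum-weight permissively disjoint $(\{a_1,a_2\},\{b_1,b_2\})$-path pair whenever one exists, and reports $\varnothing$ otherwise.

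It remains to bound the running time. The table $F$ has $O(n^2\cdot 2^{c-1})$ entries, which is $O(n^2)$ for constant $c$. A single call \textsc{PartSol}$(u,v,\tau)$ performs two iterations of its first loop, and in its second loop at most $O(n)$ choices for the pair $(i',u')$, $O(n)$ choices for $(j',v')$, and $2^{|\tau|}\le 2^{c-1}$ choices for $\tau'$; each iteration performs a reachability test and a shortest-path computation in an auxiliary graph $G\langle\cdot,\cdot,\cdot,\cdot\rangle$, plus a check of whether a given path pair is a partial solution for $(u,v,\tau)$. Since each auxiliary graph is a subgraph of $G$ it inherits conservativeness, so such a shortest path is computable in polynomial time (e.g.\ by Bellman--Ford), and verifying conditions~(a)--(f) of Definition~\ref{def:partsol} amounts to a traversal of the two candidate paths together with the trees in $\T$, again polynomial. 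Thus \textsc{PartSol} runs in $\mathrm{poly}(n)$ time for constant $c$, and \textsc{PermDisjoint} in $O(n^2)\cdot\mathrm{poly}(n)=\mathrm{poly}(n)$ time, as required. The genuine mathematical work here has already been absorbed into Lemma~\ref{lem:partsol-alg}; the only points that need care are the ones just isolated, namely that the number of dynamic-programming states stays polynomial — which is precisely where the constancy of $c$ enters, through the $2^{c-1}$ choices of $\tau$ — and that each per-state operation (shortest paths in conservative subgraphs, and deciding membership in the partial-solution relation) is polynomial-time.
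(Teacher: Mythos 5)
Your proof is correct and follows the paper's own route: one combines the inductive correctness of \textsc{PartSol} (Lemma~\ref{lem:partsol-alg}, applied in the subtree-order $i=1,\dots,r$ used by Algorithm~\ref{alg:PermDisj}) with Lemma~\ref{lem:partsol-to-permdisj}, which identifies minimum-weight permissively disjoint $(\{a_1,a_2\},\{b_1,b_2\})$-path pairs with minimum-weight partial solutions for $(b_h,b_{3-h},\T\setminus\{T\})$, $h\in[2]$, and then observes that for constant~$c$ the table has $O(n^2 2^{c-1})$ states each processed in polynomial time.

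One technical slip worth fixing: the auxiliary graphs $G\langle\cdot,\cdot,\cdot,\cdot\rangle$ are \emph{undirected} and still contain negative-weight edges (of the surviving trees in $\T$), so Bellman--Ford does not apply to them. Replacing each negative undirected edge by a pair of opposite arcs creates a negative $2$-cycle, exactly the obstruction described in the introduction. A shortest path between two fixed vertices in an undirected conservative graph is computed by a different subroutine, via the reduction to minimum-weight $T$-joins; the paper invokes this with running time $O(n^3)$, citing~\cite[Section~29.2]{schrijver-book}. Your conclusion that the per-state work is polynomial is unaffected, but the parenthetical ``e.g.\ by Bellman--Ford'' should be replaced by the correct subroutine.
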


Using Corollary~\ref{cor:perm-disjoint-paths}, we can now give the proof of Theorem~\ref{thm:DISP-main}.
Algorithm~\ref{alg:DISP} contains a pseudo-code description of our algorithm solving \DISP{}. 

\begin{varalgorithm}{STDP}
\caption{Solving \DISP{} with conservative weights. 
}
\label{alg:DISP}
\begin{algorithmic}[1]
\Require{An instance $(G,w,s,t)$ where $w$ is conservative on~$G$.}
\Ensure{A solution for $(G,w,s,t)$ with minimum weight, or $\varnothing$ if no solution exist.}
\State Let $\mathcal{S}=\emptyset$.
\ForAll{$E' \subseteq E$ such that $E \setminus E'\subseteq E^-$ and $|E \setminus E'|\leq 1$} 	\Comment{Separable solutions.} \label{line:DISP-part1-start}
	\State Create the instance~$(G[E'],w_{|E'},s,t)$. \label{line:DISP-create-instE'}
	\State Let $\T^{\not\ni s,t}=\{T: T$ is a maximal tree in~$G[E' \cap E^-]$ with $s,t \notin V(T)\}$.
	\ForAll{$Z \subseteq V$ such that $|Z \cap V(T)|=1$ for each $T \in \T^{\not\ni s,t}$}  \label{line:DISP-choose-Z}
		\State Create the network~$N_Z$ from instance~$(G[E'],w_{|E'},s,t)$. 	\label{line:DISP-create-netN_Z}	\Comment{Use Def.~\ref{def:N_z}.}
		\If{$\exists$ a flow~$f$ of value~$2$ in $N_Z$}
			\State Compute a minimum-cost flow $f$ of value 2 in $N_Z$.			\label{line:DISP-find-flow2}
			\State Construct a solution~$(S_1,S_2)$ from~$f$ using Lemma~\ref{lem:separablesol-noncontact}. \label{line:DISP-separable-output}
			\State $\mathcal{S} \leftarrow (S_1,S_2)$.							\label{line:DISP-part1-end}
		\EndIf 
	\EndFor
\EndFor
\ForAll{$T \in \T$} 						\Comment{Non-separable solutions.}   \label{line:DISP-choose-tree}
	\ForAll{partitions $(\T_s,\T_0,\T_t)$ of $\T \setminus \{T\}$} \label{line:DISP-choose-part}
		\If{$\T_s \neq \emptyset$}									\label{line:DISP-Ts-notempty}
			\ForAll{$a_1,a_2 \in V(T)$ with $a_1 \neq a_2$}				\label{line:DISP-choose-a1a2}
				\State Create sub-instances $I_s^1$ and~$I_s^2$.		\Comment{Use Def.~\ref{def:sub-instances-Ts}.} \label{line:DISP-def-Ts-subinstances}
				\State Compute $(Q^{\mynearrow}_1,Q^{\mynearrow}_2)= \textup{STDP}(I_s^1)$.
				\State Compute $(Q^{\mysearrow}_1,Q^{\mysearrow}_2)= \textup{STDP}(I_s^2)$.
				\If{$(Q^{\mynearrow}_1,Q^{\mynearrow}_2)\neq \varnothing$ and
					 $(Q^{\mysearrow}_1,Q^{\mysearrow}_2) \neq \varnothing$} 
					\State Create a solution $(S_1,S_2)$ 
					from $(Q^{\mynearrow}_1,Q^{\mynearrow}_2)$ 
					and $(Q^{\mysearrow}_1,Q^{\mysearrow}_2)$ using Lemma~\ref{lem:subinstances-correct}.\label{line:DISP-Ts-recurse}%
					\State $\mathcal{S} \leftarrow (S_1,S_2)$.\label{line:DISP-Ts-output}%
				\EndIf
			\EndFor
		\ElsIf{$\T_t \neq \emptyset$}								\label{line:DISP-Tt-notempty}
			\ForAll{$b_1,b_2 \in V(T)$ with $b_1 \neq b_2$}			\label{line:DISP-choose-b1b2}
				\State Create sub-instances $I_t^1$ and~$I_t^2$.		\Comment{Use Def.~\ref{def:sub-instances-Tt}.} \label{line:DISP-def-Tt-subinstances}
				\State Compute $(Q^{\mynearrow}_1,Q^{\mynearrow}_2)= \textup{STDP}(I_t^1)$.
				\State Compute $(Q^{\mysearrow}_1,Q^{\mysearrow}_2)= \textup{STDP}(I_t^2)$.
				\If{$(Q^{\mynearrow}_1,Q^{\mynearrow}_2)\neq \varnothing$ and
					 $(Q^{\mysearrow}_1,Q^{\mysearrow}_2) \neq \varnothing$} 
					\State Create a solution $(S_1,S_2)$ 
					from $(Q^{\mynearrow}_1,Q^{\mynearrow}_2)$ 
					and $(Q^{\mysearrow}_1,Q^{\mysearrow}_2)$ using Lemma~\ref{lem:subinstances-correct}.\label{line:DISP-Tt-recurse}%
					\State $\mathcal{S} \leftarrow (S_1,S_2)$.	\label{line:DISP-Tt-output}
				\EndIf
			\EndFor
		\Else															\Comment{$\T_s=\T_t=\emptyset$.}
			\ForAll{$a_1,a_2,b_1,b_2 \in V(T)$ that constitute a reasonable guess}					\label{line:DISP-choose-4vert}
				\If{$\exists$ a flow~$f$ of value~$4$ in $N_{(a_1,b_1,a_2,b_2)}$}			\Comment{Use Def.~\ref{def:N_aabb}.} \label{line:DISP-testflow}
					\If{$\exists$ two permissively disjoint $(\{a_1,a_2\},\{b_1,b_2\})$-paths in~$G$} 			\label{line:DISP-testpaths}
						\State Compute a minimum-cost flow $f$ of value 4 in $N_{(a_1,b_1,a_2,b_2)}$.			\label{line:DISP-find-flow4}
						\State Compute permissively disjoint $(\{a_1,a_2\},\{b_1,b_2\})$-paths $Q_1$ and $Q_2$. \\
													\Comment{Use Algorithm~\ref{alg:PermDisj} in Section~\ref{sec:partsol}} \label{line:DISP-use-PD}
						\State Construct a solution~$(S_1,S_2)$ from~$f$, $Q_1$, and $Q_2$ using Lemma~\ref{lem:type2bsol}. \label{line:DISP-nonseparable-construct}
						\State $\mathcal{S} \leftarrow (S_1,S_2)$.		\label{line:DISP-nonseparable-output}
					\EndIf
				\EndIf
			\EndFor			
		\EndIf
	\EndFor
\EndFor 
\If{$\mathcal{S}=\emptyset$} {\bf return} $\varnothing$.
\Else{ Let $S^\star$ be the cheapest pair among those in $\mathcal{S}$, and {\bf return} $S^\star$.} 
\EndIf
\end{algorithmic}
\end{varalgorithm}

\bigskip
\noindent
{\bf Proof of Theorem~\ref{thm:DISP-main}.}
We show that given an instance $(G,w,s,t)$ of \DISP{} where $w$ is conservative on~$G$, 
Algorithm~\ref{alg:DISP} computes a solution for~$(G,w,s,t)$ of minimum weight whenever a solution exists. 
We apply induction based on the number of negative trees in~$\T$.
Suppose that $(P_1,P_2)$ is an optimal solution for~$(G,w,s,t)$ with total weight~$w^\star$. 

First, if $(P_1,P_2)$ is separable, then there exists some $E' \subseteq E$ with $|E \setminus E'| \leq 1$ such that 
the instance $(G[E'],w_{|E'},s,t)$ has a strongly separable solution with weight at most~$w^\star$:
if $(P_1,P_2)$ itself is strongly separable, then this holds for $E'=E$, 
and if $(P_1,P_2)$ is separable but not strongly separable, then by Lemma~\ref{lem:separablesol-contact},
it holds for some $E'=E \setminus \{e\}$, $e \in E^-$.
In either case, Algorithm~\ref{alg:DISP} will create at least one instance~$(G[E'],w_{|E'},s,t)$  on Line~\ref{line:DISP-create-instE'}
that admits a strongly separable solution of weight~$w^\star$.
For this particular instance, Lemma~\ref{lem:separablesol-noncontact} guarantees that for some choice of~$Z$, 
Algorithm~\ref{alg:DISP} will find on Line~\ref{line:DISP-find-flow2} a flow of value~2 and cost at most~$w^\star$ in~$N_Z$, 
from which it obtains a solution~$(S_1,S_2)$ of weight at most~$w^\star$ for~$(G[E'],w_{|E'},s,t)$ using again Lemma~\ref{lem:separablesol-noncontact}.
Note that since $G[E']$ is a subgraph of~$G$, the obtained path pair $(S_1,S_2)$, of weight at most~$w^\star$, is also a solution for $(G,w,s,t)$.

Second, if $(P_1,P_2)$ is not separable, then by Definition~\ref{def:separable} there exists a tree $T \in \T$ at which $P_1$ and~$P_2$ are in contact. 
By Lemma~\ref{lem:partitioned-comps}, there exists a $T$-valid partition of~$\T \setminus \{T\}$.
Let us consider the iteration of Algorithm~\ref{alg:DISP} when it chooses $T$ on Line~\ref{line:DISP-choose-tree}
and a valid $T$-partition~$(\T_s,\T_0,\T_t)$ on Line~\ref{line:DISP-choose-part}.
If $\T_s$ is not empty, then Algorithm~\ref{alg:DISP} guesses the first vertices on the solution paths, $a_1$ and~$a_2$, that are contained in~$T$.
Algorithm~\ref{alg:DISP} then uses recursion to solve the sub-instances~$I_s^1$ and~$I_s^2$, and applies
Lemma~\ref{lem:subinstances-correct} to create a solution for $(G,w,s,t)$ using the obtained solutions for~$I_s^1$ and~$I_s^2$.
To see that the weighted graphs underlying $I_s^1$ and $I_s^2$ are conservative, note that the addition of vertex~$a^\star$ does not create a negative-weight cycle: any cycle~$C$ through~$a^\star$ necessarily includes the edges~$a_1 a^\star$ and~$a_2 a^\star$, plus a path between $a_1$ and $a_2$ whose weight is at least $w(T[a_1,a_2])$ by statement~(1) of Lemma~\ref{lem:T-min-pathlength};
thus we have
$w(C) \geq 2w_{a^\star}+w(T[a_1,a_2])=0$.
Observe that the number of negative trees in~$I_s^1$ is $|\T_s|<|\T|$ because $T \notin \T_s$,
and the number of negative trees in~$I_s^2$ is $|\T \setminus \T_s|<|\T|$ because $\T_s \neq \emptyset$.
Hence, due to our inductive hypothesis we know that these recursive calls for~$I_s^1$ and for~$I_s^2$ return minimum-weight solutions for them. 
By Lemma~\ref{lem:subinstances-correct}, this implies that the solution $(S_1,S_2)$ for~$(G,w,s,t)$ 
obtained on Line~\ref{line:DISP-Ts-recurse} has weight at most~$w(P_1)+w(P_2)$.

Similarly, if $\T_t$ is not empty, then Algorithm~\ref{alg:DISP} guesses the last vertices on the solution paths, $b_1$ and~$b_2$, 
that are contained in~$T$.
Algorithm~\ref{alg:DISP} then uses recursion to solve the sub-instances~$I_t^1$ and~$I_t^2$, and applies
Lemma~\ref{lem:subinstances-correct} to create a solution for $(G,w,s,t)$ using the obtained solutions for~$I_t^1$ and~$I_t^2$.
Again, the weighted graphs in~$I_t^1$ and~$I_t^2$ are conservative due to the value of $w_{b^\star}$ and  statement~(1) of Lemma~\ref{lem:T-min-pathlength}.
Observe that the number of negative trees in~$I_t^1$ is $|\T \setminus \T_t|<|\T|$ because $\T_t \neq \emptyset$,
and the number of negative trees in~$I_t^2$ is $|\T_t|<|\T|$ because $T \notin \T_t$.
Hence, due to our inductive hypothesis we know that these recursive calls for $I_t^1$ and for~$I_t^2$ return minimum-weight solutions for them. 
By Lemma~\ref{lem:subinstances-correct}, this implies that the solution $(S_1,S_2)$ for~$(G,w,s,t)$ 
obtained on Line~\ref{line:DISP-Tt-recurse} has weight at most~$w(P_1)+w(P_2)$.
 
If $\T_s=\T_t=\emptyset$, then Algorithm~\ref{alg:DISP} guesses the first and the last vertices on the solution paths contained in~$T$;
consider the iteration when the guessed vertices, $a_i$ and~$b_i$, are the first and last vertices of~$P_i$ contained in~$T$, for both $i \in [2]$.
Note that since $(\emptyset,\T \setminus \{T\},\emptyset)$ is a valid $T$-partition, we know that the paths~$P[s,a_i]$ and~$P[b_i,t]$ for $i \in[2]$ 
contain no edges of~$E^-$.
Hence, these four paths directly yield a flow of value~4 in the network~$N_{(a_1,b_1,a_2,b_2)}$. 
Also, the paths $P_1[a_1,b_1]$ and $P_2[a_2,b_2]$ are two permissively disjoint  $(\{a_1,a_2\},\{b_1,b_2\})$-paths in~$G$.
Thus Algorithm~\ref{alg:DISP} satisfies the conditions on Lines~\ref{line:DISP-testflow} and~\ref{line:DISP-testpaths}, 
and proceeds with computing a minimum-cost flow of value~4 in~$N_{(a_1,b_1,a_2,b_2)}$, 
as well as two permissively disjoint $(\{a_1,a_2\},\{b_1,b_2\})$-paths of minimum weight in~$G$ using Algorithm~\ref{alg:PermDisj}.
Note that by Lemma~\ref{lem:type2bsol} the returned flow~$f$ will have cost at most~$\sum_{i=1}^2 w(P_i[s,a_i])+w(P_i[b_i,t])$, 
and by Corollary~\ref{cor:perm-disjoint-paths}, Algorithm~\ref{alg:PermDisj} returns two 
permissively disjoint $(\{a_1,a_2\},\{b_1,b_2\})$-paths of weight at most~$\sum_{i=1}^2 w(P_i[a_i,b_i])$. 
Finally, the algorithm applies Lemma~\ref{lem:type2bsol} to construct a solution of weight at most 
\[\sum_{i=1}^2 w(P_i[s,a_i])+w(P_i[b_i,t])+w(P_i[a_i,b_i]) =\sum_{i=1}^2 w(P_i),\] as required.

\smallskip

It remains to analyze the running time of Algorithm~\ref{alg:DISP}. Let $n=|V|$ 
and recall that the number of trees spanned by all negative edges in~$G$ is~$c=|\T|$. 
We denote by $T(n,c)$ the running time of Algorithm~\ref{alg:DISP} as a function of these two integers.

There are $|E^-|+1\leq n$ possibilities for choosing~$E^-$ on Line~\ref{line:DISP-part1-start}, 
and at most $\binom{n}{c+1}$ possibilities for choosing~$Z$ on Line~\ref{line:DISP-choose-Z}, because 
removing an edge of~$E^-$ from~$G$ may increase parameter~$c$, that is,
the number of connected components of the subgraph spanned by all negative-weight edges. 
Creating the network~$N_Z$ and computing a minimum-cost flow of value~$2$ in it takes $O(n^3)$ time by using the 
Successive Shortest Path algorithm~\cite{Jewell62,Iri60,BG60}, 
and from that it takes linear time to construct a solution for~$(G,w,s,t)$.
Thus, the computations performed on Lines~\ref{line:DISP-part1-start}--\ref{line:DISP-part1-end} take $O(n^{c+5})$ time.

There are $c$ possibilities to choose~$T$ on Line~\ref{line:DISP-choose-tree}, and 
there are $3^{c-1}$ possibilities to choose the partition~$(\T_s,\T_0,\T_t)$ of~$\T  \setminus \{T\}$ 
on Line~\ref{line:DISP-choose-part}.
The iteration on Line~\ref{line:DISP-choose-a1a2} yields a factor of less than~$n^2$. 
Both recursive calls take time at most $T(n,c-1)$, since
the the number of negative trees decreases in both $I_s^1$ and $I_s^2$ (as we have already shown in this proof).
Applying Lemma~\ref{lem:subinstances-correct} (which in turn relies on Lemma~\ref{lem:uncrossing}) takes linear time.
Hence, the computations on Lines~\ref{line:DISP-choose-a1a2}--\ref{line:DISP-Ts-output} take time $O(n^2) T(n,c-1)$.
Similarly, Lines~\ref{line:DISP-choose-b1b2}--\ref{line:DISP-Tt-output} also take time $O(n^2) T(n,c-1)$.

There are $O(n^4)$ possibilities to choose the vertices~$a_1,b_1,a_2$, and $b_2$ on Line~\ref{line:DISP-choose-4vert}. 
The minimum-cost flow computation on Line~\ref{line:DISP-find-flow4} takes $O(n^3)$ time, but the bottleneck here is the
running time of Algorithm~\ref{alg:PermDisj}, called on Line~\ref{line:DISP-use-PD}.
Algorithm~\ref{alg:PermDisj} performs a call to Algorithm~\ref{alg:PartSol} 
for at most~$2^{c-1} \binom{n}{2}$ triples~$(u,v,\tau)$.
Algorithm~\ref{alg:PartSol} in turn iterates over all possible values for vertices~$u'$ and~$v'$ 
as well as over all values of~$\tau' \subseteq \tau$, and for each triple~$(u',v',\tau')$ 
computes a shortest $(u',v)$-path in the auxiliary graph $G \langle T[v',u],u',v,\tau \setminus \tau' \rangle$.\footnote{
In fact, there are two more shortest-path computations performed by the algorithm 
during Lines~\ref{line:PS-case1-start}--\ref{line:PS-comp1} of Algorithm~\ref{alg:PartSol}, but the running time of Algorithm~\ref{alg:PartSol} is dominated by the computations on \hbox{Lines~\ref{line:PS-case-s2tart}--\ref{line:PS-case2-end}}.} 
Note that $G \langle T[v',u],u',v \rangle$ implicitly depends also on the path~$X$ (via the definition of the trees $T_i$, $i \in [r]$). Since there are $O(n^2)$ possibilities for choosing~$X$, then $n^4$ possibilities to choose $u,v,u'$ and~$v'$, and then $2^{c-1}$ possibilities to choose the subset of negative trees (other than~$T$) contained in the given auxiliary graph,
we obtain that the number of shortest-path computations we need to perform during 
Lines~\ref{line:DISP-choose-4vert}--\ref{line:DISP-nonseparable-output} of Algorithm~\ref{alg:DISP} is $O(2^{c-1} n^6)$ in total.

Let us now consider an implementation of Lines~\ref{line:DISP-choose-4vert}--\ref{line:DISP-nonseparable-output} of Algorithm~\ref{alg:DISP} where we compute all necessary shortest paths in all possible auxiliary graphs as a preprocessing step: since we can compute a shortest path between two given vertices in an undirected conservative graph in $O(n^3)$ time (see e.g.,~\cite[Section 29.2]{schrijver-book}), this takes $O(2^{c-1} n^9)$ time in total. Assuming that we already have the results of this preprocessing step, a single call to Algorithm~\ref{alg:PartSol} (assuming that all partial solutions called for on line~\ref{line:PS-callF} are available) takes $O(2^{c-1} n^3)$ time, because there are $O(2^{c-1} n^2)$ possibilities for choosing $u',v'$ and~$\tau'$ (recall that $T \notin \tau'$), and an iteration for fixed values of~$u',v'$ and~$\tau'$ takes linear time. Therefore, running Algorithm~\ref{alg:PermDisj} for fixed values of~$a_1,b_1,a_2$ and $b_2$ takes $2^{c-1} n^2 O( 2^{c-1} n^3)=O(2^{2c-2}n^5)$ time (as it calls Algorithm~\ref{alg:PartSol} at most $2^{c-1} n^2$ times), which yields a running time of $O(2^{2c-2} n^9)$ in total after the preprocessing phase. Hence, the overall running time of Lines~\ref{line:DISP-choose-4vert}--\ref{line:DISP-nonseparable-output} (including all necessary preprocessing steps)  is $O(2^{2c-2} n^9)$.

This yields the following recursive formula for the total running time of Algorithm~\ref{alg:DISP}:
\[T(n,c)=O(n^{c+5})+ c \dot 3^{c-1} \big( O(n^2) T(n,c-1) + O(2^{2c-2} n^9) \big)
\]
from which we can conclude $T(n,c)=O(n^{2c+7})$.
In particular, for $c=1$ we obtain a running time of $O(n^9)$.
\hfill$\qedsymbol$

\section{Conclusion}
\label{sec:conclusion}

We have presented a polynomial-time algorithm for solving the \DISP{} problem 
on undirected graphs~$G$ with conservative edge weights, assuming that the number of connected components in the subgraph~$G[E^-]$ 
spanned by all negative-weight edges is a fixed constant~$c$.
The running time of our algorithm is $O(n^{2c+9})$ on an $n$-vertex graph.
Is it possible to give a substantially faster algorithm for this problem? 
In particular, is it possible to give a fixed-parameter tractable algorithm for \DISP{} on undirected conservative graphs 
when parameterized by~$c$?

More generally, is it possible to find in polynomial time $k$ openly disjoint $(s,t)$-paths with minimum total weight 
for some fixed~$k \geq 3$ in undirected conservative graphs with constant values of~$c$?

\subsection*{Funding}
Ildik\'o Schlotter acknowledges the support of the Hungarian Academy of Sciences under its Momentum Programme (LP2021-2) and its J\'anos Bolyai Research Scholarship. 

\bibliographystyle{plainurl}
\bibliography{2paths}

\begin{thebibliography}{10}

\bibitem{AKKLST17}
Isolde Adler, Stavros~G. Kolliopoulos, Philipp~Klaus Krause, Daniel Lokshtanov,
  Saket Saurabh, and Dimitrios~M. Thilikos.
\newblock Irrelevant vertices for the {P}lanar {D}isjoint {P}aths problem.
\newblock {\em Journal of Combinatorial Theory, Series B}, 122:815--843, 2017.
\newblock \href {https://doi.org/10.1016/j.jctb.2016.10.001}
  {\path{doi:10.1016/j.jctb.2016.10.001}}.

\bibitem{BNWN22}
Aaron Bernstein, Danupon Nanongkai, and Christian Wulff-Nilsen.
\newblock Negative-weight single-source shortest paths in near-linear time.
\newblock In {\em FOCS 2022: Proceedings of the 63rd Annual {IEEE} Symposium on
  Foundations of Computer Science}, pages 600--611. IEEE Computer Society,
  2022.
\newblock \href {https://doi.org/10.1109/FOCS54457.2022.00063}
  {\path{doi:10.1109/FOCS54457.2022.00063}}.

\bibitem{BH2019}
Andreas Bj{\"o}rklund and Thore Husfeldt.
\newblock Shortest two disjoint paths in polynomial time.
\newblock {\em SIAM Journal on Computing}, 48(6):1698--1710, 2019.
\newblock \href {https://doi.org/10.1137/18M1223034}
  {\path{doi:10.1137/18M1223034}}.

\bibitem{BG60}
Robert~G. Busacker and Paul~J. Gowen.
\newblock A procedure for determining a family of miminum-cost network flow
  patterns.
\newblock Technical Report ORO-TP-15, Operations Research Office, Johns Hopkins
  University, 1960.

\bibitem{CMPP13}
Marek Cygan, D{\'{a}}niel Marx, Marcin Pilipczuk, and Michal Pilipczuk.
\newblock The planar directed $k$-vertex-disjoint paths problem is
  fixed-parameter tractable.
\newblock In {\em FOCS 2013: Proceedings of the 54th Annual {IEEE} Symposium on
  Foundations of Computer Science}, pages 197--206. {IEEE} Computer Society,
  2013.
\newblock \href {https://doi.org/10.1109/FOCS.2013.29}
  {\path{doi:10.1109/FOCS.2013.29}}.

\bibitem{VS2011}
{\'{E}}ric~Colin de~Verdi{\`{e}}re and Alexander Schrijver.
\newblock Shortest vertex-disjoint two-face paths in planar graphs.
\newblock {\em {ACM} Trans. Algorithms}, 7(2):19:1--19:12, 2011.
\newblock \href {https://doi.org/10.1145/1921659.1921665}
  {\path{doi:10.1145/1921659.1921665}}.

\bibitem{DSS1992}
Guoli Ding, A.~Schrijver, and P.~D. Seymour.
\newblock Disjoint paths in a planar graph -- a general theorem.
\newblock {\em SIAM Journal on Discrete Mathematics}, 5(1):112--116, 1992.
\newblock \href {https://doi.org/10.1137/0405009} {\path{doi:10.1137/0405009}}.

\bibitem{Iri60}
Masao Iri.
\newblock A new method for solving transportation-network problems.
\newblock {\em Journal of the Operations Research Society of Japan}, 3:27--87,
  1960.

\bibitem{Jewell62}
William~S. Jewell.
\newblock Optimal flow through networks.
\newblock {\em Operations Research}, 10(4):476--499, 1962.

\bibitem{Karp75}
Richard~M. Karp.
\newblock On the computational complexity of combinatorial problems.
\newblock {\em Networks}, 5(1):45--68, 1975.

\bibitem{KS2010}
Yusuke Kobayashi and Christian Sommer.
\newblock On shortest disjoint paths in planar graphs.
\newblock {\em Discrete Optimization}, 7(4):234--245, 2010.
\newblock \href {https://doi.org/10.1016/j.disopt.2010.05.002}
  {\path{doi:10.1016/j.disopt.2010.05.002}}.

\bibitem{LMPSZ20}
Daniel Lokshtanov, Pranabendu Misra, Micha\l{} Pilipczuk, Saket Saurabh, and
  Meirav Zehavi.
\newblock An exponential time parameterized algorithm for planar disjoint
  paths.
\newblock In {\em STOC 2020: Proceedings of the 52nd Annual ACM SIGACT
  Symposium on Theory of Computing}, page 1307–1316. Association for
  Computing Machinery, 2020.
\newblock \href {https://doi.org/10.1145/3357713.3384250}
  {\path{doi:10.1145/3357713.3384250}}.

\bibitem{Lynch1975}
James~F. Lynch.
\newblock The equivalence of theorem proving and the interconnection problem.
\newblock {\em {ACM SIGDA} Newsletter}, 5:31--36, 1975.
\newblock \href {https://doi.org/10.1145/1061425.1061430}
  {\path{doi:10.1145/1061425.1061430}}.

\bibitem{RS1995}
Neil Robertson and P.D. Seymour.
\newblock Graph minors. {XIII}. {T}he disjoint paths problem.
\newblock {\em Journal of Combinatorial Theory, Series B}, 63(1):65--110, 1995.
\newblock \href {https://doi.org/10.1006/jctb.1995.1006}
  {\path{doi:10.1006/jctb.1995.1006}}.

\bibitem{SS2023}
Ildikó Schlotter and András Sebő.
\newblock Odd paths, cycles and {$T$}-joins: Connections and algorithms, 2022.
\newblock \href {https://arxiv.org/abs/2211.12862} {\path{arXiv:2211.12862}}.

\bibitem{Schrijver1994}
Alexander Schrijver.
\newblock Finding $k$ disjoint paths in a directed planar graph.
\newblock {\em {SIAM} Journal on Computing}, 23(4):780--788, 1994.
\newblock \href {https://doi.org/10.1137/S0097539792224061}
  {\path{doi:10.1137/S0097539792224061}}.

\bibitem{schrijver-book}
Alexander Schrijver.
\newblock {\em Combinatorial Optimization: Polyhedra and Efficiency}.
\newblock Springer, Berlin, 2003.

\end{thebibliography}

\end{document}